\newcolumntype{R}[1]{S[table-format=#1]}
\newcommand{\vertseclabel}[3]{%
    \multirow{#1}{1.75em}{%
        \rotatebox{90}{%
            \parbox{5em}{%
                \centering #2 \\ (\cref{#3})%
            }%
        }%
    }%
}%
\crefname{algorithm}{Alg.\@}{Algs.\@}
\crefname{figure}{Fig.\@}{Figs.\@}
\crefname{table}{Tab.\@}{Tabs.\@}
\crefname{section}{Sec.\@}{Secs.\@}
\crefname{theorem}{Thm.\@}{Thms.\@}
\Crefname{theorem}{Theorem}{Theorems}
\crefname{lemma}{Lem.\@}{Lems.\@}
\Crefname{lemma}{Lemma}{Lemmas}
\crefname{definition}{Def.\@}{Defs.\@}
\Crefname{definition}{Definition}{Definitions}
\crefname{example}{Ex.\@}{Exs.\@}
\Crefname{example}{Example}{Examples}
\newcommand{\SpGEMM}{\text{SpGEMM}}
\newcommand{\SpGEMMs}{\text{SpGEMMs}}
\newcommand{\SpMV}{SpMV}
\newcommand{\cc}{computational cube}
\newcommand{\HG}{{\cal H}}
\newcommand{\U}{{\cal U}}
\newcommand{\V}{{\cal V}}
\newcommand{\N}{{\cal N}}
\newcommand{\A}{\mathbf{A}}
\newcommand{\B}{\mathbf{B}}
\newcommand{\C}{\mathbf{C}}
\newcommand{\M}{\mathbf{M}}
\newcommand{\comp}{\text{comp}}
\newcommand{\mem}{\text{mem}}
\newcommand{\nz}{\text{nz}}
\newcommand{\mm}{\text{m}}
\newcommand{\datafile}{}
\newcommand{\yaxistitle}{Communication Cost}
\newcommand{\plotwidth}{}
\newcommand{\plotheight}{}
\newcommand{\PF}{\mathcal{F}}
\newcommand{\PA}{\mathcal{A}}
\newcommand{\PB}{\mathcal{B}}
\newcommand{\PC}{\mathcal{C}}
\newcommand{\PR}{\mathcal{R}}
\newcommand{\PL}{\mathcal{L}}
\newcommand{\PU}{\mathcal{U}}
\newcommand{\eg}{\emph{e.g.}}
\newif\ifata
\newif\ifgeomrow
\newif\ifgeomouter
\newif\ifnolegend
\begin{document}

\tikzstyle{nz} = [draw=none,scale=2]
\newcommand{\nzsymm}{\ast}
\newcommand{\nzsym}{$\nzsymm$}
\tikzstyle{matstyle} = [rotate=-90]
\newcommand{\mat}[2]{\draw[black,shift={(-.5,-.5)}] (0,0) grid (#1,#2);}

\newcommand{\matdimI}{3}
\newcommand{\matdimK}{4}
\newcommand{\matdimJ}{2}

\newcommand{\nnza}{5}
\newcommand{\nnzb}{5}

\newcommand{\Anz}{
\begin{scope}[matstyle]
	\mat{\matdimI}{\matdimK}
	\node[nz] at (0,0) {\nzsym};
	\node[nz] at (0,2) {\nzsym};
	\node[nz] at (1,0) {\nzsym};
	\node[nz] at (1,3) {\nzsym};
	\node[nz] at (2,1) {\nzsym};
\end{scope}
}
\newcommand{\Bnz}{
\begin{scope}[matstyle]
	\mat{\matdimK}{\matdimJ}
	\node[nz] at (0,1) {\nzsym};
	\node[nz] at (1,0) {\nzsym};
	\node[nz] at (2,0) {\nzsym};
	\node[nz] at (2,1) {\nzsym};
	\node[nz] at (3,1) {\nzsym};
\end{scope}
}
\newcommand{\Cnz}{
\begin{scope}[matstyle]
	\mat{\matdimI}{\matdimJ}
	\node[nz] at (0,0) {\nzsym};
	\node[nz] at (0,1) {\nzsym};
	\node[nz] at (1,1) {\nzsym};
	\node[nz] at (2,0) {\nzsym};
\end{scope}
}

\tikzset{
	CCorient/.style={x={(-0.5cm,-0.4cm)}, y={(1cm,0cm)}, z={(0cm,1cm)},every node/.append style={transform shape}},
	Aface/.style={canvas is yz plane at x=.5},
	Bface/.style={canvas is yx plane at z=.5,yscale=-1,rotate=90},
	Cface/.style={canvas is zx plane at y=(\matdimK-.5),rotate=-90}
}

\newcommand{\compcube}{
	\begin{scope}[Aface,red]
		\Anz
		\node[shift={(-.5,.5)}] at (\matdimK/2,-\matdimI/2)  {\LARGE $S_\A$};
	\end{scope}
	\begin{scope}[Bface,blue]
		\Bnz
		\node[shift={(-.5,.5)}] at (\matdimJ/2,-\matdimK/2)  {\LARGE $S_\B$};
	\end{scope}
	\begin{scope}[Cface,violet]
		\Cnz
		\node[shift={(-.5,.5)}] at (\matdimJ/2,-\matdimI/2)  {\LARGE $S_\C$};
	\end{scope}
}

\markboth{Ballard, Druinsky, Knight, Schwartz}{Hypergraph Partitioning for Sparse Matrix-Matrix Multiplication}

\title{Hypergraph Partitioning for Sparse Matrix-Matrix Multiplication}
\author{GREY BALLARD
\affil{Sandia National Laboratories, {\tt gmballa@sandia.gov}}
ALEX DRUINSKY
\affil{Lawrence Berkeley National Laboratory, {\tt adruinsky@lbl.gov}}
NICHOLAS KNIGHT
\affil{New York University, {\tt nknight@nyu.edu}}
ODED SCHWARTZ
\affil{Hebrew University, {\tt odedsc@cs.huji.ac.il}}}

\begin{abstract}
We propose a fine-grained hypergraph model for sparse matrix-matrix multiplication (SpGEMM), a key computational kernel in scientific computing and data analysis whose performance is often communication bound. This model correctly describes both the interprocessor communication volume along a critical path in a parallel computation and also the volume of data moving through the memory hierarchy in a sequential computation. We show that identifying a communication-optimal algorithm for particular input matrices is equivalent to solving a hypergraph partitioning problem. Our approach is sparsity dependent, meaning that we seek the best algorithm for the given input matrices.

In addition to our (3D) fine-grained model, we also propose coarse-grained 1D and 2D models that correspond to simpler SpGEMM algorithms. We explore the relations between our models theoretically, and we study their performance experimentally in the context of three applications that use SpGEMM as a key computation. For each application, we find that at least one coarse-grained model is as communication efficient as the fine-grained model. We also observe that different applications have affinities for different algorithms.

Our results demonstrate that hypergraphs are an accurate model for reasoning about the communication costs of SpGEMM as well as a  practical tool for exploring the SpGEMM algorithm design space.
\end{abstract}

%
%
%
%





\maketitle

\section{Introduction}
\label{sec:intro}

Sparse matrix-matrix multiplication (\SpGEMM) is a fundamental computation in scientific computing and data analysis.
It is a key component in applications ranging from linear solvers~\cite{BV11,YL11} and graph algorithms~\cite{RV89,ABG15} to Kohn-Sham theory in computational chemistry~\cite{BVWH14}.
Unlike dense matrix-matrix multiplication, \SpGEMM{} is an irregular computation, and its performance is typically communication bound.
Previous research on \SpGEMM{} algorithms has focused on communication costs, both interprocessor communication and data movement within the memory hierarchy~\cite{BG12,BBDG+13,AB+15-TR}.

We focus on communication costs and argue in this paper that the amount of communication required for sequential and parallel \SpGEMM{} algorithms depends strongly on the matrices' nonzero structures.
That is, algorithms that are communication efficient for \SpGEMMs{} within one application are not necessarily the most efficient in the context of another application.
The simplest \SpGEMM{} algorithms to design, analyze, and implement are based on dense matrix multiplication algorithms: these algorithms are parametrized only by the matrix dimensions and differ from dense matrix-matrix multiplication in that they use sparse data structures and avoid operating on and communicating zeros.
We refer to these algorithms as ``sparsity independent''~\cite[Def.~2.5]{BBDG+13}; two examples are the row-wise algorithm with block partitioning~\cite{FPS14} and Sparse SUMMA (SpSUMMA)~\cite{BG12}.

One of the main goals in this paper is to explore sparsity-dependent \SpGEMM{} algorithms and determine how communication efficient an algorithm can be if it is able to inspect the input matrices' nonzero structures and determine the data distributions and parallelization of arithmetic operations accordingly.
Currently this approach is only theoretical in general: in practice, the cost of nonzero structure inspection can exceed that of even an inefficient \SpGEMM.
However, practical, application-specific algorithmic insight can be gained from the results of this paper and our techniques can be used for other applications not considered here.

We use hypergraphs to model arithmetic operations and dependencies and hypergraph partitioning to model parallelizing and scheduling operations.
Hypergraphs have previously been used to model sparse matrix-vector multiplication (SpMV) (see, \eg,~\cite{VB05,CAU10}), and hypergraph partitioning software has been developed to help design algorithms for SpMV and other computations~\cite{PaToH,BDFHH07}.
They have also been used in the context of \SpGEMM{}~\cite{AA14,BDKS15}; in this paper we present a hypergraph model that generalizes previous models for SpMV and \SpGEMM.
We discuss these and other related works in \cref{sec:related}.

We present a general, fine-grained hypergraph model for \SpGEMM{} in \cref{sec:model} and use it to prove sparsity-dependent communication lower bounds for both parallel (\cref{sec:parallelLB}) and sequential (\cref{sec:sequentialLB}) algorithms.
We reduce identifying a communication-optimal algorithm for given input matrices to solving a hypergraph partitioning problem.
We also compare with previous lower bounds, showing that ours are more general and can be tighter.

In addition to our general \SpGEMM{} model, we also consider several restricted classes of algorithms.
In \cref{sec:simplify-HG} we define a framework for coarsening the fine-grained model to obtain simpler hypergraphs that maintain correct modeling of communication costs.
In particular, we consider the classification of 1D, 2D, and 3D \SpGEMM{} algorithms~\cite{BBDG+13} and examine the relationships among the seven subclasses of algorithms that naturally arise from this classification.
The class of 1D algorithms includes row-wise and outer-product algorithms, and the class of 2D algorithms includes SpSUMMA.
Finding communication-optimal algorithms within each subclass also corresponds to solving simpler hypergraph partitioning problems.

Our experimental results, presented in \cref{sec:expt}, use existing hypergraph partitioning software to compare the communication costs of  algorithms from each subclass for a variety of \SpGEMMs{}, with input matrices coming from three particular applications.
We use representative examples from each application to explore the algorithmic space of \SpGEMM{}; our results can guide algorithmic design choices that are application specific.
In particular, our empirical results lead to different conclusions for each of the three applications we consider.
For example, we find that certain 1D algorithms, although simple, are sufficient in the context of an algebraic multigrid application~\cite{BV11} but that 2D and 3D algorithms are much more communication efficient than 1D algorithms in the context of Markov clustering  applied to social-network matrices~\cite{SP09}.
We discuss overall conclusions in more detail in \cref{sec:conclusion}.

As is the case for other irregular computations, there exists a tradeoff between simplicity and efficiency for \SpGEMM{} algorithms and implementations.
Using hypergraphs to model \SpGEMM{} communication costs, we provide a mechanism for quantifying the relative efficiency of algorithms of varying complexity.
We illuminate this tradeoff in the context of several applications, providing practical insight into the design of algorithms and software.


\section{Related Work}
\label{sec:related}

We base our communication models and analysis on classical results for dense matrix multiplication.
Hong and Kung~\citeyear{HK81} used a two-level sequential memory model and established communication lower bounds for dense matrix multiplication and other computations using a graph-theoretic approach.
Irony, Toledo, and Tiskin~\citeyear{ITT04} used a distributed-memory parallel model and established communication lower bounds for dense matrix multiplication using a geometric approach.
See Ballard et al.'s survey~\citeyear{BCDH+14} for more details on these models and results.

Ballard et al.~\citeyear{BDHS11,BDHLS12}
proved communication lower bounds, in both sequential and parallel models, for a general set of matrix computations that includes \SpGEMM{}.
These lower bounds are parameterized by the number of  multiplications involved in the multiplication of the specific input matrices, and they are not tight in general.
There has been other theoretical work in deriving bounds on the communication costs of \SpGEMM{} in the sequential two-level memory model.
Pagh and St\"{o}ckel~\citeyear{PS14} proved matching worst-case upper and lower bounds based on the number of nonzeros in the matrices.
Greiner~\citeyear[Ch.~6]{Greiner12} used a different approach to establish other worst-case lower and upper bounds that apply to a restricted set of matrices and algorithms.
We compare our lower bounds with previous work in more detail in \cref{sec:sequentialLB,sec:parallelLB}.

On the practical side of sequential \SpGEMM{} algorithms, Gustavson~\citeyear{Gustavson78} proposed the first algorithm that exploits sparsity in the inputs and outputs, and Davis~\citeyear{Davis06} implemented a variant of that algorithm in the general-purpose CSPARSE library.
Neither the algorithm nor implementation are designed to specifically reduce communication costs specifically, but the approach is typically efficient in terms of memory footprint and operation count.
Bulu\c{c} and Gilbert~\citeyear{BG08a} proposed an alternative sparse data structure and sequential algorithm that are more memory and operation-count efficient than Gustavson's algorithm for ``hypersparse'' matrices, which can arise in the context of parallel distribution of sparse matrices.

There exists more variation in parallel \SpGEMM{} algorithms.
We use the 1D/2D/3D classification of \SpGEMM{} algorithms defined in~\cite{BBDG+13}.
They surveyed many of the previous algorithms in the literature and theoretically analyzed their communication costs for a particular class of input matrices (Erd\H{o}s-R\'{e}nyi random matrices), comparing with expectation-based communication lower bounds.
Bulu\c{c} and Gilbert~\citeyear{BG12} proposed a general-purpose 2D algorithm called Sparse SUMMA, which uses random permutations to achieve load balance, and they analyzed its communication cost for general inputs.
Ballard, Siefert, and Hu~\citeyear{BSH15-TR} considered the communication costs of 1D algorithms in an algebraic multigrid application.

As a tool for minimizing communication, 
hypergraph partitioning has gained popularity in the parallel computing community in large part due to its application \SpMV{}.
While there is a vast literature on hypergraph partitioning for SpMV, we highlight here only a small sample from which we borrow notation.
\c{C}ataly\"{u}rek and Aykanat~\citeyear{CA99} introduced column-net and row-net models for 1D SpMV algorithms, and in a subsequent paper~\citeyear{CA01a} presented the fine-grain model for 2D SpMV.
\c{C}ataly\"{u}rek, Aykanat, and U\c{c}ar~\citeyear{CAU10} summarize and compare models for 1D and 2D SpMV algorithms (see also~\cite{VB05}).
Akbudak, Kayaaslan, and Aykanat~\citeyear{AKA13} also used hypergraphs to improve cache locality of SpMV on a single processor.

Hypergraph partitioning has been used in the context of \SpGEMM{} as well.
Krishnamoorthy et al.~\citeyear{KCNRS06} introduced a hypergraph model for a general class of computations including \SpGEMM{}, and described a heuristic for scheduling out-of-core algorithms with the goal of minimizing disk I/O.
Akbudak and Aykanat~\citeyear{AA14} introduced a hypergraph model for parallel outer-product algorithms that represents both the computation and the data involved in \SpGEMM{} (see \cref{ex:outer}); they also implemented and benchmarked the performance based on the partitions.
Our earlier conference paper~\citeyear{BDKS15} introduced a slightly simpler version of the \SpGEMM{} hypergraph model given by \cref{def:hp-spgemm}; that paper also presented initial results concerning the application of algebraic multigrid (see \cref{sec:AMG}).

\section{Hypergraph Model for \SpGEMM{}}
\label{sec:model}

\subsection{Notation and Assumptions}
\label{sec:notation}
The notation introduced in this section will be used throughout the rest of the paper.

Let $\mathbb{N} = \{0,1,\ldots\}$; for any $n \in \mathbb{N}$, $[n] = \{1,\ldots,n\}$, with $[0]=\emptyset$.
For any $n \in \mathbb{N}$ and set $X$, an $n$-way partition of $X$, denoted $\{X_1,\ldots,X_n\}$, is a function $[n] \ni i \mapsto X_i \subseteq X$ such that $\bigcup_{i \in [n]}X_i = X$ and $\bigcup_{i \ne j \in [n]} X_i \cap X_j = \emptyset$.
Another partition of $X$, $\{Y_1,\ldots,Y_m\}$, refines $\{X_1,\ldots,X_n\}$ if for every $j \in [m]$ there exists an $i \in [n]$ such that $Y_j \subseteq X_i$.

Let $\A$ and $\B$ be $I$-by-$K$ and $K$-by-$J$ matrices with entries from a set $X$, which contains an element $0$ and is closed under two binary operations called addition (commutative and associative with identity element 0) and multiplication (with absorbing element 0).
For example, $X$ could be a semiring.

Matrix multiplication is the function $(\A,\B) \mapsto \C=\A\cdot\B$, where $\C$ is an $I$-by-$J$ matrix over $X$ defined entrywise by $c_{ij}=\sum_{k \in [K]} a_{ik} b_{kj}$.
We let $S_\A \subseteq [I]\times[K]$, $S_\B \subseteq [K]\times[J]$, and $S_\C \subseteq [I]\times[J]$ denote the nonzero structures of $\A$, $\B$, and $\C$. 
In this work, we study \emph{\SpGEMM{} algorithms}, the class of algorithms that evaluate and sum all \emph{nontrivial} multiplications $a_{ik}b_{kj}$, where both $a_{ik}\ne 0$ and $b_{kj} \ne 0$, and thus depend only on $S_\A$ and $S_\B$.
We do not consider algorithms that exploit additional structure on $X$ or more general relations on the entries of $\A$ and $\B$: in particular, we ignore numerical cancellation, so $S_\A$ and $S_\B$ induce $S_\C$. 
\Cref{fig:notation} illustrates our notation for a particular \SpGEMM{} instance; under our assumptions we do not distinguish the nonzero values of the matrices.
We maintain the nonzero structure of the example in each of \cref{fig:notation,fig:compcube,fig:hypergraph,fig:incidence}.

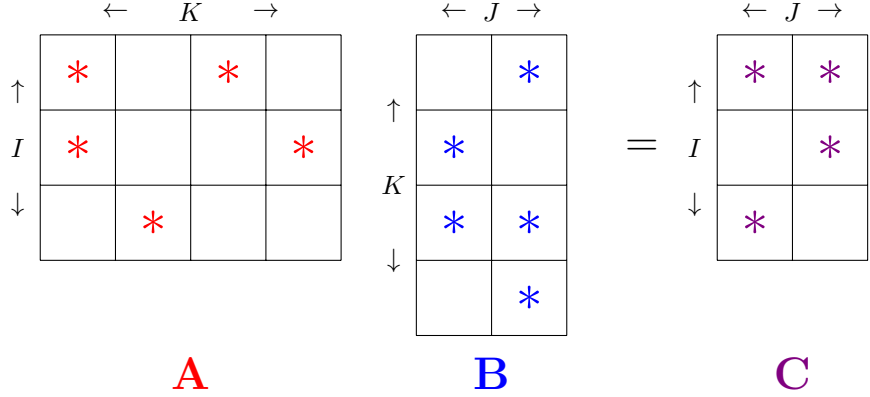
\begin{figure}
\begin{center}
\begin{tikzpicture}

\pgfmathsetmacro{\matlabely}{-max(\matdimI,\matdimK,\matdimJ)}
\newcommand{\dimlabeloffset}{.3}

\begin{scope}
	\begin{scope}[red]
		\Anz
		\node[draw=none,shift={(-.5,0)}] at (\matdimK/2,\matlabely) {\LARGE $\A$};
	\end{scope}
	\begin{scope}[shift={(-.5,.5)}]
		\node[draw=none] at (-\dimlabeloffset,-\matdimI/2) {$I$};
		\node[draw=none] at (-\dimlabeloffset,-\matdimI/2+\matdimI/4) {$\uparrow$};
		\node[draw=none] at (-\dimlabeloffset,-\matdimI/2-\matdimI/4) {$\downarrow$};
		\node[draw=none] at (\matdimK/2,\dimlabeloffset) {$K$};
		\node[draw=none] at (\matdimK/2+\matdimK/4,\dimlabeloffset) {$\rightarrow$};
		\node[draw=none] at (\matdimK/2-\matdimK/4,\dimlabeloffset) {$\leftarrow$};
	\end{scope}
\end{scope}
\begin{scope}[shift={(\matdimK+1,0)}]
	\begin{scope}[blue]
		\Bnz
		\node[draw=none,shift={(-.5,0)}] at (\matdimJ/2,\matlabely) {\LARGE $\B$};
	\end{scope}
	\begin{scope}[black,shift={(-.5,.5)}]
		\node[draw=none] at (-\dimlabeloffset,-\matdimK/2) {$K$};
		\node[draw=none] at (-\dimlabeloffset,-\matdimK/2+\matdimK/4) {$\uparrow$};
		\node[draw=none] at (-\dimlabeloffset,-\matdimK/2-\matdimK/4) {$\downarrow$};
        		\node[draw=none] at (\matdimJ/2,\dimlabeloffset) {$J$};
		\node[draw=none] at (\matdimJ/2+\matdimJ/4,\dimlabeloffset) {$\rightarrow$};
		\node[draw=none] at (\matdimJ/2-\matdimJ/4,\dimlabeloffset) {$\leftarrow$};
	\end{scope}
	\node[draw=none,shift={(-.5,.5)}] at (\matdimJ+1,-\matdimI/2) {\LARGE $=$};
\end{scope}
\begin{scope}[shift={(\matdimK+\matdimJ+3,0)}]
	\begin{scope}[violet]
		\Cnz
		\node[draw=none,shift={(-.5,0)}] at (\matdimJ/2,\matlabely) {\LARGE $\C$};
	\end{scope}
	\begin{scope}[black,shift={(-.5,.5)}]
		\node[draw=none] at (-\dimlabeloffset,-\matdimI/2) {$I$};
		\node[draw=none] at (-\dimlabeloffset,-\matdimI/2+\matdimI/4) {$\uparrow$};
		\node[draw=none] at (-\dimlabeloffset,-\matdimI/2-\matdimI/4) {$\downarrow$};
        		\node[draw=none] at (\matdimJ/2,\dimlabeloffset) {$J$};
		\node[draw=none] at (\matdimJ/2+\matdimJ/4,\dimlabeloffset) {$\rightarrow$};
		\node[draw=none] at (\matdimJ/2-\matdimJ/4,\dimlabeloffset) {$\leftarrow$};
	\end{scope}
\end{scope}

\end{tikzpicture}
\end{center}
\caption{Notation for a particular \SpGEMM{} instance.}%
\label{fig:notation}
\end{figure}

Given additional structure on $X$, there may exist many algorithms that do not simply compute the sums of products.
For example, if $X$ is a ring, then additive inverses can be exploited to obtain Strassen's algorithm,
 where the multiplicands do not generally equal the input matrix entries. 
As another example, given known relations on the entries of $\A$ and $\B$, one may be able to avoid evaluating some nontrivial multiplications.

To simplify the presentation, we assume that neither $\A$ or $\B$ have any zero rows or columns, so every nonzero input matrix entry participates in at least one nontrivial multiplication. While this can always be enforced, doing so may incur a preprocessing cost.

Our hypergraph terminology borrows from \c{C}ataly\"{u}rek and Aykanat~\citeyear{CA99}.
A \emph{hypergraph} $\HG$ is a generalization of a graph; it consists of a set of \emph{vertices} $\V$ and a set of \emph{nets} $\N$, where each net $n \in \N$ is a subset of the vertices, $n\subseteq \V$.
We use the term \emph{pin} to refer to a vertex in a particular net.
Each vertex $v\in \V$ may have a \emph{weight} $w$ associated with it, and each net $n\in \N$ may have a \emph{cost} $c$ associated with it.
Weights and costs are typically scalar-valued but can be vector-valued; we will use vector-valued weights.

\subsection{Fine-Grained Hypergraph Model}
\label{sec:fg-model}

In this section we define our most general hypergraph model for \SpGEMM{}.
It may be helpful to visualize an instance of \SpGEMM{} geometrically, as the three-dimensional ``iteration space'' of matrix multiplication:
\Cref{fig:compcube} illustrates this geometric perspective.
The set of $IKJ$  multiplications are arranged as a set of cubes so that nontrivial multiplications are distinguished by having nonzero projections on the $\A$- and $\B$-faces.
Nonzero entries of $\C$ are determined by projecting the nontrivial multiplications onto the $\C$-face.

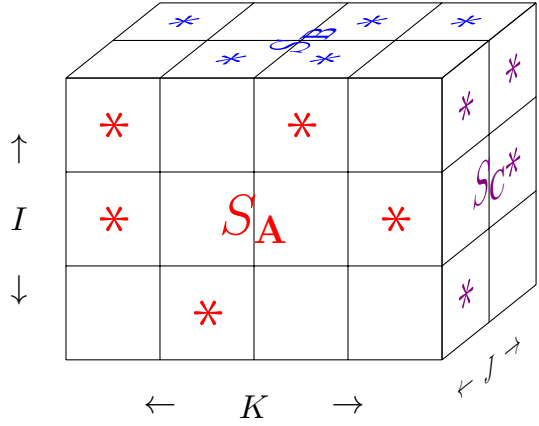
\begin{figure}
\begin{center}
\begin{tikzpicture}[CCorient,scale=1.25]

\compcube

\begin{scope}[Aface,shift={(-.5,.5)}]
	\node[draw=none] at (-.5,-\matdimI/2) {$I$};
	\node[draw=none] at (-.5,-\matdimI/2+\matdimI/4) {$\uparrow$};
	\node[draw=none] at (-.5,-\matdimI/2-\matdimI/4) {$\downarrow$};
	\node[draw=none] at (\matdimK/2,-\matdimI-.5) {$K$};
	\node[draw=none] at (\matdimK/2+\matdimK/4,-\matdimI-.5) {$\rightarrow$};
	\node[draw=none] at (\matdimK/2-\matdimK/4,-\matdimI-.5) {$\leftarrow$};
\end{scope}
\begin{scope}[Cface,shift={(-.5,.5)}]
        	\node[draw=none] at (\matdimJ/2,-\matdimI-.5) {$J$};
	\node[draw=none] at (\matdimJ/2+\matdimJ/4,-\matdimI-.5) {$\rightarrow$};
	\node[draw=none] at (\matdimJ/2-\matdimJ/4,-\matdimI-.5) {$\leftarrow$};
\end{scope}

\end{tikzpicture}
\end{center}
\caption{Geometric view of the \SpGEMM{} instance in \cref{fig:notation}. correspond to the iteration space of matrix multiplication.  Each sub-cube represents a  multiplication, which is nontrivial if its projections onto the $\A$ and $\B$ faces both correspond to nonzero values.  The projections of nontrivial multiplications onto the $\C$ face define the nonzero structure of $\C$.}
\label{fig:compcube}
\end{figure}

We now define our hypergraph model and connect its definition to its geometric interpretation.

\begin{definition}
\label[definition]{def:hp-spgemm}
Consider matrices $\A$, $\B$, and $\C = \A\cdot\B$, continuing notation.
Define the \emph{fine-grained \SpGEMM{} hypergraph $\HG(\A,\B)=(\V,\N)$}, where $\V = \V^\mm \cup \V^\nz$, with
\begin{align*}
\V^\nz &= \V^\A \cup \V^\B \cup \V^\C \text{,}\\
\V^\mm &= \{ v_{ikj} : (i,k) \in S_\A \wedge (k,j) \in S_\B \} \text{,} \\
\V^\A &= \{ v_{ik}^\A : (i,k) \in S_\A \} \text{,} \\
\V^\B &= \{ v_{kj}^\B : (k,j) \in S_\B \} \text{, and} \\
\V^\C &= \{ v_{ij}^\C : (i,j) \in S_\C \} \text{,}
\end{align*}
and $\N=\N^\A \cup \N^\B \cup \N^\C$, with
\begin{align*}
\N^\A &= \{ n_{ik}^\A : (i,k) \in S_\A \}\text{,} \\
\N^\B &= \{ n_{kj}^\B : (k,j) \in S_\B \}\text{, and} \\
\N^\C &= \{ n_{ij}^\C : (i,j) \in S_\C \}\text{.}
\end{align*}
Net membership is defined such that each $v_{ik}^\A \in n_{ik}^\A$, each $v_{kj}^\B \in n_{kj}^\B$, each $v_{ij}^\C \in n_{ij}^\C$, and 
each $v_{ikj} \in n_{ik}^\A \cap n_{kj}^\B \cap n_{ij}^\C$.
Equivalently, pins are defined such that for each $(i,k) \in S_\A$,
\[ n_{ik}^\A = \{ v_{ikj} : (k,j) \in S_\B \} \cup \{ v_{ik}^\A \} \text{,}\]
for each $(j,k) \in S_\B$,
\[ n_{jk}^\B = \{ v_{ikj} : (i,k) \in S_\A \} \cup \{ v_{kj}^\B \} \text{,} \]
and for each $(i,j)\in S_\C$,
\[ n_{ij}^\C = \{ v_{ikj} : (i,k) \in S_\A \wedge (k,j) \in S_\B \} \cup \{ v_{ij}^\C \}\text{.}\]
The vertices $\V$ have two types of \emph{weights}, corresponding to computation and memory.
For all $v_{ikj} \in \V^\mm$, 
\[\begin{array}{r c l}
w_\comp(v_{ikj}) & = & 1 \text{,}\\
w_\mem(v_{ikj}) & = & 0 \text{.}
\end{array}\]
For all $v_{ik}^\A,v_{kj}^\B,v_{ij}^\C \in \V^\mm$, 
\[\begin{array}{r c c c c c l}
w_\comp(v_{ik}^\A) & = & w_\comp(v_{kj}^\B) & = & w_\comp(v_{ij}^\C) & = & 0 \text{,}\\
w_\mem(v_{ik}^\A) & = & w_\mem(v_{kj}^\B) & = & w_\mem(v_{ij}^\C) & = & 1 \text{.} 
\end{array}\]
The \emph{cost} of each net $n \in \N$ is $c(n)=1$.
\end{definition}

In summary, each vertex $v$ of $\HG(\A,\B)$ corresponds either to a nontrivial multiplication or to a nonzero entry of $\A$, $\B$, or $\C$, and each net corresponds to a nonzero of $\A$, $\B$, or $\C$.
Each \emph{multiplication vertex} $v \in \V^\mm$ is a member of the three nets corresponding to nonzeros involved in the  multiplication; each \emph{nonzero vertex} $v \in \V^\nz$ is a member of its corresponding net.
Conversely, each net corresponds to a nonzero value and includes as members its corresponding nonzero vertex as well all associated multiplication vertices.

In the geometric context of \cref{fig:compcube}, we can think of each multiplication vertex $v_{ikj}$ as a point in the three-dimensional iteration space, and we can think of each nonzero vertex $v_{ik}^\A$, $v_{kj}^\B$, or $v_{ij}^\C$ as a point on one of the (two-dimensional) faces of the iteration space.
Likewise, a net $n_{ik}^\A$, $n_{kj}^\B$, or $n_{ij}^\C$ corresponds to a (one-dimensional) fiber perpendicular to one of the faces, and its pins are the vertices that lie along the fiber.

We provide two other visualizations of the hypergraph defined by the \SpGEMM{} instance given in \cref{fig:notation}.
\Cref{fig:hypergraph} depicts the hypergraph as a bipartite graph, where hypergraph vertices are denoted by circles, hypergraph nets are denoted by squares, and net membership is defined by edges in the graph.
\Cref{fig:incidence} shows the incidence matrix of the hypergraph, where rows correspond to vertices, columns correspond to nets, and net membership is defined by the nonzero pattern.

\begin{figure}
\begin{center}
\begin{tikzpicture}[yscale=.75]

\newcommand{\nzvsym}[2]{$v^{#1}_{#2}$}
\newcommand{\netsym}[2]{$n^{#1}_{#2}$}
\newcommand{\mvsym}[1]{$v_{#1}$}
\newcommand{\rotangle}{30}

\tikzset{vs/.style={draw,circle}}
\tikzset{ns/.style={draw}}

\begin{scope}[red,shift={(0,-3.25)},rotate=-\rotangle]
	\node[vs] (va00) at (0,0) {\nzvsym{\A}{00}};
	\node[vs] (va02) at (1,0) {\nzvsym{\A}{02}};
	\node[vs] (va10) at (2,0) {\nzvsym{\A}{10}};
	\node[vs] (va13) at (3,0) {\nzvsym{\A}{13}};
	\node[vs] (va21) at (4,0) {\nzvsym{\A}{21}};
        \begin{scope}[shift={(0,1.25)}]
            	\node[ns] (na00) at (0,0) {\netsym{\A}{00}};
            	\node[ns] (na02) at (1,0) {\netsym{\A}{02}};
            	\node[ns] (na10) at (2,0) {\netsym{\A}{10}};
            	\node[ns] (na13) at (3,0) {\netsym{\A}{13}};
            	\node[ns] (na21) at (4,0) {\netsym{\A}{21}};
        \end{scope}
\end{scope}

\begin{scope}[blue,shift={(6.75,-5.25)},rotate=\rotangle]
	\node[vs] (vb01) at (0,0) {\nzvsym{\B}{01}};
	\node[vs] (vb10) at (1,0) {\nzvsym{\B}{10}};
	\node[vs] (vb20) at (2,0) {\nzvsym{\B}{20}};
	\node[vs] (vb21) at (3,0) {\nzvsym{\B}{21}};
	\node[vs] (vb31) at (4,0) {\nzvsym{\B}{31}};
        \begin{scope}[blue,shift={(0,1.25)}]
            	\node[ns] (nb01) at (0,0) {\netsym{\B}{01}};
            	\node[ns] (nb10) at (1,0) {\netsym{\B}{10}};
            	\node[ns] (nb20) at (2,0) {\netsym{\B}{20}};
            	\node[ns] (nb21) at (3,0) {\netsym{\B}{21}};
            	\node[ns] (nb31) at (4,0) {\netsym{\B}{31}};
        \end{scope}
\end{scope}

\begin{scope}[shift={(5,0)},xscale=1.5]
	\node[vs] (vm020) at (-2.5,0) {\mvsym{020}};
	\node[vs] (vm001) at (-1.5,0) {\mvsym{001}};
	\node[vs] (vm021) at (-.5,0) {\mvsym{021}};
	\node[vs] (vm101) at (.5,0) {\mvsym{101}};
	\node[vs] (vm131) at (1.5,0) {\mvsym{131}};
	\node[vs] (vm210) at (2.5,0) {\mvsym{210}};
\end{scope}

\begin{scope}[violet,shift={(5,2.5)}]
	\node[ns] (nc00) at (-1.5,0) {\netsym{\C}{00}};
	\node[ns] (nc01) at (-.5,0) {\netsym{\C}{01}};
	\node[ns] (nc11) at (.5,0) {\netsym{\C}{11}};
	\node[ns] (nc20) at (1.5,0) {\netsym{\C}{20}};
	\begin{scope}[violet,shift={(0,1.25)}]
		\node[vs] (vc00) at (-1.5,0) {\nzvsym{\C}{00}};
		\node[vs] (vc01) at (-.5,0) {\nzvsym{\C}{01}};
		\node[vs] (vc11) at (.5,0) {\nzvsym{\C}{11}};
		\node[vs] (vc20) at (1.5,0) {\nzvsym{\C}{20}};
	\end{scope}
\end{scope}

\draw (va00) -- (na00);
\draw (va02) -- (na02);
\draw (va10) -- (na10);
\draw (va13) -- (na13);
\draw (va21) -- (na21);

\draw (vb01) -- (nb01);
\draw (vb10) -- (nb10);
\draw (vb20) -- (nb20);
\draw (vb21) -- (nb21);
\draw (vb31) -- (nb31);

\draw (vc00) -- (nc00);
\draw (vc01) -- (nc01);
\draw (vc11) -- (nc11);
\draw (vc20) -- (nc20);

\draw (vm020) -- (na02);
\draw (vm020) -- (nb20);
\draw (vm020) -- (nc00);
\draw (vm001) -- (na00);
\draw (vm001) -- (nb01);
\draw (vm001) -- (nc01);
\draw (vm021) -- (na02);
\draw (vm021) -- (nb21);
\draw (vm021) -- (nc01);
\draw (vm101) -- (na10);
\draw (vm101) -- (nb01);
\draw (vm101) -- (nc11);
\draw (vm131) -- (na13);
\draw (vm131) -- (nb31);
\draw (vm131) -- (nc11);
\draw (vm210) -- (na21);
\draw (vm210) -- (nb10);
\draw (vm210) -- (nc20);

\end{tikzpicture}
\end{center}
\caption{Hypergraph of the \SpGEMM{} instance in \cref{fig:notation}. Circles correspond to vertices and squares correspond to nets.}
\label{fig:hypergraph}
\end{figure}

\begin{figure}
\begin{center}
\begin{tikzpicture}

\newcommand{\offsetlabel}{1}

\begin{scope}[matstyle,scale=.75]

	\mat{6}{14}
	
	\begin{scope}[shift={(0,-\offsetlabel)}]
		\node at (0,0) {$v_{020}$};
		\node at (1,0) {$v_{001}$};
		\node at (2,0) {$v_{021}$};
		\node at (3,0) {$v_{101}$};
		\node at (4,0) {$v_{131}$};
		\node at (5,0) {$v_{210}$};
	\end{scope}
	
	\begin{scope}[red,shift={(-\offsetlabel,0)}]
		\node at (0,0) {$n^{\A}_{00}$};
		\node at (0,1) {$n^{\A}_{02}$};
		\node at (0,2) {$n^{\A}_{10}$};
		\node at (0,3) {$n^{\A}_{13}$};
		\node at (0,4) {$n^{\A}_{21}$};
	\end{scope}
	\begin{scope}[blue,shift={(-\offsetlabel,5)}]
		\node at (0,0) {$n^{\B}_{01}$};
		\node at (0,1) {$n^{\B}_{10}$};
		\node at (0,2) {$n^{\B}_{20}$};
		\node at (0,3) {$n^{\B}_{21}$};
		\node at (0,4) {$n^{\B}_{31}$};
	\end{scope}
	\begin{scope}[violet,shift={(-\offsetlabel,10)}]
		\node at (0,0) {$n^{\C}_{00}$};
		\node at (0,1) {$n^{\C}_{01}$};
		\node at (0,2) {$n^{\C}_{11}$};
		\node at (0,3) {$n^{\C}_{20}$};
	\end{scope}
	
	\node[nz,red] at (0,1) {\nzsym};
	\node[nz,blue] at (0,\nnza+2) {\nzsym};
	\node[nz,violet] at (0,\nnza+\nnzb+0) {\nzsym};
	
	\node[nz,red] at (1,0) {\nzsym};
	\node[nz,blue] at (1,\nnza+0) {\nzsym};
	\node[nz,violet] at (1,\nnza+\nnzb+1) {\nzsym};
	
	\node[nz,red] at (2,1) {\nzsym};
	\node[nz,blue] at (2,\nnza+3) {\nzsym};
	\node[nz,violet] at (2,\nnza+\nnzb+1) {\nzsym};
	
	\node[nz,red] at (3,2) {\nzsym};
	\node[nz,blue] at (3,\nnza+0) {\nzsym};
	\node[nz,violet] at (3,\nnza+\nnzb+2) {\nzsym};
	
	\node[nz,red] at (4,3) {\nzsym};
	\node[nz,blue] at (4,\nnza+4) {\nzsym};
	\node[nz,violet] at (4,\nnza+\nnzb+2) {\nzsym};
	
	\node[nz,red] at (5,4) {\nzsym};
	\node[nz,blue] at (5,\nnza+1) {\nzsym};
	\node[nz,violet] at (5,\nnza+\nnzb+3) {\nzsym};

\end{scope}

\end{tikzpicture}
\end{center}
\caption{Submatrix of the incidence matrix of the hypergraph of the \SpGEMM{} instance in \cref{fig:notation}. Rows correspond to multiplication vertices and columns correspond to nets. The rows of the incidence matrix not shown here correspond to nonzero vertices; they form a diagonal submatrix.
}%
\label{fig:incidence}
\end{figure}

This hypergraph is nearly the same as the one proposed in our earlier paper~\cite[Definition 1]{BDKS15}.
The difference here is the introduction of nonzero vertices and the inclusion of each in its corresponding nonzero's net.
The purpose of adding these vertices is to enable enforcing memory balance in the parallel case (see \cref{sec:parallelLB}), and for relating the hypergraph model to a computation DAG model in the sequential case (see \cref{sec:sequentialLB}).

\section{Communication Lower Bounds}
\label{sec:LB}

\subsection{Parallel Lower Bound}
\label{sec:parallelLB}

We consider performing \SpGEMM{} with input matrices $\A$ and $\B$ on a parallel machine with $p$ processors with disjoint memories.
Let $\HG=\HG(\A,\B)$.
A \emph{parallelization} is a $p$-way partition of $\V^\mm$ (assigning multiplications to processors) and a \emph{data distribution} is a $p$-way partition of $\V^\nz$ (assigning nonzeros to processors).
Thus, a partition of $\V$ defines both a parallelization and a data distribution.
We define a \emph{parallel \SpGEMM{} algorithm} by not only a parallelization and data distribution, but also by a communication pattern and each processor's local computation algorithm(s).
So, a single partition of $\V$ corresponds to a family of parallel \SpGEMM{} algorithms, and each parallel \SpGEMM{} algorithm corresponds to a single partition of $\V$.
For the rest of \cref{sec:parallelLB}, ``algorithm'' means ``parallel \SpGEMM{} algorithm''.

Given a partition of $\V$, the communication for any corresponding algorithm comprises two phases: the \emph{expand} phase, where the processors exchange nonzero entries of $\A$ and $\B$ (initially distributed according to the partitions of $\V^\A$ and $\V^\B$) in order to perform their multiplications (assigned according to the partition of $\V^\mm$), and the \emph{fold} phase, where the processors communicate to reduce partial sums for nonzero entries of $\C$ (finally distributed according to the partition of $\V^\C$).

We now bound below the parallel communication costs of an algorithm in terms of the cuts of $\HG$ induced by that algorithm's associated partition of $\V$.

\begin{definition}
\label[definition]{def:cut}
Given a partition $\{\V_1,\ldots,\V_p\}$ of $\V$, for each $i \in [p]$, $Q_i$ denotes the subset of $\N$ whose elements (nets) have nonempty intersections with both $\V_i$ and $\V \setminus \V_i$.
\end{definition}

\begin{lemma}
\label[lemma]{lem:per-proc-comm}
Given a partition $\{\V_1,\ldots,\V_p\}$ of $\V$, for any associated algorithm, the number of words each processor $i \in [p]$ sends or receives is at least $|Q_i|$, and the critical-path communication cost is at least $\max_{i \in [p]} |Q_i|$.
\end{lemma}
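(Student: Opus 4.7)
The plan is to fix a processor $i \in [p]$ and show that each net $n \in Q_i$ forces at least one distinct word of communication to or from processor $i$; summing over $Q_i$ gives the per-processor bound, and taking the maximum yields the critical-path bound.

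First I would handle an input net, say $n_{ik}^\A \in Q_i$. By \cref{def:cut}, this net has pins both inside and outside $\V_i$. Its pins are the nonzero vertex $v_{ik}^\A$ together with the multiplication vertices $\{v_{ikj} : (k,j) \in S_\B\}$. If $v_{ik}^\A \in \V_i$ while some $v_{ikj} \notin \V_i$, then processor $i$ owns the nonzero $a_{ik}$ but another processor needs it to perform a nontrivial multiplication assigned to it; hence $i$ must send $a_{ik}$. Conversely, if $v_{ik}^\A \notin \V_i$ while some $v_{ikj} \in \V_i$, processor $i$ must receive $a_{ik}$. A symmetric argument handles any $n_{kj}^\B \in Q_i$.

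Next I would handle an output net $n_{ij}^\C \in Q_i$. Its pins are $v_{ij}^\C$ together with $\{v_{ikj}\}$ over the $k$'s indexing nontrivial multiplications contributing to $c_{ij}$. Because addition on $X$ is only associative and commutative with identity $0$, any \SpGEMM{} algorithm summing the products assigned across processors must eventually move each partial sum into the processor owning $v_{ij}^\C$. Thus, if $v_{ij}^\C \in \V_i$ but some $v_{ikj} \notin \V_i$, then $i$ receives at least one partial-sum word associated with $(i,j)$; if $v_{ij}^\C \notin \V_i$ but some $v_{ikj} \in \V_i$, then $i$ sends at least one such word.

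To combine these into the bound $|Q_i|$, I would note that the word charged to a net $n_{ik}^\A$ or $n_{kj}^\B$ carries the value of a unique input nonzero, and the word charged to $n_{ij}^\C$ is a partial sum destined for (or produced for) a unique output entry; distinct nets therefore correspond to disjoint send/receive events (in particular, partial sums for different output entries $(i,j)$ cannot be packed into a single scalar word, since the algorithm class of \cref{sec:notation} evaluates and sums only nontrivial multiplications and does not exploit algebraic identities). Finally, the critical-path cost of any algorithm is at least the total communication performed by any single processor, so it is bounded below by $\max_{i \in [p]} |Q_i|$. The main subtlety to address carefully is the distinctness claim — in particular, ruling out that a single message conflates the word attributed to two different nets — which I expect to justify by appealing to the assumption that the algorithm computes each required sum of nontrivial products without cancellation, so each input nonzero and each partial sum is an indivisible word.
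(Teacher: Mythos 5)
Your proposal is correct and follows essentially the same argument as the paper: each net in $Q_i$ corresponds to a unique nonzero of $\A$, $\B$, or $\C$ that processor $i$ must either send (if it owns the nonzero but another part needs or contributes to it) or receive (if it needs or contributes to a nonzero it does not own), and distinct nets charge distinct words, with the critical-path bound following by maximizing over processors. Your case analysis and the explicit justification of distinctness are more detailed than the paper's two-sentence proof, but the underlying idea is identical.
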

\begin{proof}
For each processor $i \in [p]$, for each net in $Q_i$, processor $i$ must either receive or send the corresponding nonzero, since at most one processor owns each nonzero at the start and end of the computation.
The bound on the critical-path communication cost is obtained by maximizing over $i \in [p]$, as each processor can send only one word at a time along the critical path \cite{BCDH+14}.
\end{proof}
Furthermore, for each partition, there exists an algorithm that attains these lower bounds, within constant factors in the cases of the per-processor costs and within a logarithmic factor in the case of the critical-path cost.

\begin{lemma}
\label[lemma]{lem:parUB}
Given a partition $\{\V_1,\ldots,\V_p\}$, there exists an associated algorithm such that the number of words each processor $i\in[p]$ sends or receives is $O(|Q_i|)$, and the critical-path communication cost is $O( \log p \cdot \max_{i \in [p]} |Q_i|)$. 
\end{lemma}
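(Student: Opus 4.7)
The plan is to exhibit a specific algorithm associated with the given partition by routing all nonzero-value traffic along balanced binary trees, and then verify the two stated bounds. In the expand phase, for each net $n^\A_{ik}$ whose pins do not all lie in the same part, the owner of the nonzero vertex $v^\A_{ik}$ broadcasts the value $a_{ik}$ along a binary tree whose leaves are exactly the parts $\V_q$ that contain some multiplication vertex $v_{ikj} \in n^\A_{ik}$; nets $n^\B_{kj} \in \N^\B$ are handled analogously. Each processor then performs its assigned multiplications locally. In the fold phase, for each net $n^\C_{ij}$ with pins in multiple parts, partial sums are combined by a binary reduction tree converging to the owner of $v^\C_{ij}$, which writes the final entry $c_{ij}$.

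For the per-processor bound, note that processor $i$ appears in a given net's tree if and only if the net has at least one pin in $\V_i$ and at least one pin outside $\V_i$, i.e.\ the net lies in $Q_i$ (using \cref{def:cut}). Within any single broadcast or reduction tree, each participating processor sends $O(1)$ and receives $O(1)$ single-word messages carrying the associated nonzero or partial sum. Summing over the nets in $Q_i$ gives $O(|Q_i|)$ words sent and $O(|Q_i|)$ words received at processor $i$, matching the lower bound of \cref{lem:per-proc-comm} up to a constant factor.

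For the critical-path bound, each individual broadcast or reduction tree involves at most $p$ processors and therefore has depth $O(\log p)$. Schedule the trees in a fixed global order, and have each processor execute the tree steps it participates in serially in that order. Any chain of dependent messages that traverses processor $i$ then alternates between tree-internal hops (each of depth at most $O(\log p)$) and processor-$i$-local serialization, so its length is at most the number of trees $i$ joins, namely $|Q_i|$, times $O(\log p)$. Maximizing over $i$ yields the stated bound $O(\log p \cdot \max_{i \in [p]} |Q_i|)$.

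The main technical point I expect to be delicate is the scheduling argument in the last paragraph: one must rule out long dependency chains that weave through several processors' trees in sequence. The fixed global ordering combined with the observation that distinct trees for distinct nets are independent of one another (they touch disjoint nonzero values, and the local multiplications only depend on inputs already delivered by the corresponding $\A$- and $\B$-broadcasts) keeps the longest chain through processor $i$ proportional to $|Q_i|$ tree depths, giving the claimed $O(\log p \cdot \max_i |Q_i|)$ bound.
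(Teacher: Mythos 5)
Your algorithm and your per-processor bound coincide with the paper's: binary-tree broadcasts for cut input nets, binary-tree reductions for cut output nets, and $O(1)$ words per processor per tree, summed over the $|Q_i|$ trees that processor $i$ joins. The gap is exactly where you flagged it: the critical-path argument. Scheduling the trees in a fixed global order, with each processor executing its tree steps serially in that order, does \emph{not} yield $O(\log p \cdot \max_{i} |Q_i|)$, and the bound you state --- that any dependency chain has length at most ``the number of trees $i$ joins times $O(\log p)$'' --- does not follow, because a chain is not confined to a single processor $i$. Concretely, take nets $n_1 < n_2 < \cdots < n_m$ in your global order, where the broadcast tree of $n_t$ has root processor $t$ and sole leaf processor $t+1$. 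Then $|Q_i| \le 2$ for every $i$, but your serialization forces processor $t+1$ to complete its receive in tree $n_t$ before performing its root send in tree $n_{t+1}$; chaining these constraints across $t$ produces a dependency path of length $\Omega(m)$, which can vastly exceed $O(\log p \cdot \max_i |Q_i|) = O(\log p)$. The observation that distinct trees carry independent data does not rescue the argument: it is the serialization you imposed, not any data dependency, that links the trees into a long chain that weaves through distinct processors without ever revisiting one.

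The paper closes this gap by running all broadcast (and, symmetrically, all reduction) trees synchronously in lockstep rather than serially: in round $r$, every processor sitting at depth $r-1$ of any of its trees forwards the corresponding nonzero to its children in that tree. There are $O(\log p)$ rounds, and in each round processor $i$ performs at most one receive and two sends in each of a subset of its $|Q_i|$ trees, so each round contributes $O(\max_{i} |Q_i|)$ to any critical path; summing over the $O(\log p)$ rounds gives the stated bound. Replacing your fixed serial order with this bulk-synchronous schedule repairs the proof; the rest of your argument can stand as written.
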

\begin{proof}
The basic idea of the algorithm is as follows.
For the expand phase, every input nonzero that corresponds to a cut net will be sent to the  processors whose parts $\V_i$ intersect the net via a binary-tree broadcast.
For the fold phase, every output nonzero that corresponds to a cut net will be reduced to the processor whose part includes the nonzero vertex via a binary-tree reduction.

In more detail, consider first the expand phase.
All of the broadcasts are performed synchronously: in the first step, each root sends its nonzero to two other processors; in the second step, every processor that receives a nonzero from a root sends it to two other processors in the second step; and so on.
Since at most $p$ processors can be involved in any one broadcast, the expand phase requires at most $O(\log p)$ steps.
Processor $i$ receives each of its required nonzeros at most once and sends each at most twice, for a total (per-processor) cost of $O(|Q_i|)$.
Furthermore, at each expand step, processor $i$ performs one receive and two sends in each of a subset of $|Q_i|$ broadcast trees,  so each expand step involves $O(\max_{i \in [p]} |Q_i|)$ sends or receives along any critical path.
Analysis of the fold phase is symmetric, using binary-tree reductions.
\end{proof}

For efficient algorithms, we constrain the partitions to enforce load balance of memory and computation.
We define sets of load-balanced partitions in terms of two parameters:
\begin{definition}
\label[definition]{def:load-balance}
For any $\delta,\epsilon \in [0,p-1]$, let $\Pi_{\delta,\epsilon}$ be the set of all partitions $\{\V_1,\ldots,\V_p\}$ of $\V$ where 
\[ w_\mem(\V_i) \le (1+\delta) \frac{|\V^\nz|}{p} \qquad\text{and}\qquad w_\comp(\V_i) \le (1+\epsilon)\frac{|\V^\mm|}{p}\]
for each $i\in[p]$. 
We say an algorithm with partition $\{\V_1,\ldots,\V_p\} \in \Pi_{\delta,\epsilon}$ is \emph{$(\delta,\epsilon)$-load balanced}.
\end{definition}
Given $\Pi_{\delta,\epsilon}$, if $\delta = 0$ then the memory requirements are perfectly load-balanced, and if $\epsilon = 0$ then the computations (multiplications) are perfectly load-balanced.
If $\delta=\epsilon = p-1$ then $\Pi_{\delta,\epsilon}$ includes the trivial partition wherein no interprocessor communication is required, since one processor stores all three matrices and performs the whole computation. 

Finally, we state our parallel communication lower bound:
\begin{theorem}
\label[theorem]{thm:1}
The critical-path communication cost of an $(\delta,\epsilon)$-load balanced parallel \SpGEMM{} algorithm is at least
\[ \min_{\{\V_1,\dots,\V_p\}\in \Pi_{\delta,\epsilon}} \max_{i \in [p]} |Q_i|\text{.}\]
This lower bound is tight up to at most a logarithmic factor in the number of processors.
\end{theorem}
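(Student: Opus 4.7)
The plan is to combine \cref{lem:per-proc-comm} and \cref{lem:parUB}, which together already do most of the work: the first gives the lower bound and the second gives the matching upper bound. The only nontrivial observation is that the quantification over algorithms matches the quantification over load-balanced partitions.

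For the lower bound, I would start by fixing an arbitrary $(\delta,\epsilon)$-load balanced parallel \SpGEMM{} algorithm. By the definition of such an algorithm (given in the paragraph preceding \cref{def:cut}), it corresponds to a unique partition of $\V$, and by \cref{def:load-balance} this partition lies in $\Pi_{\delta,\epsilon}$. Applying \cref{lem:per-proc-comm} to this partition yields that the critical-path communication cost of the algorithm is at least $\max_{i\in[p]}|Q_i|$ for this particular partition. Since the algorithm was arbitrary in the load-balanced class, the cost of any such algorithm is at least
\[
\min_{\{\V_1,\dots,\V_p\}\in\Pi_{\delta,\epsilon}}\max_{i\in[p]}|Q_i|,
\]
which is the claimed lower bound.

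For tightness, I would invoke \cref{lem:parUB}: choose any partition $\{\V_1^\ast,\dots,\V_p^\ast\}\in\Pi_{\delta,\epsilon}$ that attains the minimum in the displayed expression (it exists because $\Pi_{\delta,\epsilon}$ is finite). \cref{lem:parUB} then guarantees a parallel \SpGEMM{} algorithm associated with this partition whose critical-path cost is $O(\log p\cdot \max_{i\in[p]}|Q_i^\ast|)$, i.e., within a factor of $O(\log p)$ of the lower bound. Since the partition is in $\Pi_{\delta,\epsilon}$, the resulting algorithm is itself $(\delta,\epsilon)$-load balanced, so it belongs to the class under consideration and witnesses tightness.

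I do not anticipate any real obstacle beyond verifying that the bijection between algorithms and partitions cleanly aligns with the load-balance constraint, and that a minimizer of $\max_i|Q_i|$ over $\Pi_{\delta,\epsilon}$ exists (which is immediate from finiteness, assuming $\Pi_{\delta,\epsilon}$ is nonempty; the trivial partition case discussed after \cref{def:load-balance} handles existence for sufficiently large $\delta,\epsilon$, and otherwise the minimum in the theorem is over an empty set and the statement holds vacuously). The proof is essentially a two-sentence composition of the preceding lemmas.
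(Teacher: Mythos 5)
Your proposal is correct and follows essentially the same route as the paper: the lower bound is obtained by applying \cref{lem:per-proc-comm} to the partition associated with any $(\delta,\epsilon)$-load balanced algorithm (which lies in $\Pi_{\delta,\epsilon}$ by \cref{def:load-balance}) and then minimizing, and tightness follows by applying \cref{lem:parUB} to an optimal partition. The extra remarks about finiteness of $\Pi_{\delta,\epsilon}$ and the vacuous case are harmless elaborations the paper leaves implicit.
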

\begin{proof}
The lower bound follows directly from \cref{lem:per-proc-comm,def:load-balance}.
The tightness of the lower bound follows from applying \cref{lem:parUB} to any optimal partition.
\end{proof}

We compare \cref{thm:1} with the distributed-memory parallel communication lower bounds for \SpGEMM{} derived by Ballard et al.~\citeyear{BDHS11,BDHLS12}.
In our notation, with the additional assumption that each processor is restricted to using $M$ words of its local memory, the critical-path communication cost is bounded below by a function contained in the union of the sets
\begin{equation}
\label[equation]{eq:GLB}
\Omega\left( \frac{|\V^\mm|}{pM^{1/2}} - \alpha M \right) \qquad\text{and}\qquad \Omega\left(\frac{|\V^\mm|^{2/3}}{p^{2/3}} - \beta\frac{|\V^\nz|}{p} \right)   
\end{equation}
for some $\alpha,\beta > 0$. 
These two asymptotic lower bounds are called memory-dependent and memory-independent, respectively; the memory-independent bound requires an additional asymptotic assumption that the computational load balance parameter $\epsilon = O(1)$; we refer to~\cite{BDHS11,BDHLS12} for details regarding these asymptotic expressions, like the constants $\alpha$, $\beta$, and others suppressed by asymptotic notation, and a discussion of when one lower bound dominates the other. 

The combined lower bound \cref{eq:GLB} is tight (within constant factors) when $\A$ and $\B$ are dense matrices~\cite{ITT04}, but  asymptotically loose in expectation when $I=J=K=n$ and each entry of $\A$ and $\B$ is nonzero with probability $d/n$ for any $d$ in the intersection of the sets $\omega(1)$, $o(\sqrt{M})$, and $o(\sqrt{n})$~\cite{BBDG+13}.
On the other hand, by considering the nonzero structures of $\A$ and $\B$, the conclusion of \cref{lem:per-proc-comm} is always tight, within a logarithmic factor. 

\subsection{Sequential Lower Bound}
\label{sec:sequentialLB}

Now we turn to the communication costs of \SpGEMM{} algorithms executed on a sequential machine with a two-level memory.
In this computation model, communication is data movement between a fast memory of $M$-word capacity, $M \ge 3$, and a slow memory of unbounded capacity; operations are only performed on data in fast memory. 
Communication cost is defined as the total number of loads and stores, i.e., words moved from slow to fast and from fast to slow memory. 
We will derive communication lower bounds in terms of the \SpGEMM{} hypergraph model (see \cref{sec:fg-model}.
For the rest of \cref{sec:sequentialLB}, ``algorithm'' means ``sequential \SpGEMM{} algorithm''.

We follow Hong and Kung's approach~\citeyear[Sec.~3]{HK81} for deriving lower bounds on communication in this sequential model, as well as their graph-theoretic \SpGEMM{} model~\citeyear[Sec.~6]{HK81}.
Hong and Kung model algorithms schematically as uninterpreted directed acyclic graphs (DAGs): the \SpGEMM{} hypergraph model in \cref{def:hp-spgemm} is closely related to Hong and Kung's DAG model of \SpGEMM{}.
In particular, each \SpGEMM{} hypergraph $\HG(\A,\B)$ represents a family of \SpGEMM{} DAGs $\mathcal{G}(\A,\B)$ whose members differ only in the order in which they sum the (nontrivial) multiplications. 
After formalizing this relationship, we exploit it to approximate and simplify Hong and Kung's communication lower bounds framework using hypergraphs.

We define a \emph{summation tree} as any DAG with exactly one zero-outdegree vertex, called the \emph{root}; the zero-indegree vertices are called the \emph{leaves}. 
The members of $\mathcal{G}(\A,\B)$ are parameterized by a function that associates each $(i,j) \in S_\C$ with a summation tree $T_{ij}$ whose leaves are $\{v_{ikj} : (\exists k)\, v_{ikj} \in \V\}$ and whose non-leaves are disjoint from $\V$ and the other trees.
Having associated disjoint summation trees with $S_\C$ in this manner, the generic member $(V,E) \in \mathcal{G}(\A,\B)$ is defined by
\begin{align*}
V &= \V^\A \cup \V^\B \cup \left\{V(T_{ij}) : (i,j) \in S_\C \right\} \\
E &= \quad\left\{(v^\A_{ik},v_{ikj}) : (\exists i,k,j)\, v^\A_{ik},v_{ikj} \in \V\right\} \\&\quad\cup\, \left\{(v^\B_{ik},v_{ikj}) : (\exists i,k,j)\, v^\B_{kj},v_{ikj} \in \V \right\} \\&\quad\cup\, \left\{E(T_{ij}) : (i,j) \in S_\C \right\}\text{.}
\end{align*}
Of the vertices $V$, we distinguish the \emph{inputs} $I$, which have no predecessors, and the \emph{outputs} $O$, which have no successors; by construction, these two sets are disjoint.  
We remark that $\mathcal{G}(\A,\B)$ can be specialized to obtain Hong and Kung's \SpGEMM{} DAGs, by restricting each $T_{ij}$ to be a binary tree with at least two leaves; this excludes, e.g., multiplication of diagonal matrices.

The key to Hong and Kung's communication analysis is the set of \emph{$S$-partitions} of a DAG $(V,E)$. 
A partition $\{V_1,\ldots,V_h\}$ of $V$ is an $S$-partition if additionally
\begin{enumerate}
\item Each $V_i$ has a \emph{dominator set} of size at most $S$ --- a dominator set of $V_i$ contains a vertex on every path from $I$ to $V_i$);
\item Each $V_i$ has a \emph{minimum set} of size at most $S$ --- the minimum set of $V_i$ is the set of all elements of $V_i$ with no successors in $V_i$); and
\item There is no cyclic dependence among $V_1,\ldots,V_h$ --- where $V_i$ depends on $V_j$ if $(V_i \times V_j) \cap E \ne \emptyset$.
\end{enumerate}
We exploit $\mathcal{G}(\A,\B)$'s members' common structure to derive lower bounds on dominator and minimum set sizes.
\begin{lemma}
\label[lemma]{lem:HK1}
Consider any $(V,E) \in \mathcal{G}(\A,\B)$ and any $U \subseteq V$.
Let $D_U$ be any dominator set of $U$ and let $M_U$ be the minimum set of $U$.
\begin{align*}
|D_U| &\ge \max\big(|\{(i,k) : (\exists j)\, v_{ikj} \in U\}|,\,|\{(k,j) : (\exists i)\, v_{ikj} \in U\}|\big)\\
|M_U| &\ge |\{(i,j) : (\exists k)\, v_{ikj} \in U\}|
\end{align*}
\end{lemma}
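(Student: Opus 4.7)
The plan is to exploit the restricted structure of the DAG $(V,E) \in \mathcal{G}(\A,\B)$: the only predecessors of a multiplication vertex $v_{ikj}$ are the two input vertices $v^\A_{ik}$ and $v^\B_{kj}$, and the only successors of any vertex in a summation tree $T_{ij}$ lie inside $T_{ij}$ itself. Both bounds follow from these two observations.

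For the dominator bound, I would establish $|D_U| \ge |\{(i,k) : (\exists j)\, v_{ikj} \in U\}|$; the $\B$-analogue then follows by symmetric reasoning, and the maximum is immediate. Fix a pair $(i,k)$ with some $v_{ikj^\ast} \in U$. The path $v^\A_{ik} \to v_{ikj^\ast}$ consists of exactly these two vertices and goes from $I$ to $U$, so $D_U$ must contain $v^\A_{ik}$ or $v_{ikj^\ast}$. I would define a map $\phi(i,k)$ that returns $v^\A_{ik}$ when this vertex belongs to $D_U$, and otherwise returns some $v_{ikj^\ast}$ with $v_{ikj^\ast} \in U$ (which is then forced into $D_U$ by the preceding sentence). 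In either case, the subscripts of the image uniquely determine $(i,k)$, so $\phi$ is injective and $|D_U| \ge |\{(i,k) : (\exists j)\, v_{ikj} \in U\}|$.

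For the minimum-set bound, I would use the disjointness of the summation trees. For each pair $(i,j)$ with some $v_{ikj} \in U$, the intersection $U \cap V(T_{ij})$ is non-empty. Starting from any vertex in this intersection and repeatedly moving to an in-$U$ successor while one exists, I would obtain (after finitely many steps, since $T_{ij}$ is a finite DAG) a vertex $u^\ast_{ij} \in U \cap V(T_{ij})$ with no successor in $U \cap V(T_{ij})$. Because successors of any vertex of $T_{ij}$ lie entirely inside $T_{ij}$, this is the same as having no successor in $U$ at all, so $u^\ast_{ij} \in M_U$. The trees $T_{ij}$ for distinct $(i,j)$ are pairwise disjoint, so the witnesses $u^\ast_{ij}$ are distinct, yielding the claimed lower bound.

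The main delicacy in both arguments is that the relevant witness of $D_U$ or $M_U$ may itself be the vertex of $U$ one would naively want to "protect", rather than a strict ancestor or descendant. The $\A$-side dominator argument sidesteps this by letting $\phi$ choose between the input vertex and the multiplication vertex itself, noting that both image forms expose $(i,k)$ through their subscripts; the minimum-set argument sidesteps it by working inside each disjoint $T_{ij}$, where no successor of an in-$U$ vertex can escape to another summation tree or to an input. Everything else reduces to bookkeeping about the edge set $E$ given in the construction of $\mathcal{G}(\A,\B)$.
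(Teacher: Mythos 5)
Your proposal is correct and follows essentially the same route as the paper: the paper phrases the dominator bound via a family of vertex-disjoint two-vertex paths $v^\A_{ik}\to v_{ikj}$ (and symmetrically for $\B$), each of which must be hit by $D_U$, which is exactly your injective map $\phi$; and its minimum-set argument likewise picks, in each nonempty $U\cap T_{ij}$, an element with no successor in $U$, relying on the disjointness of the summation trees just as you do. No gaps.
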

\begin{proof} 
There exists a set of disjoint paths, each starting from a distinct element of $\{v^\A_{ik} : (\exists j)\, v_{ikj} \in U\} \subset I$ and ending in $U$.
Similarly, there exists a set of disjoint paths, each starting from a distinct element of $\{v^\B_{kj} : (\exists i)\, v_{ikj} \in U\} \subset I$ and ending in $U$.  
By definition, $D_U$ must contain at least one vertex from every path in each of the two sets, so the first lower bound follows.
Now observe that for each pair $(i,j)$ such that $v_{ikj} \in U$ for some $k \in [K]$, $T_{ij} \cap U$ is nonempty and, since the summation trees are disjoint, has at least one element with no successors in $U$.
Thus each such $T_{ij}$ contributes at least one element to $M_U$, establishing the second lower bound.
\end{proof}
We now relate the DAGs $\mathcal{G}(\A,\B)$ and the hypergraph $\HG(\A,\B)$.
\begin{definition}
\label[definition]{def:Wsets}
Consider any partition $\{\V_1,\ldots,\V_h\}$ of $\V$.
For each part $\V_i$, let $W^\A_i$, $W^\B_i$, and $W^\C_i$ be the subsets of $\N^\A$, $\N^\B$, and $\N^\C$, resp., having nonempty intersections with $\V_i$.
\end{definition}
\begin{lemma}
\label[lemma]{lem:HK2}
Consider any $G \in \mathcal{G}(\A,\B)$ and $S \in \{1,2,\ldots\}$.
For any $S$-partition $\{V_1,\ldots,V_h\}$ of $G$, there exists a partition $\{\V_1,\ldots,\V_h\}$ of $\V$ with $|W^\A_i|,|W^\B_i|,|W^\C_i| \le S$ for each $i \in [h]$. 
\end{lemma}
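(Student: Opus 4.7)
The plan is to pull back the given $S$-partition of $G$ to a partition of $\V$ and then verify each of the three cut-size bounds by exhibiting, for every cut net touching $\V_i$, a distinct element of either a dominator set or the minimum set of $V_i$.

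\emph{Step 1: construct the hypergraph partition.} Since $\V^\A\cup\V^\B\cup\V^\mm \subseteq V$, inherit part assignments by restriction: place such a vertex $v$ into $\V_\ell$ iff $v \in V_\ell$. The remaining hypergraph vertices $\V^\C$ do not appear in $V$; for each $(i,j)\in S_\C$, place $v^\C_{ij}$ into $\V_\ell$ where $\ell$ is the unique index with $\mathrm{root}(T_{ij}) \in V_\ell$. This yields a valid $h$-way partition of $\V$.

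\emph{Step 2: bounding $|W^\A_i|$ and $|W^\B_i|$.} The net $n^\A_{i'k}$ meets $\V_i$ iff either $v^\A_{i'k} \in V_i$ or some $v_{i'kj} \in V_i$. For each such $(i',k)$ construct a path $P_{i'k}$ in $G$ from an input to $V_i$: the single-vertex path $(v^\A_{i'k})$ when $v^\A_{i'k}\in V_i$, otherwise the length-one path $(v^\A_{i'k}, v_{i'kj})$ for some $j$ witnessing the second alternative. Distinct pairs $(i',k)$ produce vertex-disjoint paths, because the inputs $v^\A_{i'k}$ are pairwise distinct and any multiplication vertex $v_{i'kj}$ has $(i',k)$ determined by its first two indices. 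Any dominator set $D_{V_i}$ must intersect each $P_{i'k}$, so $|W^\A_i| \le |D_{V_i}| \le S$. The bound for $|W^\B_i|$ is symmetric.

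\emph{Step 3: bounding $|W^\C_i|$.} The net $n^\C_{i'j}$ meets $\V_i$ iff either $v^\C_{i'j} \in \V_i$ (equivalently, by construction, $\mathrm{root}(T_{i'j}) \in V_i$) or some $v_{i'kj} \in V_i$. In the first case the root, being an output of $G$, has no successors at all and therefore lies in $M_{V_i}$. In the second case $T_{i'j} \cap V_i$ is nonempty; pick a topologically maximal element $u$ of this intersection inside $T_{i'j}$. By the construction of $G$, every successor of $u$ lies inside $T_{i'j}$, and by maximality none of these successors is in $V_i$, so $u \in M_{V_i}$. Because the summation trees are vertex-disjoint, the witnesses produced for distinct $(i',j)$ are distinct elements of $M_{V_i}$, yielding $|W^\C_i| \le |M_{V_i}| \le S$.

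\emph{Main obstacle.} The delicate point is Step 2: the statement of Lemma HK1 counts only those $(i',k)$ witnessed by a multiplication vertex in $V_i$, whereas $W^\A_i$ also includes pairs where the input $v^\A_{i'k}$ itself lies in $V_i$ with no companion multiplication vertex. Allowing degenerate single-vertex paths in the disjoint-path argument is the small but essential refinement that covers this case; without it one could construct counterexamples involving isolated inputs. The third bound, by contrast, hinges entirely on the disjointness of the summation trees, which isolates each $(i',j)$'s contribution to $M_{V_i}$ inside its own $T_{i'j}$.
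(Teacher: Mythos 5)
Your proof is correct, but it routes around the paper's key construction rather than through it. The paper also begins by intersecting the DAG partition with the multiplication vertices $\V^\mm$, which by \cref{lem:HK1} already yields $|W'^\A_i|,|W'^\B_i|,|W'^\C_i|\le S$ for the resulting parts; the trick is then to place each nonzero vertex $v^\A_{ik}$, $v^\B_{kj}$, $v^\C_{ij}$ into an \emph{arbitrary part that already intersects its net}, so that no new nets become intersected (i.e., $W^\A_i = W'^\A_i$, etc.) and \cref{lem:HK1} applies verbatim. You instead place $\V^\A\cup\V^\B$ by inheritance from the DAG partition and $v^\C_{ij}$ with the root of $T_{ij}$, which can put a nonzero vertex into a part containing none of its net's multiplication vertices; as you correctly identify, this forces you to strengthen both inequalities of \cref{lem:HK1} --- degenerate single-vertex paths for inputs landing alone in a part, and the observation that a lone root is automatically in the minimum set. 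Both strengthenings are valid (an input belonging to $V_i$ must lie in any dominator set of $V_i$ under Hong and Kung's definition, and a root, being an output, has no successors at all), and your disjointness arguments --- distinct $(i,k)$ pairs give vertex-disjoint paths, and distinct summation trees give distinct minimum-set witnesses --- are sound, so the argument goes through. What the paper's placement buys is that \cref{lem:HK1} can be cited as stated with no inline re-derivation; what yours buys is a partition of $\V^\A\cup\V^\B$ that agrees with the DAG partition, which is more canonical but is not needed for the lemma.
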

\begin{proof}
We construct $\{\V_1,\ldots,\V_h\}$ from $\{V_1,\ldots,V_h\}$ as follows.
Define a partition $\{V'_1,\ldots,V'_h\}$ of $\V^\mm$ by intersecting each subset $V_1,\ldots,V_h$ with $\V^\mm$.
Every element of $\N$ has a nontrivial intersection with at least one subset $V'_1,\ldots,V'_h$.
Moreover, applying \cref{lem:HK1} to each subset $V'_i$, we have that $|W'^\A_i|,|W'^\B_i|,|W'^\C_i| \le S$.
Now add each $v^\A_{ik} \in \V^\A$ to an arbitrarily chosen $V'_i$ that intersects the corresponding net $n^\A_{ik}$, and likewise for each $v^\B_{kj} \in \V^\B$ and each $v^\C_{ij} \in \V^\C$.
In this manner, we obtain a partition $\{\V_1,\ldots,\V_h\}$ of $\V$.
Moreover, augmenting $V'_1,\ldots,V'_h$ with elements of $\V^\nz$ in this manner does not change the intersected nets, i.e., $W^\A_i = W'^\A_i$, $W^\B_i = W'^\B_i$, and $W^\C_i = W'^\C_i$ for each $(\V_i,V'_i)$ pair.
\end{proof}

\begin{lemma}
\label[lemma]{lem:HK-attain}
Consider any $S \in \{1,2,\ldots\}$ and partition $\{\V_1,\ldots,\V_h\}$ of $\V$ with $|W^\A_i|,|W^\B_i|,|W^\C_i| \le S$ for each $i \in [h]$.
There exists an algorithm with communication cost $O(Mh(S/M+1)^3)$.
\end{lemma}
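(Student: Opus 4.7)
The plan is to exhibit a blocked algorithm that sweeps the parts $\V_1,\ldots,\V_h$ in any order and, within each part, uses a 3D blocking of the involved $\A$-, $\B$-, and $\C$-nonzeros. The hypothesis says that the multiplications in $\V_i\cap\V^\mm$ touch at most $S$ distinct nonzeros of each matrix (one per net in $W^\A_i$, $W^\B_i$, $W^\C_i$), so executing a single part reduces to a sparsity-restricted instance of dense matrix multiplication of ``side length'' $S$.

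For each part $\V_i$, I would partition its (at most) $S$ involved $\A$-entries into $\lceil 3S/M\rceil=O(S/M+1)$ groups of size at most $M/3$, and likewise for the $\B$- and $\C$-entries. For every triple of groups $(G_\A,G_\B,G_\C)$, load $G_\A$, $G_\B$, and the current slow-memory values of $G_\C$ into fast memory (the three groups together fit in $M$ words), evaluate every $v_{ikj}\in\V_i$ with $a_{ik}\in G_\A$, $b_{kj}\in G_\B$, $c_{ij}\in G_\C$, accumulate these products into the in-memory copies of the relevant $c_{ij}$, and write $G_\C$ back to slow memory. The per-triple communication is $O(M)$, and there are $O((S/M+1)^3)$ triples per part, giving $O(M(S/M+1)^3)$ per part and $O(Mh(S/M+1)^3)$ after summing over $i\in[h]$.

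For correctness, every multiplication in $\V^\mm$ is performed exactly once because $\{\V_1,\ldots,\V_h\}$ partitions $\V$ and the triple enumeration within the containing part visits the unique $(G_\A,G_\B,G_\C)$ matching its indices. Each output $c_{ij}$ ends up with the correct sum because the per-triple read-modify-write of $G_\C$ uses associativity and commutativity of addition to aggregate contributions across triples and across parts, and this I/O traffic is already paid for by the $O(M)$ per-triple bound. The argument uses no dependence assumption beyond the partition itself, which is why the lemma does not require the acyclicity condition of an $S$-partition.

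I do not anticipate a real obstacle; the construction is a sparsity-aware rephrasing of the textbook 3D-blocked dense matrix multiplication algorithm, with the working set controlled by the $|W^\A_i|,|W^\B_i|,|W^\C_i|\le S$ hypothesis. The only mild subtlety is handling $S\le M$ uniformly with $S>M$, which is exactly why the bound is written with $(S/M+1)^3$ rather than $(S/M)^3$: when $S\le M$, a single triple per part suffices and yields the (loose but valid) $O(Mh)$ upper bound, while for $S>M$ the blocking gives the dominant $O(hS^3/M^2)$ term.
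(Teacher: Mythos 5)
Your proposal is correct and matches the paper's own proof essentially step for step: the paper likewise partitions each of $W^\A_i,W^\B_i,W^\C_i$ into groups of size $m=\lfloor M/3\rfloor$, forms the induced three-way ``blocks'' (your triples $(G_\A,G_\B,G_\C)$), and charges $O(M)$ loads/stores to each of the at most $h\lceil S/m\rceil^3$ blocks. Your added remarks on correctness of the accumulation and on not needing the acyclicity condition are sound but do not change the argument.
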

\begin{proof}
For each $i\in [h]$, partition $W^\A_i,W^\B_i,W^\C_i$ so that each part has size $m=\lfloor M/3 \rfloor$ except possibly for one smaller part.
These three partitions naturally induce a partition $\{\U_1,\ldots,\U_g\}$ of $\V$, whose parts are called \emph{blocks}.
Observe that $\{\U_1,\ldots,\U_g\}$ refines $\{\V_1,\ldots,\V_h\}$; in particular, each part $\V_i$ is partitioned into to at most $\lceil S/m \rceil^3$ blocks.
Thus, the total number of blocks $g \le h\lceil S/m \rceil^3$.

The algorithm considers $\U_j$ for each $j \in [g]$:
\begin{enumerate}
\item Load $\A$- and $\B$-matrix entries associated with $W^\A_{j}$ and $W^\B_{j}$, as well as any partial sums of $\C$-matrix entries $W^\C_{j}$ previously computed --- at most $3m$ loads.
\item Perform all possible nontrivial  multiplications and additions --- this can be done with no data movement by processing each $\C$-matrix entry in sequence.   
\item Store the updated $\C$-matrix entries --- at most $m$ stores.
\end{enumerate}
There are at most $4mg \le 4Mh(5S/M+1)^3/3$ loads and stores in total.
\end{proof}

Finally, we apply these results within Hong and Kung's communication lower bounds framework. 
\begin{theorem}
\label[theorem]{thm:HK}
The communication cost of an algorithm is at least $M(h-1)$, where $h$ is the minimum cardinality of a partition $\{\V_1,\ldots,\V_h\}$ of $\V$ with $|W^\A_i|,|W^\B_i|,|W^\C_i| \le 2M$ for each $i \in [h]$. 
When $h > 1$, this lower bound is tight up to a constant factor.
\end{theorem}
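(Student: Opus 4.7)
The plan is to combine Hong and Kung's classical $S$-partition lower bound on DAGs with the hypergraph--DAG correspondence already established by \cref{lem:HK1,lem:HK2,lem:HK-attain}. Every sequential \SpGEMM{} algorithm admitted by the modeling assumptions of \cref{sec:notation} amounts to executing some DAG $G \in \mathcal{G}(\A,\B)$: the order in which the algorithm sums the nontrivial products for each output entry $(i,j) \in S_\C$ determines the summation tree $T_{ij}$. So it suffices to bound the communication cost of executing an arbitrary $G \in \mathcal{G}(\A,\B)$ and then transfer the bound to the hypergraph side.

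For the lower bound, I would apply Hong and Kung's $S$-partition theorem with $S = 2M$ to the DAG $G$ executed by a given algorithm: its communication cost is at least $M(h_G - 1)$, where $h_G$ is the minimum cardinality of a $2M$-partition of $G$. Invoking \cref{lem:HK2} with $S = 2M$, any such partition of $G$ yields a partition $\{\V_1,\ldots,\V_{h_G}\}$ of $\V$ of the same cardinality satisfying $|W^\A_i|,|W^\B_i|,|W^\C_i| \le 2M$ for every $i$. Thus $h \le h_G$, and we obtain a lower bound of $M(h_G - 1) \ge M(h-1)$ that is uniform in $G$, so it applies to any \SpGEMM{} algorithm.

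For tightness when $h > 1$, I would take an optimal hypergraph partition $\{\V_1,\ldots,\V_h\}$ of $\V$ achieving the minimum and apply \cref{lem:HK-attain} with $S = 2M$. The resulting algorithm incurs $O(Mh \cdot (2M/M + 1)^3) = O(Mh)$ loads and stores; since $h \ge 2$ gives $h - 1 \ge h/2$, this is $O(M(h-1))$, matching the lower bound up to a constant factor.

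The main obstacle is not technical --- the lemmas have already done the heavy lifting --- but rather conceptual: one must justify that the family $\mathcal{G}(\A,\B)$ really captures every \SpGEMM{} algorithm in the model, so that Hong and Kung's DAG-level theorem applies universally. Once this is articulated, the theorem is essentially a packaging of \cref{lem:HK1,lem:HK2,lem:HK-attain} with Hong and Kung's framework.
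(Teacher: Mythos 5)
Your proposal is correct and follows essentially the same route as the paper: represent the algorithm by some $G \in \mathcal{G}(\A,\B)$, invoke Hong and Kung's $S$-partition lemma with $S=2M$, use \cref{lem:HK2} to bound the minimum $2M$-partition cardinality below by $h$, and obtain tightness from \cref{lem:HK-attain} with $S=2M$. Your added remarks --- spelling out why $h \le h_G$ gives a bound uniform over $G$, and why $h>1$ makes $O(Mh)=O(M(h-1))$ --- are just explicit versions of steps the paper leaves implicit.
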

\begin{proof}
Any algorithm can be represented in Hong and Kung's model by some $G \in \mathcal{G}(\A,\B)$.
By Hung and Kung's argument~\citeyear[Lem.~3.1]{HK81}, the communication cost is at least $M(P(2M)-1)$, where $P(S)$ denotes the minimum cardinality of an $S$-partition of $G$.
We apply \cref{lem:HK2} to bound $P(2M)$ below by $h$.
Tightness follows from applying \cref{lem:HK-attain} with $S=2M$.
\end{proof}
\cref{thm:HK} simplifies Hong and Kung's key lemma by requiring a single hypergraph partitioning problem, vs.\ $|\mathcal{G}(\A,\B)|$ graph partitioning problems.
Additionally, Hong and Kung did not address attainability of their key lemma.

We note that another of Hong and Kung's results~\citeyear[Thm.~6.1]{HK81} yields the lower bound $\Omega(|\V^\mm|/M^{1/2})$, a sequential analogue of the memory-dependent bound in \cref{eq:GLB}.
While this lower bound is attainable in the cases where $\A$ and $\B$ are both dense (see, e.g., \cite{BDHS11}), it is asymptotically loose in many sparse cases.
For example, if $I=J=K=n$ and $S_\A=S_\B=\{(i,i) : i \in [n]\}$, then any algorithm requires moving at least $3|\V^\mm|$ words between the two levels, the nonzero entries of $\A$, $\B$, and $\C$. 
However, a trivial lower bound of $|\V^\nz|$ also exists, under the assumption that fast memory must be empty before and after the computation.
It remains open to show whether or not the combination of this bound and the memory-dependent one is asymptotically loose.

We mention some other communication bounds that have appeared previously.
Pagh and St\"{o}ckel~\citeyear{PS14} proved matching worst-case upper and lower bounds based on the number of nonzeros in the input and output matrices.
That is, given a number of nonzeros, they showed that there exist input matrices that require a certain amount of communication; they also gave an algorithm that never requires more than a constant factor times this worst-case cost.
Greiner~\citeyear[Ch.~6]{Greiner12} considered restricted classes of algorithms and matrices and used a different approach to establish other worst-case lower and upper bounds, which are not always tight.
In contrast to these results, \cref{thm:HK} is input specific rather than worst case.

\section{Restricting \SpGEMM{} Algorithms}
\label{sec:simplify-HG}

Each \SpGEMM{} hypergraph (defined in \cref{sec:fg-model}) models a class of \SpGEMM{} algorithms via partitions of its vertices.
This hypergraph model can be simplified in the study of subclasses of algorithms and matrix nonzero structures. 
Here, we only consider the distributed-memory parallel case, as described in \cref{sec:parallelLB}.

In \cref{sec:coarsening}, we describe how to simplify \SpGEMM{} hypergraphs using a vertex coarsening technique while still correctly modeling algorithmic costs.
We use vertex coarsening to model subclasses of algorithms, restricting the parallelization (\cref{sec:restrict-par}), the data distribution (\cref{sec:restrict-distr}), and both (\cref{sec:restrict-alg}).
We can also use vertex coarsening when modeling subclasses of matrix nonzero structures, as we demonstrate in \cref{sec:SpMV} in the case of \emph{sparse matrix-vector multiplication} (\SpMV{}).
Lastly, in \cref{sec:generalize-HG}, we show how vertex coarsening can also be used to model \SpGEMM{}-like algorithms that violate two of the assumptions in \cref{sec:notation}, exploiting relations between the input matrix entries (\cref{sec:input-relations}) and only computing a subset of the output matrix entries (\cref{sec:output-sparsity}).

\subsection{Vertex Coarsening}
\label{sec:coarsening}

We can restrict the class of algorithms modeled by an \SpGEMM{} hypergraph by forcing certain subsets of vertices to be \emph{monochrome}, or all assigned to the same part, in the partition.
For example, we can enforce that a certain subset of multiplications will be performed by the same processor or that a particular  multiplication is performed by the processor that owns a corresponding nonzero entry.
To implement such restrictions in an \SpGEMM{} hypergraph, we \emph{coarsen} vertices, choosing a new vertex to represent each subset of \emph{constituent} vertices that must be monochrome. 
We will use the symbol ``$\sim$'' to mean ``represents'' in this sense.

After vertex coarsening, in order to model communication costs correctly, we define net membership as follows: a coarsened vertex is a member of a net if any of its constituent vertices was a member of that net.
This is because the computation corresponding to the coarsened vertex requires the matrix entries associated with any nets that contained a constituent vertex.

In order to model computational and memory costs correctly, the weights of a coarsened vertex equal the sum of the constituent vertices' weights.
This is because the processor assigned the coarsened vertex will perform the computation corresponding to all of the multiplication constituent vertices and will own all of the nonzero constituent vertices.

After coarsening the vertices and updating net memberships, there may be multiple \emph{coalesced} nets that contain identical sets of vertices, and there may be \emph{singleton} nets that contain only a single vertex.
To reduce the number of nets, we can combine coalesced nets into one coarsened net; the cost of the coarsened net equals the sum of the coalesced nets' costs.
Because singleton nets cannot be cut, we can omit them, a further simplification. 

\subsection{Restricted Parallelizations}
\label{sec:restrict-par}
We first apply vertex coarsening to the multiplication vertices $\V^\mm \subset \V$ of an \SpGEMM{} hypergraph $(\V,\N) = \HG(\A,\B)$, representing parallelizing computation across processors at a coarser granularity than individual  multiplications.
We represent each coarsened set by a new vertex $\hat v$, and let $\hat \V^\mm$ denote the set of these new $\hat v$ vertices. 
In this section, we leave the remaining vertices $\V^\nz$ untouched, so $\hat \V = \hat \V^\mm \cup \V^\nz$; we will address restricting data distribution subsequently in \cref{sec:restrict-distr,sec:restrict-alg}.
 Net memberships are updated in terms of the $\hat v$ vertices as described in \cref{sec:coarsening}, i.e., $\N$ induces a set $\hat \N$ of nets on $\hat \V$, thus obtaining the simplified hypergraph $(\hat \V,\hat \N)$.
Since each net contains a distinct nonzero vertex, no nets are coalesced/combined, so $|\hat \N| = |\N|$ and the nets' costs remain unit.
Additionally, the nonzero vertices are untouched, so the only weights that change are the computational weights of elements $\hat v$ of $\hat \V^\mm$: each $w_\comp(\hat v)$ equals the number of multiplication vertices it combined.   

The simplified hypergraph has fewer vertices than the original: $\V^\mm$ is replaced by $\hat \V^\mm$, whose cardinality is that of the given partition of $\V^\mm$. 
Additionally, while the number of nets is unchanged, the number of pins generally decreases.

In the remainder of \cref{sec:restrict-par}, we will study two types of coarsenings of $\V^\mm$: by slice and by fiber.
\Cref{fig:slicefiber} shows visualizations of each type of coarsening.

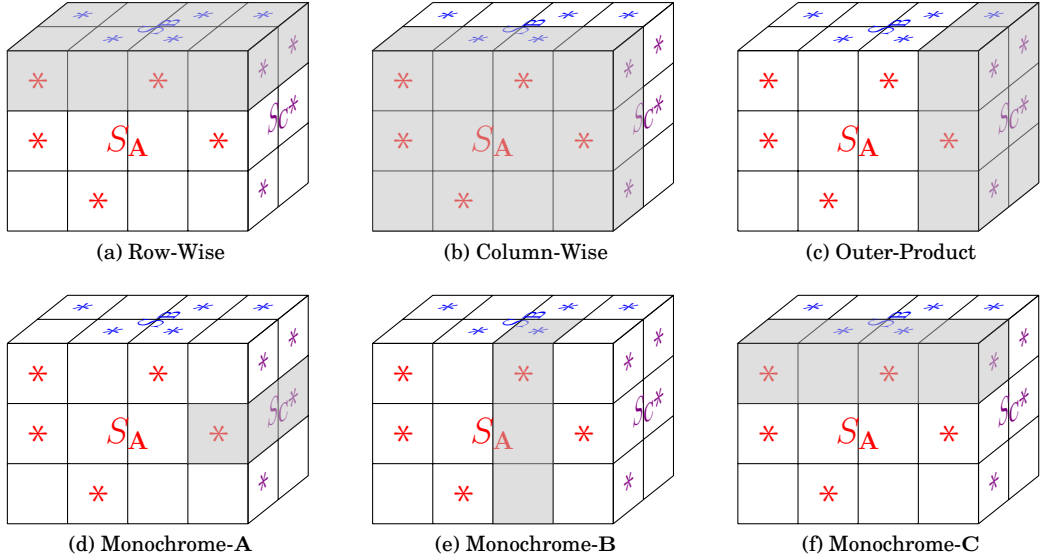
\begin{figure}
\newcommand{\sfscale}{.8}
\tikzstyle{vc} = [fill=gray!50,opacity=.5]

\begin{center}
\subfloat[][Row-Wise]{
	\begin{tikzpicture}[CCorient,scale=\sfscale]

	\compcube

	\begin{scope}[Aface]
		\draw[vc] (-.5,.5) rectangle (\matdimK-.5,-.5);
	\end{scope}
	\begin{scope}[Bface]
		\draw[vc] (-.5,.5) rectangle (\matdimJ-.5,-\matdimK+.5);
	\end{scope}
	\begin{scope}[Cface]
		\draw[vc] (-.5,.5) rectangle (\matdimJ-.5,-.5);
	\end{scope}

	\end{tikzpicture}
	\label{fig:sf:row}
} \hfill
\subfloat[][Column-Wise]{
	\begin{tikzpicture}[CCorient,scale=\sfscale]

	\compcube

	\begin{scope}[Aface]
		\draw[vc] (-.5,.5) rectangle (\matdimK-.5,-\matdimI+.5);
	\end{scope}
	\begin{scope}[Bface]
		\draw[vc] (-.5,.5) rectangle (.5,-\matdimK+.5);
	\end{scope}
	\begin{scope}[Cface]
		\draw[vc] (-.5,.5) rectangle (.5,-\matdimI+.5);
	\end{scope}

	\end{tikzpicture}
	\label{fig:sf:col}
} \hfill
\subfloat[][Outer-Product]{
	\begin{tikzpicture}[CCorient,scale=\sfscale]

	\compcube

	\begin{scope}[Aface]
		\draw[vc] (\matdimK-1.5,.5) rectangle (\matdimK-.5,-\matdimI+.5);
	\end{scope}
	\begin{scope}[Bface]
		\draw[vc] (-.5,-\matdimK+1.5) rectangle (\matdimJ-.5,-\matdimK+.5);
	\end{scope}
	\begin{scope}[Cface]
		\draw[vc] (-.5,.5) rectangle (\matdimJ-.5,-\matdimI+.5);
	\end{scope}

	\end{tikzpicture}
	\label{fig:sf:out}
} \\
\subfloat[][Monochrome-$\A$]{
	\begin{tikzpicture}[CCorient,scale=\sfscale]

	\compcube

	\begin{scope}[Aface]
		\draw[vc] (\matdimK-1.5,-.5) rectangle (\matdimK-.5,-1.5);
	\end{scope}
	\begin{scope}[Cface]
		\draw[vc] (-.5,-.5) rectangle (\matdimJ-.5,-1.5);
	\end{scope}

	\end{tikzpicture}
	\label{fig:sf:monoa}
} \hfill
\subfloat[][Monochrome-$\B$]{
	\begin{tikzpicture}[CCorient,scale=\sfscale]

	\compcube

	\begin{scope}[Aface]
		\draw[vc] (1.5,.5) rectangle (2.5,-\matdimI+.5);
	\end{scope}
	\begin{scope}[Bface]
		\draw[vc] (-.5,-1.5) rectangle (.5,-2.5);
	\end{scope}

	\end{tikzpicture}
	\label{fig:sf:monob}
} \hfill
\subfloat[][Monochrome-$\C$]{
	\begin{tikzpicture}[CCorient,scale=\sfscale]

	\compcube

	\begin{scope}[Aface]
		\draw[vc] (-.5,.5) rectangle (\matdimK-.5,-.5);
	\end{scope}
	\begin{scope}[Bface]
		\draw[vc] (-.5,.5) rectangle (.5,-\matdimK+.5);
	\end{scope}
	\begin{scope}[Cface]
		\draw[vc] (-.5,.5) rectangle (.5,-.5);
	\end{scope}

	\end{tikzpicture}
	\label{fig:sf:monoc}
}
\end{center}
\caption{Visualizations of restricted parallelizations.  The top row (\cref{fig:sf:row,fig:sf:col,fig:sf:out}) corresponds to slice-wise coarsenings or 1D parallelizations; the bottom row (\cref{fig:sf:monoa,fig:sf:monob,fig:sf:monoc}) corresponds to fiber-wise coarsenings or 2D parallelizations.  In each figure, a particular coarsened vertex is highlighted as a shaded set of cubes; the coarsened vertex corresponds to all multiplication vertices in the shaded set.}
\label{fig:slicefiber}
\end{figure}

Slice-wise coarsenings, which model what we call \emph{1D parallelizations}, coarsen the \cc{} by slices parallel to one of the three coordinate planes.
Coarsening by slices means we identify all $v_{ikj}$ with the same $i$, $j$, or $k$ index --- the choice of index gives rise to three 1D models, called \emph{row-wise}, \emph{column-wise}, and \emph{outer-product}: 
\[ \begin{array}{l r c l r c l}
\text{row-wise} & \hat \V^\mm & =  & \{ \hat v_i : i \in [I] \}, & \hat v_i & \sim & \{ v_{its} : (\exists t,s)\, v_{its} \in \V \} \\ 
\text{column-wise} & \hat \V^\mm & =  & \{ \hat v_j : i \in [J] \}, & \hat v_j & \sim & \{ v_{rtj} : (\exists r,t)\, v_{rtj} \in \V \} \\ 
\text{outer-product} & \hat \V^\mm & =  & \{ \hat v_k : k \in [K] \}, & \hat v_k & \sim & \{ v_{rks} : (\exists r,s)\, v_{rks} \in \V \}.
\end{array}\]

Fiber-wise coarsenings, which model what we call \emph{2D parallelizations}, coarsen the \cc{} by fibers parallel to one of the three coordinate axes.
Coarsening by fibers means we identify all $v_{ikj}$ with the same pair $(i,k)$, $(k,j)$, or $(i,j)$, of indices --- this choice gives rise to three 2D models, called \emph{monochrome-$\A$}, \emph{monochrome-$\B$}, and \emph{monochrome-$\C$}: 
\[ \begin{array}{l r c l r c l}
\text{monochrome-$\A$} & \hat \V^\mm & =  & \{ \hat v_{ik} : (i,k) \in S_\A \}, & \hat v_{ik} & \sim & \{ v_{iks} : (\exists s)\, v_{iks} \in \V \} \\ 
\text{monochrome-$\B$} & \hat \V^\mm & =  & \{ \hat v_{kj} : (k,j) \in S_\B \}, & \hat v_{kj} & \sim & \{ v_{rkj} : (\exists r)\, v_{rkj} \in \V \} \\ 
\text{monochrome-$\C$} & \hat \V^\mm & =  & \{ \hat v_{ij} : (i,j) \in S_\C \}, & \hat v_{ij} & \sim & \{ v_{itj} : (\exists t)\, v_{itj} \in \V \}.
\end{array}\]

Each model is identified with a set of parallelizations. 
In a row-wise parallelization, every $\B$-slice is monochrome, and the parallelization corresponds to a partition of the row-wise model; column-wise and outer-product parallelizations are defined similarly.
In a monochrome-$\A$ parallelization, every $\A$-fiber is monochrome, and the parallelization corresponds to a partition of the monochrome-$\A$ model; monochrome-$\B$ and -$\C$ parallelizations are defined similarly.
A general parallelization is called fine-grained.

We now discuss relationships among these seven classes of parallelizations, as illustrated in \cref{fig:venn}.
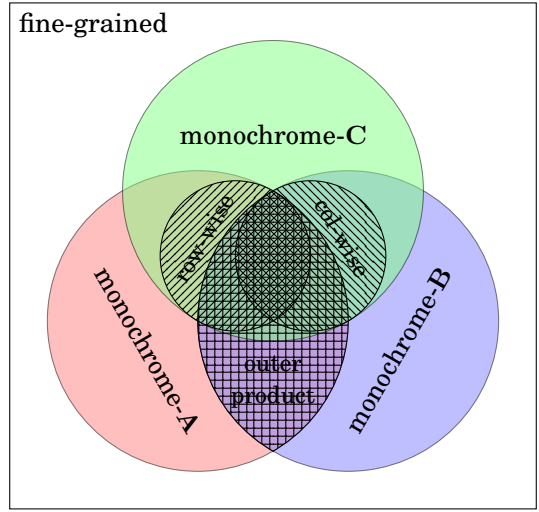
\begin{figure}
\newcommand{\sqrtthree}{1.732}
\newcommand{\fgborder}{.5}
\newcommand{\lblshift}{.5}

\begin{center}
\begin{tikzpicture}[]

	\draw (-3-\fgborder,-2-\fgborder) rectangle (3+\fgborder,2+\sqrtthree+\fgborder);
	\draw[fill=red!50,opacity=.5] (-1,0) circle (2); 
	\draw[fill=blue!50,opacity=.5] (1,0) circle (2);
	\draw[fill=green!50,opacity=.5] (0,\sqrtthree) circle (2);
	
	\begin{scope}
		\clip (0,-2) rectangle (3,2);
		\draw[pattern=grid] (-1,0) circle (2); 
	\end{scope} 
	\begin{scope}
		\clip (0,-2) rectangle (-3,2);
		\draw[pattern=grid] (1,0) circle (2); 
	\end{scope} 
	\draw[pattern=north east lines] (-.5,\sqrtthree/2) circle (1);
	\draw[pattern=north west lines] (.5,\sqrtthree/2) circle (1);

	\node[below right] at (-3-\fgborder,2+\sqrtthree+\fgborder) {fine-grained};
	\node[rotate=-60,below,shift={(0,-\lblshift)}] at (-1,0) {monochrome-$\A$};
	\node[rotate=60,below,shift={(0,-\lblshift)}] at (1,0) {monochrome-$\B$};
	\node[above,shift={(0,\lblshift)}] at (0,\sqrtthree) {monochrome-$\C$};
	\node[below,shift={(0,-\lblshift/2)}] at (0,0) {\small \begin{tabular}{c} outer \\ product \end{tabular}};
	\node[rotate=60,above,shift={(0,\lblshift/2)}] at (-.5,\sqrtthree/2) {\small row-wise};
	\node[rotate=-60,above,shift={(0,\lblshift/2)}] at (.5,\sqrtthree/2) {\small col-wise};
      
\end{tikzpicture}
\end{center}
\caption{Venn diagram showing relationships among restricted parallelizations.} %
\label{fig:venn}
\end{figure}
Let $\PF$ denote the set of all (fine-grained) parallelizations.
Let $\PR,\PL,\PU,\PA,\PB,\PC \subseteq \PF$ denote the row-wise, column-wise, outer-product, monochrome-$\A$, -$\B$, and -$\C$ parallelizations, respectively. 
\cref{fig:venn} depicts the inclusions that hold among these 7 sets.
Notice in particular that these 7 sets define a 13-way partition of $\PF$, whose parts are listed in the first column of \cref{tbl:venn};
we now verify that these 13 parts indeed form a partition of $\PF$. 
Observe that $\PR \subseteq \PA \cap \PC$, $\PL \subseteq \PB \cap \PC$, and $\PU \subseteq \PA \cap \PB$.
Actually, $\PU = \PA \cap \PB$.
To see the converse inclusion, consider any $P \in \PA\cap\PB$, and observe that every $\C$-slice must be monochrome in $P$;
this is because $\C$-slices are Cartesian products of their projections onto the $\A$- and $\B$-faces, but a similar property does not generally hold for the $\A$- and $\B$-slices.
The preceding inclusions suffice to demonstrate correctness of the 13-way partition of $\PF$ in \cref{fig:venn}.

\cref{tbl:venn} uses the following four \SpGEMM{} instances to show that each of the 13 parts is not empty.
\begin{align}
\label{eq:mm1}
\left[\begin{array}{cc}
\nzsymm & \nzsymm\\
\nzsymm & \nzsymm
\end{array}\right]&=\left[\begin{array}{cc}
\nzsymm & \nzsymm\\
\nzsymm & \nzsymm
\end{array}\right]\left[\begin{array}{cc}
\nzsymm & \nzsymm \\
\nzsymm & \nzsymm
\end{array}\right]\\
\label{eq:mm2}
\left[\begin{array}{cc}
\nzsymm & \nzsymm\\
\nzsymm & \nzsymm
\end{array}\right]&=\left[\begin{array}{cc}
\nzsymm \\
 & \nzsymm
\end{array}\right]\left[\begin{array}{cc}
\nzsymm & \nzsymm \\
\nzsymm & \nzsymm
\end{array}\right]\\
\label{eq:mm3}
\left[\begin{array}{cc}
\nzsymm & \nzsymm\\
\nzsymm & \nzsymm
\end{array}\right]&=\left[\begin{array}{cc}
\nzsymm & \nzsymm\\
\nzsymm & \nzsymm
\end{array}\right]\left[\begin{array}{cc}
\nzsymm \\
& \nzsymm
\end{array}\right]\\
\label{eq:mm4}
\left[\begin{array}{cc}
\nzsymm & \nzsymm\\
\nzsymm & \nzsymm
\end{array}\right]&=\left[\begin{array}{cccc}
\nzsymm &  & \nzsymm\\
 & \nzsymm &  & \nzsymm
\end{array}\right]\left[\begin{array}{cc}
\nzsymm\\
\nzsymm\\
 & \nzsymm\\
 & \nzsymm
\end{array}\right]
\end{align}

\begin{table}%
\tbl{%
\label{tbl:venn}%
Examples showing that each of the 13 parts in \cref{fig:venn} is nonempty.
The ``finest'' parallelization is where each nontrivial multiplication is assigned to a distinct processor.
A parallelization ``by fiber'' means that each fiber (of a given type) is assigned to a distinct processor.
A parallelization ``by slice'' means that each slice (of a given type) is assigned to a distinct processor.
The ``coarsest'' parallelization is where all nontrivial multiplications are assigned to the same processor.
}%
{%
\begin{tabular}{ c c c }
\toprule
Part & \SpGEMM{} Instance & Parallelization \\
\midrule
$\PF \setminus (\PA \cup \PB \cup \PC)$ & \cref{eq:mm1} & finest  \\
$\PA \setminus (\PB \cup \PC)$ & \cref{eq:mm1} & by $\A$-fiber \\
$\PB \setminus (\PA \cup \PC)$ & \cref{eq:mm1} & by $\B$-fiber \\
$\PC \setminus (\PA \cup \PB)$ & \cref{eq:mm1} & by $\C$-fiber  \\
$((\PB \cap \PC) \setminus \PA)\cap \PL$ & \cref{eq:mm1} & by $\A$-slice \\
$((\PA \cap \PC) \setminus \PB)\cap \PR$ & \cref{eq:mm1} & by $\B$-slice \\
$(\PA \cap \PB)\setminus \PC$ &\cref{eq:mm1}  & by $\C$-slice \\
$\PA \cap \PB \cap \PC \cap \PR \cap \PL$ & \cref{eq:mm1} & coarsest \\
$((\PB \cap \PC) \setminus \PA)\setminus \PL$ & \cref{eq:mm2} & finest \\
$(\PA \cap \PB \cap \PC \cap \PR) \setminus \PL$ & \cref{eq:mm2} & by $\A$-fiber \\
$((\PA \cap \PC) \setminus \PB)\setminus \PR$ & \cref{eq:mm3} & finest \\
$(\PA \cap \PB \cap \PC \cap \PL) \setminus \PR$ & \cref{eq:mm3} & by $\B$-fiber \\
$(\PA \cap \PB \cap \PC) \setminus (\PR \cup \PL)$ & \cref{eq:mm4} & finest \\
\bottomrule
\end{tabular}
}%
\end{table}

\subsection{Restricted Data Distributions}
\label{sec:restrict-distr}

Now we apply vertex coarsening to the nonzero vertices $\V^\nz \subset \V$ of an \SpGEMM{} hypergraph $(\V,\N) = \HG(\A,\B)$, representing distributing data across processors at a coarser granularity than individual matrix entries. 
We represent each coarsened set by a new vertex $\hat v$, and let $\hat \V^\nz$ be the set of these new $\hat v$ vertices. 
In this section, we leave the remaining vertices $\V^\mm$ untouched, so $\hat \V = \V^\mm \cup \hat \V^\nz$; we addressed parallelizations previously in \cref{sec:restrict-par} and will synthesize the two approaches subsequently in \cref{sec:restrict-alg}.
Net memberships are updated in terms of the $\hat v$ vertices, i.e., $\N$ induces a set $\hat \N$ of nets on $\hat \V$, thus obtaining the simplified hypergraph $(\hat \V,\hat \N)$.
Now, it is possible that some nets are coalesced, and thus are combined in $\hat \N$; the cost of each net in $\hat \N$ is the number of coalesced nets it combines.  
Additionally, the multiplication vertices are untouched, so the only weights that change are the memory weights of elements $\hat v$ of $\hat \V^\nz$: each $w_\mem(\hat v)$ equals the number of nonzero vertices it combines.   

The simplified hypergraph has fewer vertices than the original: $\V^\nz$ is replaced by $\hat \V^\nz$, whose cardinality is that of the given partition of $\V^\nz$. 

In the remainder of \cref{sec:restrict-par}, we will study two classes of data distributions, \emph{row-wise} and \emph{column-wise}, meaning the matrix nonzeros are distributed by rows or columns.
We coarsen $\V^\A$, $\V^\B$, and $\V^\C$ separately, leading to sets of coarsened vertices $\hat \V^\A$, $\hat \V^\B$, and $\hat \V^\C$:
\[ \begin{array}{l r c l r c l}
\text{$\A$ row-wise} & \hat \V^\A & =  & \{ \hat v^\A_i : i \in [I] \}, & \hat v^\A_i & \sim & \{ v^\A_{it} : (\exists t)\, v^\A_{it} \in \V \} \\ 
\text{$\A$ column-wise} & \hat \V^\A & =  & \{ \hat v^\A_k : k \in [K] \}, & \hat v^\A_k & \sim & \{ v^\A_{rk} : (\exists r)\, v^\A_{rk} \in \V \} \\ 
\text{$\B$ row-wise} & \hat \V^\B & =  & \{ \hat v^\B_k : k \in [K] \}, & \hat v^\B_k & \sim & \{ v^\B_{ks} : (\exists s)\, v^\B_{ks} \in \V \} \\ 
\text{$\B$ column-wise} & \hat \V^\B & =  & \{ \hat v^\B_j : j \in [J] \}, & \hat v^\B_j & \sim & \{ v^\B_{tj} : (\exists t)\, v^\B_{tj} \in \V  \} \\ 
\text{$\C$ row-wise} & \hat \V^\C & =  & \{ \hat v^\C_i : i \in [I] \}, & \hat v^\C_i & \sim & \{ v^\C_{is} : (\exists s)\, v^\C_{is} \in \V  \} \\ 
\text{$\C$ column-wise} & \hat \V^\C & =  & \{ \hat v^\C_j : j \in [J] \}, & \hat v^\C_j & \sim & \{ v^\C_{rj} : (\exists r)\, v^\C_{rj} \in \V \}.
\end{array}\]
The unrestricted choices $\hat \V^\A = \V^\A$, $\hat \V^\B = \V^\B$, and $\hat \V^\C = \V^\C$, are called \emph{fine-grained}.
Having picked a distribution for each of $\A$, $\B$, and $\C$, we assemble $\hat \V^\nz = \hat \V^\A \cup \hat \V^\V \cup \hat \V^\C$.

We remark here that some choices of these distributions naturally correspond to the 1D parallelization schemes.
That is, a row-wise parallelization corresponds to a matching distribution of the rows of $\A$ and $\C$, a column-wise parallelization corresponds to a matching distribution of the columns of $\B$ and $\C$, and an outer-product parallelization corresponds to a matching distribution of the columns of $\A$ and rows of $\B$.
We revisit this connection next, in \cref{sec:restrict-alg}, when we consider pairing parallelizations and data distributions. 

\subsection{Restricted Algorithms}
\label{sec:restrict-alg}
In \cref{sec:restrict-par,sec:restrict-distr}, we considered vertex coarsening using partitions of $\V^\mm$ or $\V^\nz$.
Now we extend this idea to partitions of $\V$ whose parts may combine vertices from both $\V^\mm$ and $\V^\nz$.
This models the algorithmic constraint where a processor who performs a certain set of multiplications must also store a certain set of matrix entries.

Revisiting the parallelizations in \cref{sec:restrict-par}, when coarsening $\V^\mm$ by fiber, there is a natural choice of a nonzero vertex to associate with each fiber; likewise, when coarsening $\V^\mm$ by slice, there is a natural set of nonzero vertices to associate with each slice.
More generally, if all the multiplication vertices associated with a particular set of nonzero vertices are to be coarsened, it is natural to include those nonzero vertices.
(However, this assignment does impose additional constraints on load balance, which may be unnatural in certain applications.)
An expression $(xyz)$, where $x,y,z \in \{\text{r,c,f,R,C,F}\}$, indicates the data distributions (row-wise, column-wise, or fine-grained) used for $\A$, $\B$, and $\C$. Capital letters indicate that the data distribution is a natural choice, in the sense just described, and the corresponding nonzero vertices have been absorbed into coarsened vertices. 
We highlight the following coarsenings:
\[\begin{array}{l r c l r c l}
\text{row-wise (RfR):} & \hat \V & = & \bigcup\left\{ \begin{gathered} \{ \hat v_i : i \in [I] \} ,\\ \V^\B\end{gathered}\right\}, & \hat v_i & \sim & \bigcup\left\{\begin{gathered} \{ v_{its} : (\exists t,s)\, v_{its} \in \V\} ,\\ \{ v^\A_{it} : (\exists t)\, v^\A_{it} \in \V\} ,\\ \{ v^\C_{is} : (\exists s)\, v^\C_{is} \in \V \}\end{gathered}\right\}
\\
\text{column-wise (fCC):} & \hat \V  & = & \bigcup\left\{\begin{gathered} \{ \hat v_j : j \in [J] \} ,\\ \V^\A\end{gathered}\right\}, & \hat v_j & \sim & \bigcup\left\{\begin{gathered} \{ v_{rtj} : (\exists r,t)\, v_{rtj} \in \V\} ,\\ \{ v^\B_{tj} : (\exists t)\, v^\B_{tj} \in \V\} ,\\ \{ v^\C_{rj} : (\exists r)\, v^\C_{rj} \in \V \}\end{gathered}\right\} 
\\
\text{outer-product (CRf):} & \hat \V & = & \bigcup\left\{\begin{gathered} \{ \hat v_k : k \in [K] \} ,\\ \V^\C\end{gathered}\right\}, & \hat v_k & \sim & \bigcup\left\{\begin{gathered}\{ v_{rks} : (\exists r,s)\, v_{rks} \in \V\} ,\\ \{ v^\A_{rk} : (\exists r)\, v^\A_{rk} \in \V\} ,\\ \{ v^\B_{ks} : (\exists s)\, v^\B_{ks} \in \V \}\end{gathered}\right\} 
\end{array}\]
\[\begin{array}{l r c l r c l}
\text{monochrome-$\A$ (Fff):} & \hat \V  & = & \bigcup\left\{\begin{gathered} \{ \hat v_{ik} : (i,k) \in S_\A \} ,\\ \V^\B ,\; \V^\C\end{gathered}\right\}, & \hat v_{ik} & \sim &
\bigcup\left\{\begin{gathered} \{ v_{iks} : (\exists s)\, v_{iks} \in \V\} ,\\ \{ v^\A_{ik} \} \end{gathered}\right\} 
\\
\text{monochrome-$\B$ (fFf):} & \hat \V  & = & \bigcup\left\{\begin{gathered} \{ \hat v_{kj} : (k,j) \in S_\B \} ,\\ \V^\A ,\; \V^\C\end{gathered}\right\}, & \hat v_{kj} & \sim &
\bigcup\left\{\begin{gathered}\{ v_{rkj} : (\exists r)\, v_{rkj} \in \V\} ,\\ \{ v^\B_{kj} \} \end{gathered}\right\}
\\
\text{monochrome-$\C$ (ffF):} & \hat \V & = & \bigcup\left\{\begin{gathered}\{ \hat v_{ij} : (i,j) \in S_\C \} ,\\ \V^\A ,\; \V^\B\end{gathered}\right\}, & \hat v_{ij} & \sim &
\bigcup\left\{\begin{gathered}\{ v_{itj} : (\exists t)\, v_{itj} \in \V\} ,\\ \{ v^\C_{ij} \}\end{gathered}\right\}
\end{array}\]
The coarsened vertices now have both nonzero computational and memory weights, indicating the number of multiplication and nonzero vertices, respectively, that they combined.
Note that the nets corresponding to the coarsened nonzero vertices become singletons, and are therefore omitted from the simplified hypergraph --- in the six cases above, the nets induced by $\N^\A \cup \N^\C$, $\N^\B\cup\N^\C$, $\N^\A\cup\N^\B$, $\N^\A$, $\N^\B$, and $\N^\C$, respectively, are omitted from $\hat \N$.

Further simplifications are possible, but perhaps less natural.
For instance, in the row-wise and monochrome-$\A$ cases, it is arguably natural to coarsen $\V^\B$ along rows of $\B$: every entry in a row of $\B$ is involved in  multiplications with the same entries of $\A$. 
After this coarsening, the nets induced by $\N^\B$ are coalesced along each row of $\B$, and so are combined into a single net whose cost equals the number of coalesced nets (also the sum of their costs).
Similarly, in the column-wise and monochrome-$\B$ cases, it is arguably natural to coarsen $\V^\A$ along columns of $\A$: every entry in a column of $\A$ multiplies the same entries of $\B$; the nets induced by $\N^\A$ are coalesced/combined along column of $\A$ as before.

These (arguably) natural simplifications don't always coarsen the nonzeros of all three matrices.
In particular, in the outer-product, monochrome-$\A$, and monochrome-$\B$ cases, there is no natural choice of rows or columns of $\C$, and in the monochrome-$\C$ case, there is no natural choice of rows or columns of $\A$ or $\B$.
Of course, nonzero vertex coarsening may be dictated by the application.   
For example, $\C$ may need to be distributed in a certain way for use after the \SpGEMM{}.
Or, in the row-wise case, if $\A$ and $\C$ are to be distributed row-wise, then additionally distributing $\B$ row-wise may be simplest from a software-design standpoint (and similarly distributing $\A$ column-wise in the column-wise case).

We have presented several restricted classes of algorithms that enable simplifications to the \SpGEMM{} hypergraph. For concreteness, we conclude with four self-contained examples, stated similarly to \cref{def:hp-spgemm}.
(We continue our simplifying assumption that $\A$ and $\B$ have no zero rows or columns.)

\begin{example}[Row-wise (RrR)]
\label[example]{ex:row}
Our first example pairs a row-wise parallelization with row-wise distributions of $\A$, $\B$, and $\C$, where the distributions of $\A$ and $\C$ are matched with the parallelization while the distribution of $\B$ is not.
\begin{gather*}
\begin{aligned}
\V &= \{ v_i : i \in [I] \} \cup \{v^\B_k : k \in [K] \}, \\
\N &= \{ n^\B_k = \{ v_i : (\exists i) \, (i,k) \in S_\A \}\cup\{ v^\B_k\} : k \in [K] \}, \\
w_\comp(v_i) &= |\{ (i,k,j) : (\exists k)\, ((i,k) \in S_\A \wedge (\exists j)\, (k,j) \in S_\B) \}|, \\
w_\mem(v_i) &= |\{ (i,k) : (\exists k)\, (i,k) \in S_\A\}| + |\{ (i,j) : (\exists j)\, (i,j) \in S_\C \}|, \\
\end{aligned} \\
\begin{aligned}
w_\comp(v^\B_k) &= 0, &
w_\mem(v^\B_k) = c(n^\B_k) &= |\{ (k,j) : (\exists j)\, (k,j) \in S_\B \}|. 
\end{aligned}
\end{gather*}
We have that $|\V| = I + K$ and $|\N| = K$, where each net has between $2$ and $I+1$ pins.
Note that computing the weights $v_\mem(v_i)$ requires determining $S_\C$. 
\end{example}

\begin{example}[Outer-product (CRf)] 
\label[example]{ex:outer}
Our second example pairs an outer-product parallelization with a column-wise distribution of $\A$, a row-wise distribution of $\B$, and a fine-grained distribution of $\C$, where the distributions of $\A$ and $\B$ are matched with the parallelization while the distribution of $\C$ is not.
\begin{gather*}
\begin{aligned}
\V &= \{ v_k : k \in [K] \} \cup \{v^\C_{ij} : (i,j) \in S_\C \}, \\
\N &= \{ n^\C_{ij} = \{ v_k : (\exists k) \, (i,k) \in S_\A \wedge (k,j) \in S_\B \}\cup\{ v^\C_{ij}\} : (i,j) \in S_\C \}, \\
w_\comp(v_k) &= |\{ i : (\exists k) \, (i,k) \in S_\A \}| \cdot |\{ j : (\exists k)\, (k,j) \in S_\B \}|, \\
w_\mem(v_k) &= |\{ i : (\exists k)\, (i,k) \in S_\A\}| + |\{ j : (\exists k)\, (k,j) \in S_\B \}|, \\
\end{aligned} \\
\begin{aligned}
w_\comp(v^\C_{ij}) &= 0, &
w_\mem(v^\C_{ij}) = c(n^\C_{ij}) &= 1.
\end{aligned}
\end{gather*}
We have that $|\V| = K + |S_\C|$ and $|\N| = S_\C$, where each net has between $2$ and $K+1$ pins.
Note that constructing $\V$ and $\N$ requires determining $S_\C$. 
A closely related hypergraph model for this example and two specializations (CRr and CRc) was previously studied~\cite{AA14}.
\end{example}

\begin{example}[Monochrome-$\A$ (Frf)]
\label[example]{ex:monoa}
Our third example pairs a monochrome-$\A$ parallelization with a row-wise distribution of $\B$ and fine-grained distributions of $\A$ and $\C$, where the distribution of $\A$ is matched with the parallelization while the distributions of $\B$ and $\C$ are not.
\begin{gather*}
\begin{aligned}
\V &= \{ v_{ik} : (i,k) \in S_\A \} \cup \{v^\B_k : k \in [K] \} \cup \{v^\C_{ij} : (i,j) \in S_\C\}, \\
\N &= \bigcup\left\{ \begin{gathered}
\{ n^\B_k = \{ v_{ik} : (\exists i) \, (i,k) \in S_\A \}\cup\{ v^\B_k\} : k \in [K] \} ,\\ 
\{ n^\C_{ij} = \{ v_{ik} : (\exists k) \, (i,k) \in S_\A \wedge (k,j) \in S_\B \} \cup \{ v^\C_{ij} \} : (i,j) \in S_\C \}
\end{gathered}\right\},
\end{aligned} \\
w_\comp(v_{ik}) = w_\mem(v^\B_k) = c(n^\B_k) = |\{ j : (\exists j) \, (k,j) \in S_\B \}|, \\
w_\mem(v_{ik}) = w_\mem(v^\C_{ij}) = c(n^\C_{ij})  = 1 ,\qquad
w_\comp(v^\B_k) = w_\comp(v^\C_{ij}) = 0.
\end{gather*}
We have that $|\V| = |S_\A| + K + |S_\C|$ and $|\N| = K + |S_\C|$, where each of the $K$ nets $n^\B_k$ has between $2$ and $I+1$ pins, and each of the $|S_\C|$ nets $n^\C_{ij}$ has between $2$ and $K+1$ pins.
Note that constructing $\V$ and $\N$ requires determining $S_\C$.
\end{example}

\begin{example}[Monochrome-$\C$ (ffF)] 
\label[example]{ex:monoc}
Our fourth example pairs a monochrome-$\C$ parallelization with fine-grained distributions of $\A$, $\B$, and $\C$, where the distribution of $\C$ is matched with the parallelization while the distributions of $\A$ and $\B$ are not.
\begin{gather*}
\begin{aligned}
\V &= \{ v_{ij} : (i,j) \in S_\C \} \cup \{v^\A_{ik} : (i,k) \in S_\A  \} \cup \{v^\B_{kj} : (k,j) \in S_\B\}, \\
\N &= \bigcup\left\{ \begin{gathered}
\{ n^\A_{ik} = \{ v_{ij} : (\exists j) \, (k,j) \in S_\B \}\cup\{ v^\A_{ik}\} : (i,k) \in S_\A \} ,\\ 
\{ n^\B_{kj} = \{ v_{ij} : (\exists i) \, (i,k) \in S_\A \} \cup \{ v^\B_{kj} \} : (k,j) \in S_\B \}
\end{gathered}\right\},
\end{aligned}\\
w_\comp(v_{ij}) = |\{ k : (\exists k) \, (i,k) \in S_\A \wedge (k,j) \in S_\B \}|, \\
w_\mem(v_{ij}) = w_\mem(v^\A_{ik}) = w_\mem(v^\B_{kj}) = c(n^\A_{ik}) = c(n^\B_{kj}) = 1, \\
w_\comp(v^\A_{ik}) = w_\comp(v^\B_{kj}) = 0.
\end{gather*}
We have that $|\V| = |S_\A| + |S_\B| + |S_\C|$ and $|\N| = |S_\A| + |S_\B|$, where each of the $|S_\A|$ nets $n^\A_{ik}$ has between $2$ and $J+1$ pins, and each of the $|S_\B|$ nets $n^\B_{kj}$ has between $2$ and $I+1$ pins.
Note that constructing $\V$ requires determining $S_\C$.
\end{example}

In each of these four examples, the simplified hypergraph is potentially much smaller than the original (fine-grained) hypergraph in terms of the numbers of vertices, nets, or pins.
Our experiments in \cref{sec:expt} using the PaToH hypergraph partitioning software demonstrated orders-of-magnitude improvements in runtime when partitioning simplified (vs.\ fine-grained) hypergraphs. 

We caution that constructing the simplified hypergraphs in these four examples requires determining $S_\C$, which can be as expensive as computing $\C$. 
This cost can be avoided, e.g., in the row-wise (RrR) case (\cref{ex:row}), if we omit memory weights.

\subsection{Sparse Matrix-Vector Multiplication (\SpMV{})}
\label{sec:SpMV}
In the case of multiplying a sparse matrix by a dense vector (\SpMV{}), our hypergraph model simplifies in several ways.
If the input matrix $\B$ is a dense vector of length $K$, then the output matrix $\C$ is a dense vector of length $I$, since $\A$ has a nonzero in every row under the assumptions of \cref{sec:notation}.
In this case, there is only one multiplication performed for each nonzero of $\A$, so the multiplication vertices of $\V$ can be indexed by only two parameters, letting us write $\V^\mm=\{v_{ik} : (i,k) \in S_\A\}$. 
Furthermore, the $\B$ and $\C$ nonzero vertices can be indexed by only one parameter and simplify to $\V^\B=\{v_{k}^\B : k \in [K]\}$ and $\V^\C=\{v_{i}^\C : i \in [I]\}$.

In addition to the simplifications based on the inputs of \SpMV{}, we can also apply vertex coarsening, restricting the set of algorithms, to reproduce the ``fine-grain'' model for \SpMV{} of \c{C}ataly\"{u}rek and Aykanat~\citeyear{CA01a}; this model assumes that $I=K$, i.e., $\A$ is a square matrix.
There are three steps in the following derivation.

First, because each nonzero of $\A$ is involved in exactly one multiplication, we force $v_{ik}$ and $v_{ik}^\A$ to be monochrome for each $(i,k) \in S_\A$.
(This corresponds to monochrome-$\A$ (Fff), discussed in \cref{sec:restrict-alg}.)
We use the notation $\hat v_{ik}$ to refer to the coarsened vertex, which has weights $w_\comp(\hat v_{ik})=1$ and $w_\mem(\hat v_{ik})=1$.
Note that this coarsening implies that all nets in $\N^\A$ have exactly one vertex, and thus we omit them.

Second, we apply further coarsening to attain a symmetric partitioning of the input and output vector entries.
We do this to satisfy the ``consistency condition''~\cite{CA01a}, motivated by the problem of performing repeated \SpMV{} operations with the same matrix.
For every $i \in [I] = [K]$, we force $v_i^\B$, $v_i^\C$, and $\hat v_{ii}$ (if it exists) to be monochrome.
This corresponds to assigning vector entries to the processor that owns the corresponding diagonal entry of the matrix.
We maintain the notation $\hat v_{ii}$ to reference the coarsened vertex, which has weights $w_\comp(\hat v_{ii})=1$ and $w_\mem(\hat v_{ii})=3$, if $(i,i)\in S_\A$, and $w_\comp(\hat v_{ii})=0$ and $w_\mem(\hat v_{ii})=2$, if $(i,i) \not\in S_\A$.

Third, because the fine-grain SpMV model does not explicitly enforce memory balance, we drop the memory balance constraint on partitions by setting $\delta=p-1$.
Even though we removed the memory balance constraints, computational balance in this model guarantees a balanced partition of the nonzeros of $\A$. 
However, the memory allocated to the input and output vectors is not accounted for, and so the overall memory imbalance can potentially be higher than the computational imbalance.

The preceding derivation reproduces the ``fine-grain'' model for SpMV~\cite{CA01a}, which consists of a vertex for every nonzero in the matrix (and a zero-weight ``dummy'' vertex for every zero diagonal entry) and a net for each row and each column.
Similar simplifications to 1D \SpGEMM{} algorithms (see \cref{sec:restrict-alg}) yield row-wise and column-wise \SpMV{} algorithms~\cite{CA99}.
In particular, the row-wise (RrR) hypergraph (\cref{ex:row}) is identical to the ``column-net'' \SpMV{} hypergraph (modeling a row-wise algorithm), except for the presence of the $\B$-nonzero vertices and the memory weights.
Similarly, the outer-product (CRf) hypergraph (\cref{ex:outer}) is identical to the ``row-net'' \SpMV{} hypergraph (modeling a column-wise algorithm), except for the presence of the $\C$-nonzero vertices and the memory weights. 
On the other hand, the column-wise \SpGEMM{} parallelizations applied to \SpMV{} have no analogue in \c{C}ataly\"{u}rek and Aykanat's models~\citeyear{CA99,CA01a} since there is no parallelism. 
Additionally, the monochrome-$\A$, -$\B$, and -$\C$ \SpGEMM{} parallelizations, in the case of \SpMV{}, correspond to the aforementioned ``fine-grain'', ``row-net'', and ``column-net'' \SpMV{} hypergraphs. 

\subsection{Generalizing \SpGEMM{} Algorithms}
\label{sec:generalize-HG}

In this section, we explore two directions in which the class of \SpGEMM{} algorithms (defined in \cref{sec:model}) can be generalized.
The first generalization exploits relations among the input matrices' entries to reduce algorithmic costs --- recall that \cref{sec:notation} assumed $\A$- and $\B$-entries are unrelated.
A common application is exploiting symmetry of $\A$, $\B$, or $\C$.
The second generalization considers the case where only a subset of the output entries are desired, a task called \emph{masked} \SpGEMM{} --- recall that \cref{sec:notation} assumed all nonzero $\C$-entries are computed.

We show that both classes of algorithms can be modeled by straightforward simplifications to the \SpGEMM{} hypergraph (defined in \cref{sec:fg-model}).
In particular, we apply vertex coarsening similarly to as in \cref{sec:coarsening}, except here we adjust vertex weights and net costs in a simpler manner. 
The hypergraph-based communication bounds in \cref{sec:parallelLB,sec:sequentialLB}, as well as the restrictions studied in the previous subsections of \cref{sec:simplify-HG}, extend to both these classes of algorithms with only minor changes required to address the vertex coarsening.

\subsubsection{Exploiting Input Matrix Relations}
\label{sec:input-relations}
We consider modifying \SpGEMM{} algorithms to exploit known equivalence relations on the nonzero entries of $\A$ and $\B$ to reduce algorithmic costs.
For simplicity, we will consider equality relations: e.g., if $\A = \A^T$, then nonzeros in the upper triangle of $\A$ equal the corresponding entries in the lower triangle of $\A$. 
However, the following approach extends to cases like $\A=-\A^T$, or more generally to the case where each equivalence class is assigned a single value and each nonzero in that equivalence class is a function of that value. 

Consider any equality relation on the nonzeros of $\A$ and $\B$. 
We first consider \SpGEMM{}-like algorithms that store exactly one copy of repeated $\A$- and $\B$-entries, but still perform all nontrivial multiplications and compute all nonzero $\C$-entries, regardless of whether these values also include repetitions.
Such algorithms can be naturally modeled by altering the \SpGEMM{} hypergraph $(\V,\N) = \HG(\A,\B)$ according to a given partition of $\V^\A \cup \V^\B$, where each part is a set of nonzero entries that must have the same value.
For example, if an algorithm only inputs the unique entries of $\A = \A^T$, then the partition groups each pair of distinct vertices $v^\A_{ik}$ and $v^\A_{ki}$ while keeping the nonzero vertices associated with the diagonal of $\A$, as well as all of $\V^\B$, in singleton parts.
Now coarsen these vertex sets, following the approach in \cref{sec:coarsening} except setting the memory costs of the coarsened vertices to 1, rather than the number of coarsened vertices. 
Since only nonzero vertices are combined, no nets are coalesced by this coarsening.

We next extend the preceding class of \SpGEMM{}-like algorithms to avoid performing redundant nontrivial multiplications and computing redundant $\C$-entries.
We say that two nontrivial multiplications are redundant if their left operands are equal and their right operands are equal. 
For example, if $\A=\A^T$ and $\B=\B^T$, then any nontrivial multiplication $a_{lm}b_{ml}$ must equal $a_{ml}b_{lm}$.
More generally, if there exist $(i,k,j)$ and $(r,t,s)$ such that $a_{ik}=a_{rt}$ and $b_{kj}=b_{ts}$, then $a_{ik}b_{kj}=a_{rt}b_{ts}$ --- only one of these two multiplicationsies needs to be performed.
We say that two $\C$-entries are redundant if their summands can be matched as pairs of redundant nontrivial multiplications.   
For example, if all entries of $\A$ and $\B$ are equal to the same nonzero value, then all $\C$-entries are equal.  
More generally, if there exist $(i,j)$ and $(r,s)$ in $S_\C$ and a bijection
\[ \phi \colon \mathcal{K} = \{k : (\exists k) \, ((i,k) \in S_\A \wedge (k,j) \in S_\B)\} \rightarrow \{t : (\exists t)\, ((r,t)\in S_\A \wedge (t,s) \in S_\B)\}\] 
such that $a_{ik} = a_{r\phi(k)}$ and $b_{kj} = b_{\phi(k)s}$ for each $k \in \mathcal{K}$, then $c_{ij} = c_{rs}$, and only one of the two needs to be computed.  
We further alter the \SpGEMM{} hypergraph to exploit these savings as follows.
Let the notation $u \equiv v$ assert that $u,v \in \V$ are contained in the same part in some fixed partition of $\V$.
We extend the partition of $\V^\A \cup \V^\B$ to all of $\V$ in two steps, partitioning $\V^\mm$ and then $\V^\C$.
First, for each pair $v_{ikj},v_{rts} \in \V^\mm$, we specify that $v_{ikj} \equiv v_{rts}$ if $v^\A_{ik} \equiv v^\A_{rt}$ and $v^\B_{kj} \equiv v^\B_{ts}$.
Second, for each $v^\C_{ij},v^\C_{rs} \in \V^\C$, we specify that $v^\C_{ij} \equiv v^\C_{rs}$ if $n^\C_{ij}$ and $n^\C_{rs}$ intersect the same parts of $\V^\mm$.
We then coarsen $\V$ according to this partition: we follow the approach in \cref{sec:coarsening} except setting both the memory costs of the coarsened nonzero vertices and the computation costs of the coarsened multiplication vertices to 1, rather than to the numbers of coarsened nonzero and multiplication vertices. 
Now, it is possible that some nets are coalesced: coalesced nets can be combined without increasing net costs since only one nonzero needs to be stored/sent/received.

Lastly, we next extend the preceding class of \SpGEMM{}-like algorithms to exploit commutative multiplication, which is not guaranteed in our model (see \cref{sec:model}).
For example, if $\B=\A$, then $a_{lm} b_{ml} = a_{ml}b_{lm}$; or, if $\B=\A^T$, then $\C = \C^T$; neither of these redundancies necessarily occur without commutative multiplication. 
We model algorithms that avoid this larger class of redundant multiplications (and $\C$-entries) by augmenting the preceding construction in just one place: for each pair $v_{ikj},v_{rts} \in \V^\mm$, we additionally specify that $v_{ikj} \equiv v_{rts}$ if $v^\A_{ik} \equiv v^\B_{rt}$ and $v^\A_{kj} \equiv v^\B_{ts}$.

\subsubsection{Masked \SpGEMM{}}
\label{sec:output-sparsity}
We consider modifying \SpGEMM{} algorithms to only compute a subset of the entries of $\C=\A\cdot\B$; this is known as \emph{masked} \SpGEMM{} (see, e.g.,~\cite{ABG15}), where $S \subseteq S_\C$ indexes the desired subset of $\C$-entries and $S_\C \setminus S$ is called the \emph{mask}.
That is, only for each $(i,j) \in S$ are $c_{ij}$ and its associated nontrivial multiplications $a_{ik}b_{kj}$ computed.
We develop a hypergraph model starting with the usual \SpGEMM{} hypergraph, $(\V,\N)=\HG(\A,\B)$. 
We remove from $\N^\C$ each net $n^\C_{ij}$ where $(i,j) \not\in S$, as well as that net's elements (vertices) from $\V$.
In particular, for each $(i,j) \in S_\C \setminus S$, this removes from $\V^\C$ the vertex $v^\C_{ij}$ and from $\V^\mm$ each vertex $v_{ikj}$ with $k \in [K]$.

Depending on $S_\A$, $S_\B$, and $S$, it is possible that some entries of $\A$ and $\B$ are not involved in any nontrivial multiplications after masking.
Avoiding this triviality motivated our simplifying assumption in \cref{sec:model} that $\A$ and $\B$ have no zero rows or columns.
In terms of the hypergraph, this manifests as elements of $\N^\A$ and $\N^\B$ becoming singletons, containing only their associated nonzero vertices $v^\A_{ik}$ or $v^\B_{kj}$.
To model algorithms that reduce memory costs by not storing the associated nonzero entries, these nets and vertices can be removed from $\N^\A$ or $\N^\B$, and $\V^\A$ or $\V^\B$, respectively.

Next, we consider a generalization of masked \SpGEMM{}.
When the underlying set $X$ has a right multiplicative identity $1$, masking can be viewed as computing $(\A \cdot \B) \odot \M$, where $\odot$ denotes the Hadamard (or entrywise) matrix product and where the $\{0,1\}$-valued matrix $\M$ has the same dimensions as $\C=\A\cdot\B$ and $S_\M = S$.
To generalize, consider computing $\C = (\A \cdot \B) \odot \M$, where $\M$ is a general sparse matrix with the same dimensions as $\C$.
Let $(\V,\N)$ denote the hypergraph for masked $\SpGEMM$ with $S=S_\M$, as in the preceding paragraphs.
Suppose the algorithmic constraint that any processor who stores some $c_{ij}$ also must both store $m_{ij}$ and perform the  multiplication involving $m_{ij}$.
To model this, for each of the (unmasked) nonzero vertices $v^\C_{ij}$, we set its memory cost to 2 and its computation cost to 1.
It is possible to relax this algorithmic constraint by introducing additional multiplication vertices $v'_{ij}$ for the Hadamard product, additional nonzero vertices $v^\M_{ij}$ for the $\M$-entries, and additional nets $n^\M_{ij} = \{v'_{ij},v^\M_{ij},v^\C_{ik}\}$.

\section{Experimental Results}
\label{sec:expt}

We now compare our fine-grained hypergraph model from \cref{sec:fg-model} with the six restricted parallelizations developed in \cref{sec:restrict-par}, in the context of three applications: algebraic multigrid~(AMG, \cref{sec:AMG}), linear programming~(LP, \cref{sec:LP}), and Markov clustering~(MCL, \cref{sec:MC}).
A list of the \SpGEMMs{} that we study in our experiments is shown in \cref{tbl:matrixdims}. 
Our experiments consider only the distributed-memory parallel case, where we expect communication cost to be most closely correlated with execution time. 
Because hypergraph partitioning is NP-hard (in general), hypergraph partitioners use heuristics to approximate optimal partitions; results we present are not guaranteed to be optimal. 
However, it remains open whether partitioning \SpGEMM{} hypergraph instances are NP-hard.

The hypergraph partitioner we use is PaToH (version 3.2)~\cite{PaToH}.
%
PaToH minimizes the ``connectivity metric,'' defined as the sum, over all nets, of the product of each net's cost and its number of incident parts minus one.
The communication costs shown below represent the maximum, over all parts, of the sum of each part's non-internal incident nets' costs (matching \cref{lem:per-proc-comm}). 
This cost is most closely aligned with the parallel running time devoted to communicating words of data, called the ``critical-path bandwidth cost'' by Ballard et al.~\citeyear{BCDH+14}.

We do not constrain memory load balance (i.e., $\delta=p-1$) --- and so we omit $\V^\nz$ from the hypergraphs --- but we do constrain computation load balance, $\epsilon = 0.01$. In a few cases, our partitioner failed to produce such a balanced partition. We discuss these cases below.

\begin{table}%
    \tbl{%
        \label{tbl:matrixdims}%
        The \SpGEMMs{} that we studied in our experiments and their parameters. The model AMG problem is represented by the rows 27-AP and 27-PTAP, corresponding to the $A\cdot P$ and $P^T\cdot (AP)$ \SpGEMMs{}, respectively, and the SA-$\rho$AMGe problem is similarly represented by the rows SA-AP and SA-PTAP. The AMG experiments were carried out in a weak scaling regime and are represented in this table by the largest instances.  The LP experiments have one input matrix and compute $\C=\A\cdot\A^T$, and the MCL experiments have one input matrix and compute $\C=\A\cdot \A$.  In addition to the the dimensions of the \SpGEMM, we provide the average number of nonzeros per row of each matrix and the ratio of nontrivial multiplications to output nonzeros.}%
    {%
        \begin{tabular}{ll*{3}{R{7}}*{3}{R{4.1}}R{3.1}}
\toprule
& Name & $I$ & $K$ & $J$ & $|S_\A|/I$ & $|S_\B|/K$ & $|S_\C|/I$ & $|\V^\mm|/|S_\C|$ \\
\midrule
\vertseclabel{4}{\textbf{AMG}}{sec:AMG}
& 27-AP & 970299 & 970299 & 35937 & 26.4582 & 4.4633 & 12.0551 & 9.8744 \\
& 27-PTAP & 35937 & 970299 & 35937 & 4.4633 & 12.055 & 25.3965 & 49.0196 \\
& SA-AP & 1088640 & 1088640 & 31496 & 26.406 & 20.145 & 38.529 & 13.935 \\
& SA-PTAP & 31496 & 1088640 & 31496 & 696.309 & 38.529 & 216.359 & 139.333 \\
\midrule
\vertseclabel{5}{\textbf{LP}}{sec:LP}
& fome21 & 67748 & 216350 & 67748 & 6.868 & 2.151 & 9.45 & 1.622 \\
& pds80 & 129181 & 434580 & 129181 & 7.182 & 2.135 & 9.669 & 1.64 \\
& pds100 & 156243 & 514577 & 156243 & 7.015 & 2.13 & 9.413 & 1.639 \\
& cont11l & 1468599 & 1961394 & 1468599 & 3.665 & 2.744 & 12.3 & 1.487 \\
& sgpf5y6 & 246077 & 312540 & 246077 & 3.381 & 2.662 & 11.284 & 1.234 \\
\midrule
\vertseclabel{7}{\textbf{MCL}}{sec:MC}
& biogrid11 & 5853 & 5853 & 5853 & 21.452 & 21.452 & 2105.727 & 1.626 \\
& dip & 5051 & 5051 & 5051 & 8.749 & 8.749 & 200.859 & 1.588 \\
& wiphi & 5955 & 5955 & 5955 & 8.396 & 8.396 & 85.616 & 1.525 \\
& dblp & 425957 & 425957 & 425957 & 4.929 & 4.929 & 64.816 & 1.654 \\
& enron & 36692 & 36692 & 36692 & 10.02 & 10.02 & 831.03 & 1.689 \\
& facebook & 4039 & 4039 & 4039 & 43.691 & 43.691 & 717.129 & 6.493 \\
& roadnetca & 1971281 & 1971281 & 1971281 & 2.807 & 2.807 & 6.548 & 1.358 \\
\bottomrule
\end{tabular}
    }%
\end{table}

We performed our experiments on two machines at the National Energy Research Supercomputing Center in Berkeley, California. The experiments with the AMG model problem were carried out on a node (1 TB memory) of the machine Carver; all other experiments were carried out on a node (128 GB memory) of the machine Cori.

Building the fine-grained hypergraph and partitioning it is as expensive as performing the associated \SpGEMM, and furthermore our partitioner is sequential.
Although we are not focused on partitioning times in this study, fast partitioning is important for a user who seeks to find the best \SpGEMM{} strategy for a particular problem. 
In our experiments, the partitioning times varied from a few seconds up to 5 hours, while the time to build the hypergraph was negligible in comparison. We observed that the partitioning time varied with the dimensions of the matrices (smaller was faster), the hypergraph model (1D was fastest while fine-grained was slowest), and on whether the matrices have a regular nonzero structure (the 27-point stencil matrices in \cref{sec:AMG} were quicker to partition, while the social-network matrices in \cref{sec:MC} were slower).

\subsection{Application: Algebraic Multigrid}
\label{sec:AMG}

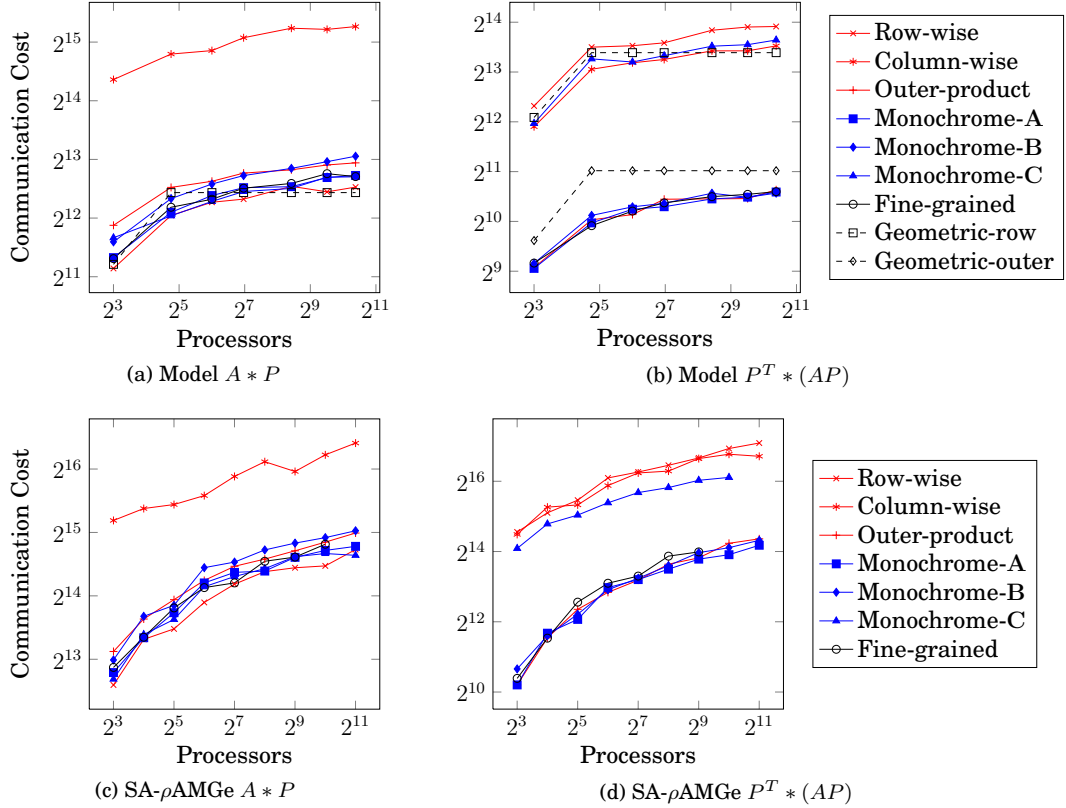
\begin{figure}
\centering
\renewcommand{\plotwidth}{2.65in}
\renewcommand{\plotheight}{2.65in}
\subfloat[][Model $A*P$]{
	\geomrowtrue
	\nolegendtrue
	\renewcommand{\datafile}{27ptAP.txt}
	\begin{tikzpicture}[scale=0.75]
\begin{axis}[
	label style={font=\large},
	tick label style={font=\large},
	legend style={at={(1.1,0.5)},anchor=west,font=\large},
	legend cell align=left,
	ylabel={\yaxistitle}, 
	xlabel={Processors},
        width=\plotwidth, 
        height=\plotheight,
        	xmode=log,
	log basis x ={2},
	ymode=log,
	log basis y={2},
]	
	\addplot[color=red,mark=x] table[x=P, y=row] {\datafile};
	\addlegendentry{Row-wise}
	\ifata
		;
	\else
		\addplot[color=red,mark=asterisk] table[x=P, y=col] {\datafile};
		\addlegendentry{Column-wise}
	\fi
	\addplot[color=red,mark=+] table[x=P, y=outer] {\datafile};
	\addlegendentry{Outer-product}
	\addplot[color=blue,mark=square*] table[x=P, y=monoa] {\datafile};
	\addlegendentry{Monochrome-$\A$}
	\ifata
		;
	\else
		\addplot[color=blue,mark=diamond*] table[x=P, y=monob] {\datafile};
		\addlegendentry{Monochrome-$\B$}
	\fi
	\addplot[color=blue,mark=triangle*] table[x=P, y=monoc] {\datafile};
	\addlegendentry{Monochrome-$\C$}
	\addplot[color=black,mark=o] table[x=P, y=fine] {\datafile};
	\addlegendentry{Fine-grained}
	\ifgeomrow
		\addplot[dashed,color=black,mark=square,every mark/.append style={solid}] table[x=P, y=geom-row] {\datafile};
		\addlegendentry{Geometric-row}
	\fi
	\ifgeomouter
		\addplot[dashed,color=black,mark=diamond,every mark/.append style={solid}] table[x=P, y=geom-outer] {\datafile};
		\addlegendentry{Geometric-outer}
	\fi

	\ifnolegend
		\legend{}
	\fi
\end{axis}
\end{tikzpicture}
	\label{fig:model-AP}
}
\subfloat[][Model $P^T*(AP)$]{
	\geomrowtrue
	\geomoutertrue
	\nolegendfalse
	\renewcommand{\yaxistitle}{}
	\renewcommand{\datafile}{27ptPTAP.txt}
	\begin{tikzpicture}[scale=0.75]
\begin{axis}[
	label style={font=\large},
	tick label style={font=\large},
	legend style={at={(1.1,0.5)},anchor=west,font=\large},
	legend cell align=left,
	ylabel={\yaxistitle}, 
	xlabel={Processors},
        width=\plotwidth, 
        height=\plotheight,
        	xmode=log,
	log basis x ={2},
	ymode=log,
	log basis y={2},
]	
	\addplot[color=red,mark=x] table[x=P, y=row] {\datafile};
	\addlegendentry{Row-wise}
	\ifata
		;
	\else
		\addplot[color=red,mark=asterisk] table[x=P, y=col] {\datafile};
		\addlegendentry{Column-wise}
	\fi
	\addplot[color=red,mark=+] table[x=P, y=outer] {\datafile};
	\addlegendentry{Outer-product}
	\addplot[color=blue,mark=square*] table[x=P, y=monoa] {\datafile};
	\addlegendentry{Monochrome-$\A$}
	\ifata
		;
	\else
		\addplot[color=blue,mark=diamond*] table[x=P, y=monob] {\datafile};
		\addlegendentry{Monochrome-$\B$}
	\fi
	\addplot[color=blue,mark=triangle*] table[x=P, y=monoc] {\datafile};
	\addlegendentry{Monochrome-$\C$}
	\addplot[color=black,mark=o] table[x=P, y=fine] {\datafile};
	\addlegendentry{Fine-grained}
	\ifgeomrow
		\addplot[dashed,color=black,mark=square,every mark/.append style={solid}] table[x=P, y=geom-row] {\datafile};
		\addlegendentry{Geometric-row}
	\fi
	\ifgeomouter
		\addplot[dashed,color=black,mark=diamond,every mark/.append style={solid}] table[x=P, y=geom-outer] {\datafile};
		\addlegendentry{Geometric-outer}
	\fi

	\ifnolegend
		\legend{}
	\fi
\end{axis}
\end{tikzpicture}
	\label{fig:model-PTAP}
} \\
\geomrowfalse
\geomouterfalse
\subfloat[][SA-$\rho$AMGe $A*P$]{
	\nolegendtrue
	\renewcommand{\datafile}{saamgeAP.txt}
	\begin{tikzpicture}[scale=0.75]
\begin{axis}[
	label style={font=\large},
	tick label style={font=\large},
	legend style={at={(1.1,0.5)},anchor=west,font=\large},
	legend cell align=left,
	ylabel={\yaxistitle}, 
	xlabel={Processors},
        width=\plotwidth, 
        height=\plotheight,
        	xmode=log,
	log basis x ={2},
	ymode=log,
	log basis y={2},
]	
	\addplot[color=red,mark=x] table[x=P, y=row] {\datafile};
	\addlegendentry{Row-wise}
	\ifata
		;
	\else
		\addplot[color=red,mark=asterisk] table[x=P, y=col] {\datafile};
		\addlegendentry{Column-wise}
	\fi
	\addplot[color=red,mark=+] table[x=P, y=outer] {\datafile};
	\addlegendentry{Outer-product}
	\addplot[color=blue,mark=square*] table[x=P, y=monoa] {\datafile};
	\addlegendentry{Monochrome-$\A$}
	\ifata
		;
	\else
		\addplot[color=blue,mark=diamond*] table[x=P, y=monob] {\datafile};
		\addlegendentry{Monochrome-$\B$}
	\fi
	\addplot[color=blue,mark=triangle*] table[x=P, y=monoc] {\datafile};
	\addlegendentry{Monochrome-$\C$}
	\addplot[color=black,mark=o] table[x=P, y=fine] {\datafile};
	\addlegendentry{Fine-grained}
	\ifgeomrow
		\addplot[dashed,color=black,mark=square,every mark/.append style={solid}] table[x=P, y=geom-row] {\datafile};
		\addlegendentry{Geometric-row}
	\fi
	\ifgeomouter
		\addplot[dashed,color=black,mark=diamond,every mark/.append style={solid}] table[x=P, y=geom-outer] {\datafile};
		\addlegendentry{Geometric-outer}
	\fi

	\ifnolegend
		\legend{}
	\fi
\end{axis}
\end{tikzpicture}
	\label{fig:saamge-AP}
}
\subfloat[][SA-$\rho$AMGe $P^T*(AP)$]{
	\nolegendfalse
	\renewcommand{\yaxistitle}{}
	\renewcommand{\datafile}{saamgePTAP.txt}
	\begin{tikzpicture}[scale=0.75]
\begin{axis}[
	label style={font=\large},
	tick label style={font=\large},
	legend style={at={(1.1,0.5)},anchor=west,font=\large},
	legend cell align=left,
	ylabel={\yaxistitle}, 
	xlabel={Processors},
        width=\plotwidth, 
        height=\plotheight,
        	xmode=log,
	log basis x ={2},
	ymode=log,
	log basis y={2},
]	
	\addplot[color=red,mark=x] table[x=P, y=row] {\datafile};
	\addlegendentry{Row-wise}
	\ifata
		;
	\else
		\addplot[color=red,mark=asterisk] table[x=P, y=col] {\datafile};
		\addlegendentry{Column-wise}
	\fi
	\addplot[color=red,mark=+] table[x=P, y=outer] {\datafile};
	\addlegendentry{Outer-product}
	\addplot[color=blue,mark=square*] table[x=P, y=monoa] {\datafile};
	\addlegendentry{Monochrome-$\A$}
	\ifata
		;
	\else
		\addplot[color=blue,mark=diamond*] table[x=P, y=monob] {\datafile};
		\addlegendentry{Monochrome-$\B$}
	\fi
	\addplot[color=blue,mark=triangle*] table[x=P, y=monoc] {\datafile};
	\addlegendentry{Monochrome-$\C$}
	\addplot[color=black,mark=o] table[x=P, y=fine] {\datafile};
	\addlegendentry{Fine-grained}
	\ifgeomrow
		\addplot[dashed,color=black,mark=square,every mark/.append style={solid}] table[x=P, y=geom-row] {\datafile};
		\addlegendentry{Geometric-row}
	\fi
	\ifgeomouter
		\addplot[dashed,color=black,mark=diamond,every mark/.append style={solid}] table[x=P, y=geom-outer] {\datafile};
		\addlegendentry{Geometric-outer}
	\fi

	\ifnolegend
		\legend{}
	\fi
\end{axis}
\end{tikzpicture}
	\label{fig:saamge-PTAP}
}
\renewcommand{\yaxistitle}{Communication Cost}
\caption{Communication costs of various parallelizations for performing the top-level triple product in the setup phase of algebraic multigrid (AMG --- \cref{sec:AMG}).}
\label{fig:amgplots}
\end{figure}

First, we consider the \SpGEMMs{} that arise in the context of algebraic multigrid (AMG). AMG methods are linear solvers typically applied to systems of linear equations arising from the discretization of partial differential equations (PDEs). AMG methods operate in two stages: setup and solve. The setup stage consists of forming a ``grid hierarchy,'' which is a sequence of matrices that represent the PDE at progressively coarser levels of discretization.

To build the grid hierarchy, we generate a sequence of ``prolongators'' $P_1,\dotsc, P_{\ell - 1}$, where $\ell$ is the number of levels in the hierarchy, an algorithm parameter. Prolongators are matrices with more rows than columns and the number of rows in each prolongator is equal to the number of columns in its predecessor. The grid hierarchy $A_1, \dotsc, A_\ell$ is then formed by successively computing
\begin{equation}
    A_2 = P_1^T A_1 P_1 \text{,}\qquad
    A_3 = P_2^T A_2 P_2 \text{,}\qquad
    \ldots\text{,}\qquad
    A_\ell = P_{\ell - 1}^T A_{\ell - 1} P_{\ell - 1} \text{,} \label{eq:PTAP}
\end{equation}
where $A_1$ is the fine-grained discretization of the PDE, an algorithm input.

Typically, AMG methods are designed such that all of the matrices in \cref{eq:PTAP} are sparse, so each triple product is  computed using two \SpGEMMs{}: one that forms $A_k \cdot P_k$, followed by another that forms $P_k^T \cdot (A_k P_k)$ for $k \in [\ell-1]$. In practice, the \SpGEMMs{} in \cref{eq:PTAP} can take a large part of the time spent in the setup phase, and the setup phase can take a large part of the total time. Considerable effort has been made recently to optimize these \SpGEMMs{} (see, e.g.,~\cite{BSH15-TR,PSYMD15}).

We consider the first level of the grid hierarchy for both a model problem and a realistic problem.
In the model problem, the rows of the input matrix $A_1$ correspond to points of an $N\times N \times N$ regular grid, and its nonzero structure corresponds to a 27-point stencil so that each point is adjacent to its (at most) 26 nearest neighbors.
The prolongator matrix $P_1$ is $N^3 \times (N/3)^3$ (we assume $N$ is divisible by 3) and is defined so that $3\times 3\times 3$ sub-grids correspond to single points in the coarser grid, and its values are computed using the technique of ``smoothed aggregation" (using damped Jacobi).
This model problem is one that has been previously studied~\cite{BDKS15,BSH15-TR}.
Because the grid is regular, we can compare our partitioning results with regular, geometric partitions.
For example, the natural partition of the rows of $A$ corresponds to assigning each processor a contiguous $(N/p^{1/3})\times (N/p^{1/3})\times (N/p^{1/3})$ subcube of points (assuming $N/p^{1/3}$ is an integer).


The second set of matrices comes from an oil-reservoir simulation called \textsc{spe10}~\cite{CB01}. In this problem, the Darcy equation, in dual form, is used to model fluid flow in porous media. 
The spatial domain is a rectangular prism, tessellated using a hexahedral mesh and discretized using linear finite elements~\cite{BKMVY15,CVV15}.
Brezina and Vassilevski~\citeyear{BV11} solved this problem using an AMG variant called SA-$\rho$AMGe, wherein $P_1$ has approximately $35\times$ more rows than columns (slightly more aggressive coarsening than the model problem) and is constructed using a polynomial smoother, giving more nonzeros.
For details of the implementation, see~\cite{K12,KKV13}.


We consider both \SpGEMMs{} for both problems, comparing the fine-grained model with all six restricted parallelizations from \cref{sec:restrict-par}.
In the case of the model problem, we also compare against 1D algorithms based on geometric partitions of the regular grid; these natural and efficient (but not necessarily optimal) instances can validate the quality of results obtained from the hypergraph partitioner.
For both problems, we present a weak-scaling study, maintaining $I/p$, where $I=K$ is the number of rows and columns of $A_1$.
The model problem ranges from dimension $18^3=5832$ on 8 processors up to dimension $99^3=970229$ on 1331 processors, and the SA-$\rho$AMGe problem ranges from dimension 4900 on 8 processors up to dimension 1088640 on 2048 processors.

In four cases involving the SA-$\rho$AMGe matrices, our partitioner failed with an out-of-memory error. The $P_1$ matrix in SA-$\rho$AMGe has more nonzero entries than in the model problem, and so the corresponding hypergraphs and their partitioning require more memory.
The \SpGEMMs{} that failed were the largest for the fine-grained model of $A\cdot P$, the largest for the monochrome-$\C$ model of $P^T \cdot (AP)$, and the two largest for the fine-grained model of $P^T \cdot (AP)$.

The results for the model problem are given in the top row of \cref{fig:amgplots}. 
We see in \cref{fig:model-AP} that for the first \SpGEMM, all parallelizations except for column-wise require about the same amount of communication (to within a factor of 2).
Note that the row-wise algorithm with geometric partitioning (labeled ``Geometric-row'') achieves the least communication on 1331 processors; this implies that PaToH is not finding an optimal partition at that scale (at least in the row-wise, monochrome-$\A$, monochrome-$\C$, or fine-grained cases), but we believe that the optimal partition is not much better than the geometric one.
We can also compare the relative costs discovered by hypergraph partitioning with Ballard et al.'s theoretical analysis~\citeyear[Table 2]{BSH15-TR} based on geometric partitioning.
Theoretically, for geometric partitions, the row-wise algorithm requires factors of 2.0 and 4.9 less communication than outer-product and column-wise algorithms, respectively; the ratios for partitions discovered by PaToH are 1.3 and 6.7 on 1331 processors.
We conclude from this data that a row-wise parallelization is the simplest parallelization that is nearly optimal for computing $A\cdot P$; this agrees with previous results~\cite{BDKS15,BSH15-TR}.

For the second \SpGEMM, we see that the outer-product, monochrome-$\A$, and monochrome-$\B$ parallelizations all perform as well as the fine-grained model (recall from \cref{fig:venn} that monochrome-$\A$ and -$\B$ parallelizations refine outer product parallelizations).
The other parallelizations all perform similarly to each other, and require about $10\times$ more communication than the fine-grained parallelization.
We note that the outer-product algorithm based on geometric partitioning (labeled ``Geometric-outer'') is (slightly) outperformed by the partitions discovered by PaToH.
For geometric partitions, the outer-product algorithm communicates a factor of $5\times$ less than the row-wise algorithm (matching previous theoretical analyses~\cite[Tab.~2]{BSH15-TR}); using hypergraph partitioning that difference increases to $7\times$ on 1331 processors.
We conclude that an outer-product parallelization is the simplest parallelization that is nearly optimal for computing $P^T\cdot(AP)$.
In our earlier paper~\cite{BDKS15}, for the second \SpGEMM{}, we compared Geometric-row with the fine-grained model and concluded that Geometric-row was suboptimal. 
The present experiments add the course-grained models and Geometric-outer and draw stronger conclusions.

The results for the SA-$\rho$AMGe problem are shown in the bottom row of \cref{fig:amgplots}.
For both \SpGEMMs, the results are qualitatively very similar to those of the model problem.
Because the mesh's geometry is not available within the AMG code, we do not have parallelizations based on geometric partitions with which to compare.
The conclusions are the same: the row-wise algorithm is nearly optimal for the case of $A \cdot P$, and the outer-product parallelization is nearly optimal for the case of $P^T \cdot (AP)$.

\subsection{Application: Linear Programming Normal Equations}
\label{sec:LP}

We next consider \SpGEMMs{} within a linear programming application, studied previously by Akbudak and Aykanat~\citeyear{AA14}, wherein a linear program is solved by an interior-point method.
During each iteration of this method, normal equations of the form $A\cdot D^2 \cdot A^T$ are constructed via \SpGEMM{}, and the resulting system is solved either directly (via Cholesky factorization) or iteratively (via a Krylov subspace method).
Boman, Parekh, and Phillips~\citeyear{BPP05} reported that these \SpGEMM{}s can dominate the overall runtime.
The matrix $A$ encodes the linear constraints and remains fixed throughout the iterations, while the positive diagonal matrix $D$ varies with each iteration.
Therefore, the \SpGEMM{}s always have the form $\A\cdot\B$, where $S_\B = S_{\A^T}$ and both $S_\A$ and $S_\B$ remain unchanged throughout the iterations.
This means there is potential to amortize the cost of forming and partitioning the \SpGEMM{} hypergraph.

Akbudak and Aykanat considered \SpGEMM{} algorithms similar to what we call outer-product (CRf), (CRr), and (CRc) --- see \cref{ex:outer}).

We consider the same five linear programming constraint matrices as Akbudak and Aykanat --- fome21, pds-80, pds-100, sgpf5y6, and cont11l --- all from the University of Florida collection~\cite{DH11}. In each case, the \SpGEMM{} dimensions $I=J < K$, i.e., the output matrix is smaller in dimension than the input matrices. 
In \cref{fig:lpplots}, we compare the fine-grained hypergraph model with the six restricted parallelizations from \cref{sec:restrict-par} for these five problems; note that since $S_\B = S_{\A^T}$, column-wise is equivalent to row-wise and monochrome-$\B$ is equivalent to monochrome-$\A$, so we omit these curves.
We perform a strong-scaling study, increasing the number of processors for fixed matrices.

In the first four problems (\cref{fig:fome21,fig:pds-80,fig:pds-100,fig:sgpf5y6}), the fine-grained, outer-product, and monochrome-$\A$ parallelizations were the most communication efficient, while the row-wise and monochrome-$\C$ parallelizations were the least.
The largest difference between row-wise and outer-product parallelizations was observed in the fourth problem~(\cref{fig:sgpf5y6}), with a communication cost ratio of $23\times$ (on 128 processors).
This trend is less clear in the fifth problem (\cref{fig:cont11l}), where all parallelizations' costs varied within a factor of two.
For all problems, we observed the differences between parallelization costs to be much less sensitive to processor count than in the following experiment (see \cref{fig:mcplots}).

Since monochrome-$\A$ includes outer-product and monochrome-$\C$ includes row-wise (in the sense of \cref{fig:venn}), these results suggest that 2D parallelizations don't provide much advantage over 1D parallelizations; on the other hand, it is important to use outer-product instead of row-wise. 
The observation in all cases that the outer-product curve closely tracks the fine-grained curve supports Akbudak and Aykanat's decision to consider only outer-product parallelizations, at least for these problems.

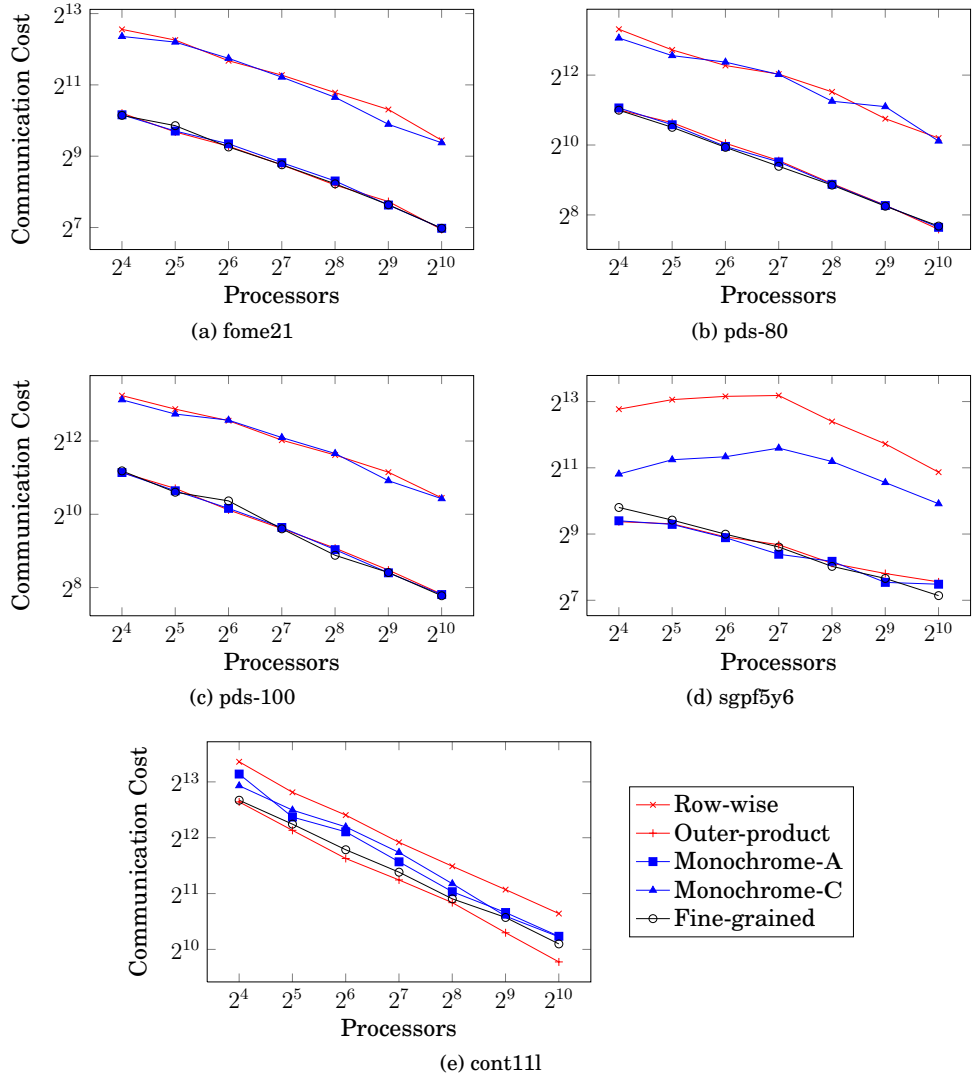
\begin{figure}
\renewcommand{\plotwidth}{3.3in}
\renewcommand{\plotheight}{2.3in}
\atatrue
\nolegendtrue
\centering
\subfloat[][fome21]{
	\renewcommand{\datafile}{fome21.txt}
	\begin{tikzpicture}[scale=0.75]
\begin{axis}[
	label style={font=\large},
	tick label style={font=\large},
	legend style={at={(1.1,0.5)},anchor=west,font=\large},
	legend cell align=left,
	ylabel={\yaxistitle}, 
	xlabel={Processors},
        width=\plotwidth, 
        height=\plotheight,
        	xmode=log,
	log basis x ={2},
	ymode=log,
	log basis y={2},
]	
	\addplot[color=red,mark=x] table[x=P, y=row] {\datafile};
	\addlegendentry{Row-wise}
	\ifata
		;
	\else
		\addplot[color=red,mark=asterisk] table[x=P, y=col] {\datafile};
		\addlegendentry{Column-wise}
	\fi
	\addplot[color=red,mark=+] table[x=P, y=outer] {\datafile};
	\addlegendentry{Outer-product}
	\addplot[color=blue,mark=square*] table[x=P, y=monoa] {\datafile};
	\addlegendentry{Monochrome-$\A$}
	\ifata
		;
	\else
		\addplot[color=blue,mark=diamond*] table[x=P, y=monob] {\datafile};
		\addlegendentry{Monochrome-$\B$}
	\fi
	\addplot[color=blue,mark=triangle*] table[x=P, y=monoc] {\datafile};
	\addlegendentry{Monochrome-$\C$}
	\addplot[color=black,mark=o] table[x=P, y=fine] {\datafile};
	\addlegendentry{Fine-grained}
	\ifgeomrow
		\addplot[dashed,color=black,mark=square,every mark/.append style={solid}] table[x=P, y=geom-row] {\datafile};
		\addlegendentry{Geometric-row}
	\fi
	\ifgeomouter
		\addplot[dashed,color=black,mark=diamond,every mark/.append style={solid}] table[x=P, y=geom-outer] {\datafile};
		\addlegendentry{Geometric-outer}
	\fi

	\ifnolegend
		\legend{}
	\fi
\end{axis}
\end{tikzpicture}
	\label{fig:fome21}
}
\subfloat[][pds-80]{
	\renewcommand{\yaxistitle}{}
	\renewcommand{\datafile}{pds80.txt}
	\begin{tikzpicture}[scale=0.75]
\begin{axis}[
	label style={font=\large},
	tick label style={font=\large},
	legend style={at={(1.1,0.5)},anchor=west,font=\large},
	legend cell align=left,
	ylabel={\yaxistitle}, 
	xlabel={Processors},
        width=\plotwidth, 
        height=\plotheight,
        	xmode=log,
	log basis x ={2},
	ymode=log,
	log basis y={2},
]	
	\addplot[color=red,mark=x] table[x=P, y=row] {\datafile};
	\addlegendentry{Row-wise}
	\ifata
		;
	\else
		\addplot[color=red,mark=asterisk] table[x=P, y=col] {\datafile};
		\addlegendentry{Column-wise}
	\fi
	\addplot[color=red,mark=+] table[x=P, y=outer] {\datafile};
	\addlegendentry{Outer-product}
	\addplot[color=blue,mark=square*] table[x=P, y=monoa] {\datafile};
	\addlegendentry{Monochrome-$\A$}
	\ifata
		;
	\else
		\addplot[color=blue,mark=diamond*] table[x=P, y=monob] {\datafile};
		\addlegendentry{Monochrome-$\B$}
	\fi
	\addplot[color=blue,mark=triangle*] table[x=P, y=monoc] {\datafile};
	\addlegendentry{Monochrome-$\C$}
	\addplot[color=black,mark=o] table[x=P, y=fine] {\datafile};
	\addlegendentry{Fine-grained}
	\ifgeomrow
		\addplot[dashed,color=black,mark=square,every mark/.append style={solid}] table[x=P, y=geom-row] {\datafile};
		\addlegendentry{Geometric-row}
	\fi
	\ifgeomouter
		\addplot[dashed,color=black,mark=diamond,every mark/.append style={solid}] table[x=P, y=geom-outer] {\datafile};
		\addlegendentry{Geometric-outer}
	\fi

	\ifnolegend
		\legend{}
	\fi
\end{axis}
\end{tikzpicture}
	\label{fig:pds-80}
} \\
\subfloat[][pds-100]{
	\renewcommand{\datafile}{pds100.txt}
	\begin{tikzpicture}[scale=0.75]
\begin{axis}[
	label style={font=\large},
	tick label style={font=\large},
	legend style={at={(1.1,0.5)},anchor=west,font=\large},
	legend cell align=left,
	ylabel={\yaxistitle}, 
	xlabel={Processors},
        width=\plotwidth, 
        height=\plotheight,
        	xmode=log,
	log basis x ={2},
	ymode=log,
	log basis y={2},
]	
	\addplot[color=red,mark=x] table[x=P, y=row] {\datafile};
	\addlegendentry{Row-wise}
	\ifata
		;
	\else
		\addplot[color=red,mark=asterisk] table[x=P, y=col] {\datafile};
		\addlegendentry{Column-wise}
	\fi
	\addplot[color=red,mark=+] table[x=P, y=outer] {\datafile};
	\addlegendentry{Outer-product}
	\addplot[color=blue,mark=square*] table[x=P, y=monoa] {\datafile};
	\addlegendentry{Monochrome-$\A$}
	\ifata
		;
	\else
		\addplot[color=blue,mark=diamond*] table[x=P, y=monob] {\datafile};
		\addlegendentry{Monochrome-$\B$}
	\fi
	\addplot[color=blue,mark=triangle*] table[x=P, y=monoc] {\datafile};
	\addlegendentry{Monochrome-$\C$}
	\addplot[color=black,mark=o] table[x=P, y=fine] {\datafile};
	\addlegendentry{Fine-grained}
	\ifgeomrow
		\addplot[dashed,color=black,mark=square,every mark/.append style={solid}] table[x=P, y=geom-row] {\datafile};
		\addlegendentry{Geometric-row}
	\fi
	\ifgeomouter
		\addplot[dashed,color=black,mark=diamond,every mark/.append style={solid}] table[x=P, y=geom-outer] {\datafile};
		\addlegendentry{Geometric-outer}
	\fi

	\ifnolegend
		\legend{}
	\fi
\end{axis}
\end{tikzpicture}
	\label{fig:pds-100}
} 
\subfloat[][sgpf5y6]{
	\renewcommand{\yaxistitle}{}
	\renewcommand{\datafile}{sgpf5y6.txt}
	\begin{tikzpicture}[scale=0.75]
\begin{axis}[
	label style={font=\large},
	tick label style={font=\large},
	legend style={at={(1.1,0.5)},anchor=west,font=\large},
	legend cell align=left,
	ylabel={\yaxistitle}, 
	xlabel={Processors},
        width=\plotwidth, 
        height=\plotheight,
        	xmode=log,
	log basis x ={2},
	ymode=log,
	log basis y={2},
]	
	\addplot[color=red,mark=x] table[x=P, y=row] {\datafile};
	\addlegendentry{Row-wise}
	\ifata
		;
	\else
		\addplot[color=red,mark=asterisk] table[x=P, y=col] {\datafile};
		\addlegendentry{Column-wise}
	\fi
	\addplot[color=red,mark=+] table[x=P, y=outer] {\datafile};
	\addlegendentry{Outer-product}
	\addplot[color=blue,mark=square*] table[x=P, y=monoa] {\datafile};
	\addlegendentry{Monochrome-$\A$}
	\ifata
		;
	\else
		\addplot[color=blue,mark=diamond*] table[x=P, y=monob] {\datafile};
		\addlegendentry{Monochrome-$\B$}
	\fi
	\addplot[color=blue,mark=triangle*] table[x=P, y=monoc] {\datafile};
	\addlegendentry{Monochrome-$\C$}
	\addplot[color=black,mark=o] table[x=P, y=fine] {\datafile};
	\addlegendentry{Fine-grained}
	\ifgeomrow
		\addplot[dashed,color=black,mark=square,every mark/.append style={solid}] table[x=P, y=geom-row] {\datafile};
		\addlegendentry{Geometric-row}
	\fi
	\ifgeomouter
		\addplot[dashed,color=black,mark=diamond,every mark/.append style={solid}] table[x=P, y=geom-outer] {\datafile};
		\addlegendentry{Geometric-outer}
	\fi

	\ifnolegend
		\legend{}
	\fi
\end{axis}
\end{tikzpicture}
	\label{fig:sgpf5y6}
} \\
\nolegendfalse
\subfloat[][cont11l]{
	\renewcommand{\datafile}{cont11l.txt}
	\begin{tikzpicture}[scale=0.75]
\begin{axis}[
	label style={font=\large},
	tick label style={font=\large},
	legend style={at={(1.1,0.5)},anchor=west,font=\large},
	legend cell align=left,
	ylabel={\yaxistitle}, 
	xlabel={Processors},
        width=\plotwidth, 
        height=\plotheight,
        	xmode=log,
	log basis x ={2},
	ymode=log,
	log basis y={2},
]	
	\addplot[color=red,mark=x] table[x=P, y=row] {\datafile};
	\addlegendentry{Row-wise}
	\ifata
		;
	\else
		\addplot[color=red,mark=asterisk] table[x=P, y=col] {\datafile};
		\addlegendentry{Column-wise}
	\fi
	\addplot[color=red,mark=+] table[x=P, y=outer] {\datafile};
	\addlegendentry{Outer-product}
	\addplot[color=blue,mark=square*] table[x=P, y=monoa] {\datafile};
	\addlegendentry{Monochrome-$\A$}
	\ifata
		;
	\else
		\addplot[color=blue,mark=diamond*] table[x=P, y=monob] {\datafile};
		\addlegendentry{Monochrome-$\B$}
	\fi
	\addplot[color=blue,mark=triangle*] table[x=P, y=monoc] {\datafile};
	\addlegendentry{Monochrome-$\C$}
	\addplot[color=black,mark=o] table[x=P, y=fine] {\datafile};
	\addlegendentry{Fine-grained}
	\ifgeomrow
		\addplot[dashed,color=black,mark=square,every mark/.append style={solid}] table[x=P, y=geom-row] {\datafile};
		\addlegendentry{Geometric-row}
	\fi
	\ifgeomouter
		\addplot[dashed,color=black,mark=diamond,every mark/.append style={solid}] table[x=P, y=geom-outer] {\datafile};
		\addlegendentry{Geometric-outer}
	\fi

	\ifnolegend
		\legend{}
	\fi
\end{axis}
\end{tikzpicture}
	\label{fig:cont11l}
}
\caption{Communication costs of various hypergraph models for constructing the coefficient matrix for the normal equations in the context of the interior point methods for Linear Programming (LP --- \cref{sec:LP}).}
\label{fig:lpplots}
\atafalse
\end{figure}

\subsection{Application: Markov Clustering}
\label{sec:MC}

Van Dongen's Markov Clustering Algorithm (MCL)~\citeyear{vanDongen00} is an iterative scheme for determining clusters in graphs.
The basic idea is to square the adjacency matrix of a graph, ``inflate'' and prune the result based on its values to maintain sparsity, and then iterate on the result.
The computational bottleneck for the algorithm is using \SpGEMM{} to square the sparse matrix, so in this section we explore squaring several different matrices coming from social-network and protein-interaction graphs, where Markov clustering is  commonly used.

We consider only the first iteration of MCL, squaring the original adjacency matrix, as a representative example of the iteration.
There have been several proposed variants of MCL, including (multi-level) regularized MCL~\cite{SP09,NLFPS14}, that perform slightly different \SpGEMM{}s.
We believe that the present experiments can help inform algorithmic choices for parallelizing any clustering algorithm applied to scale-free graphs that uses \SpGEMM{} as its computational workhorse.

Of the 11 non-synthetic data sets considered by Niu et al.~\citeyear{NLFPS14}, we present results for the 7 matrices that are publicly available and whose fine-grained hypergraphs fit in memory on our machine.
The matrices dblp, enron, and facebook are social network matrices, roadnetca is a graph representing roads and intersections, and dip, wiphi, and biogrid11 are protein-protein interaction matrices.
\Cref{fig:mcplots} presents the results.

\begin{figure}
\renewcommand{\plotwidth}{3.3in}
\renewcommand{\plotheight}{2.3in}
\atatrue
\nolegendtrue
\centering
\subfloat[][dblp]{
	\renewcommand{\datafile}{dblp.txt}
	\begin{tikzpicture}[scale=0.75]
\begin{axis}[
	label style={font=\large},
	tick label style={font=\large},
	legend style={at={(1.1,0.5)},anchor=west,font=\large},
	legend cell align=left,
	ylabel={\yaxistitle}, 
	xlabel={Processors},
        width=\plotwidth, 
        height=\plotheight,
        	xmode=log,
	log basis x ={2},
	ymode=log,
	log basis y={2},
]	
	\addplot[color=red,mark=x] table[x=P, y=row] {\datafile};
	\addlegendentry{Row-wise}
	\ifata
		;
	\else
		\addplot[color=red,mark=asterisk] table[x=P, y=col] {\datafile};
		\addlegendentry{Column-wise}
	\fi
	\addplot[color=red,mark=+] table[x=P, y=outer] {\datafile};
	\addlegendentry{Outer-product}
	\addplot[color=blue,mark=square*] table[x=P, y=monoa] {\datafile};
	\addlegendentry{Monochrome-$\A$}
	\ifata
		;
	\else
		\addplot[color=blue,mark=diamond*] table[x=P, y=monob] {\datafile};
		\addlegendentry{Monochrome-$\B$}
	\fi
	\addplot[color=blue,mark=triangle*] table[x=P, y=monoc] {\datafile};
	\addlegendentry{Monochrome-$\C$}
	\addplot[color=black,mark=o] table[x=P, y=fine] {\datafile};
	\addlegendentry{Fine-grained}
	\ifgeomrow
		\addplot[dashed,color=black,mark=square,every mark/.append style={solid}] table[x=P, y=geom-row] {\datafile};
		\addlegendentry{Geometric-row}
	\fi
	\ifgeomouter
		\addplot[dashed,color=black,mark=diamond,every mark/.append style={solid}] table[x=P, y=geom-outer] {\datafile};
		\addlegendentry{Geometric-outer}
	\fi

	\ifnolegend
		\legend{}
	\fi
\end{axis}
\end{tikzpicture}
	\label{fig:dblp}
}
\subfloat[][enron]{
	\renewcommand{\yaxistitle}{}
	\renewcommand{\datafile}{enron.txt}
	\begin{tikzpicture}[scale=0.75]
\begin{axis}[
	label style={font=\large},
	tick label style={font=\large},
	legend style={at={(1.1,0.5)},anchor=west,font=\large},
	legend cell align=left,
	ylabel={\yaxistitle}, 
	xlabel={Processors},
        width=\plotwidth, 
        height=\plotheight,
        	xmode=log,
	log basis x ={2},
	ymode=log,
	log basis y={2},
]	
	\addplot[color=red,mark=x] table[x=P, y=row] {\datafile};
	\addlegendentry{Row-wise}
	\ifata
		;
	\else
		\addplot[color=red,mark=asterisk] table[x=P, y=col] {\datafile};
		\addlegendentry{Column-wise}
	\fi
	\addplot[color=red,mark=+] table[x=P, y=outer] {\datafile};
	\addlegendentry{Outer-product}
	\addplot[color=blue,mark=square*] table[x=P, y=monoa] {\datafile};
	\addlegendentry{Monochrome-$\A$}
	\ifata
		;
	\else
		\addplot[color=blue,mark=diamond*] table[x=P, y=monob] {\datafile};
		\addlegendentry{Monochrome-$\B$}
	\fi
	\addplot[color=blue,mark=triangle*] table[x=P, y=monoc] {\datafile};
	\addlegendentry{Monochrome-$\C$}
	\addplot[color=black,mark=o] table[x=P, y=fine] {\datafile};
	\addlegendentry{Fine-grained}
	\ifgeomrow
		\addplot[dashed,color=black,mark=square,every mark/.append style={solid}] table[x=P, y=geom-row] {\datafile};
		\addlegendentry{Geometric-row}
	\fi
	\ifgeomouter
		\addplot[dashed,color=black,mark=diamond,every mark/.append style={solid}] table[x=P, y=geom-outer] {\datafile};
		\addlegendentry{Geometric-outer}
	\fi

	\ifnolegend
		\legend{}
	\fi
\end{axis}
\end{tikzpicture}
	\label{fig:enron}
} \\
\subfloat[][facebook]{
	\renewcommand{\datafile}{facebook.txt}
	\begin{tikzpicture}[scale=0.75]
\begin{axis}[
	label style={font=\large},
	tick label style={font=\large},
	legend style={at={(1.1,0.5)},anchor=west,font=\large},
	legend cell align=left,
	ylabel={\yaxistitle}, 
	xlabel={Processors},
        width=\plotwidth, 
        height=\plotheight,
        	xmode=log,
	log basis x ={2},
	ymode=log,
	log basis y={2},
]	
	\addplot[color=red,mark=x] table[x=P, y=row] {\datafile};
	\addlegendentry{Row-wise}
	\ifata
		;
	\else
		\addplot[color=red,mark=asterisk] table[x=P, y=col] {\datafile};
		\addlegendentry{Column-wise}
	\fi
	\addplot[color=red,mark=+] table[x=P, y=outer] {\datafile};
	\addlegendentry{Outer-product}
	\addplot[color=blue,mark=square*] table[x=P, y=monoa] {\datafile};
	\addlegendentry{Monochrome-$\A$}
	\ifata
		;
	\else
		\addplot[color=blue,mark=diamond*] table[x=P, y=monob] {\datafile};
		\addlegendentry{Monochrome-$\B$}
	\fi
	\addplot[color=blue,mark=triangle*] table[x=P, y=monoc] {\datafile};
	\addlegendentry{Monochrome-$\C$}
	\addplot[color=black,mark=o] table[x=P, y=fine] {\datafile};
	\addlegendentry{Fine-grained}
	\ifgeomrow
		\addplot[dashed,color=black,mark=square,every mark/.append style={solid}] table[x=P, y=geom-row] {\datafile};
		\addlegendentry{Geometric-row}
	\fi
	\ifgeomouter
		\addplot[dashed,color=black,mark=diamond,every mark/.append style={solid}] table[x=P, y=geom-outer] {\datafile};
		\addlegendentry{Geometric-outer}
	\fi

	\ifnolegend
		\legend{}
	\fi
\end{axis}
\end{tikzpicture}
	\label{fig:facebook}
} 
\subfloat[][roadnetca]{
	\renewcommand{\yaxistitle}{}
	\renewcommand{\datafile}{roadnetca.txt}
	\begin{tikzpicture}[scale=0.75]
\begin{axis}[
	label style={font=\large},
	tick label style={font=\large},
	legend style={at={(1.1,0.5)},anchor=west,font=\large},
	legend cell align=left,
	ylabel={\yaxistitle}, 
	xlabel={Processors},
        width=\plotwidth, 
        height=\plotheight,
        	xmode=log,
	log basis x ={2},
	ymode=log,
	log basis y={2},
]	
	\addplot[color=red,mark=x] table[x=P, y=row] {\datafile};
	\addlegendentry{Row-wise}
	\ifata
		;
	\else
		\addplot[color=red,mark=asterisk] table[x=P, y=col] {\datafile};
		\addlegendentry{Column-wise}
	\fi
	\addplot[color=red,mark=+] table[x=P, y=outer] {\datafile};
	\addlegendentry{Outer-product}
	\addplot[color=blue,mark=square*] table[x=P, y=monoa] {\datafile};
	\addlegendentry{Monochrome-$\A$}
	\ifata
		;
	\else
		\addplot[color=blue,mark=diamond*] table[x=P, y=monob] {\datafile};
		\addlegendentry{Monochrome-$\B$}
	\fi
	\addplot[color=blue,mark=triangle*] table[x=P, y=monoc] {\datafile};
	\addlegendentry{Monochrome-$\C$}
	\addplot[color=black,mark=o] table[x=P, y=fine] {\datafile};
	\addlegendentry{Fine-grained}
	\ifgeomrow
		\addplot[dashed,color=black,mark=square,every mark/.append style={solid}] table[x=P, y=geom-row] {\datafile};
		\addlegendentry{Geometric-row}
	\fi
	\ifgeomouter
		\addplot[dashed,color=black,mark=diamond,every mark/.append style={solid}] table[x=P, y=geom-outer] {\datafile};
		\addlegendentry{Geometric-outer}
	\fi

	\ifnolegend
		\legend{}
	\fi
\end{axis}
\end{tikzpicture}
	\label{fig:roadnetca}
} \\
\subfloat[][dip]{
	\renewcommand{\datafile}{dip.txt}
	\begin{tikzpicture}[scale=0.75]
\begin{axis}[
	label style={font=\large},
	tick label style={font=\large},
	legend style={at={(1.1,0.5)},anchor=west,font=\large},
	legend cell align=left,
	ylabel={\yaxistitle}, 
	xlabel={Processors},
        width=\plotwidth, 
        height=\plotheight,
        	xmode=log,
	log basis x ={2},
	ymode=log,
	log basis y={2},
]	
	\addplot[color=red,mark=x] table[x=P, y=row] {\datafile};
	\addlegendentry{Row-wise}
	\ifata
		;
	\else
		\addplot[color=red,mark=asterisk] table[x=P, y=col] {\datafile};
		\addlegendentry{Column-wise}
	\fi
	\addplot[color=red,mark=+] table[x=P, y=outer] {\datafile};
	\addlegendentry{Outer-product}
	\addplot[color=blue,mark=square*] table[x=P, y=monoa] {\datafile};
	\addlegendentry{Monochrome-$\A$}
	\ifata
		;
	\else
		\addplot[color=blue,mark=diamond*] table[x=P, y=monob] {\datafile};
		\addlegendentry{Monochrome-$\B$}
	\fi
	\addplot[color=blue,mark=triangle*] table[x=P, y=monoc] {\datafile};
	\addlegendentry{Monochrome-$\C$}
	\addplot[color=black,mark=o] table[x=P, y=fine] {\datafile};
	\addlegendentry{Fine-grained}
	\ifgeomrow
		\addplot[dashed,color=black,mark=square,every mark/.append style={solid}] table[x=P, y=geom-row] {\datafile};
		\addlegendentry{Geometric-row}
	\fi
	\ifgeomouter
		\addplot[dashed,color=black,mark=diamond,every mark/.append style={solid}] table[x=P, y=geom-outer] {\datafile};
		\addlegendentry{Geometric-outer}
	\fi

	\ifnolegend
		\legend{}
	\fi
\end{axis}
\end{tikzpicture}
	\label{fig:dip}
} 
\subfloat[][wiphi]{
	\renewcommand{\yaxistitle}{}
	\renewcommand{\datafile}{wiphi.txt}
	\begin{tikzpicture}[scale=0.75]
\begin{axis}[
	label style={font=\large},
	tick label style={font=\large},
	legend style={at={(1.1,0.5)},anchor=west,font=\large},
	legend cell align=left,
	ylabel={\yaxistitle}, 
	xlabel={Processors},
        width=\plotwidth, 
        height=\plotheight,
        	xmode=log,
	log basis x ={2},
	ymode=log,
	log basis y={2},
]	
	\addplot[color=red,mark=x] table[x=P, y=row] {\datafile};
	\addlegendentry{Row-wise}
	\ifata
		;
	\else
		\addplot[color=red,mark=asterisk] table[x=P, y=col] {\datafile};
		\addlegendentry{Column-wise}
	\fi
	\addplot[color=red,mark=+] table[x=P, y=outer] {\datafile};
	\addlegendentry{Outer-product}
	\addplot[color=blue,mark=square*] table[x=P, y=monoa] {\datafile};
	\addlegendentry{Monochrome-$\A$}
	\ifata
		;
	\else
		\addplot[color=blue,mark=diamond*] table[x=P, y=monob] {\datafile};
		\addlegendentry{Monochrome-$\B$}
	\fi
	\addplot[color=blue,mark=triangle*] table[x=P, y=monoc] {\datafile};
	\addlegendentry{Monochrome-$\C$}
	\addplot[color=black,mark=o] table[x=P, y=fine] {\datafile};
	\addlegendentry{Fine-grained}
	\ifgeomrow
		\addplot[dashed,color=black,mark=square,every mark/.append style={solid}] table[x=P, y=geom-row] {\datafile};
		\addlegendentry{Geometric-row}
	\fi
	\ifgeomouter
		\addplot[dashed,color=black,mark=diamond,every mark/.append style={solid}] table[x=P, y=geom-outer] {\datafile};
		\addlegendentry{Geometric-outer}
	\fi

	\ifnolegend
		\legend{}
	\fi
\end{axis}
\end{tikzpicture}
	\label{fig:wiphi}
} \\
\nolegendfalse
\subfloat[][biogrid11]{
	\renewcommand{\datafile}{biogrid11.txt}
	\begin{tikzpicture}[scale=0.75]
\begin{axis}[
	label style={font=\large},
	tick label style={font=\large},
	legend style={at={(1.1,0.5)},anchor=west,font=\large},
	legend cell align=left,
	ylabel={\yaxistitle}, 
	xlabel={Processors},
        width=\plotwidth, 
        height=\plotheight,
        	xmode=log,
	log basis x ={2},
	ymode=log,
	log basis y={2},
]	
	\addplot[color=red,mark=x] table[x=P, y=row] {\datafile};
	\addlegendentry{Row-wise}
	\ifata
		;
	\else
		\addplot[color=red,mark=asterisk] table[x=P, y=col] {\datafile};
		\addlegendentry{Column-wise}
	\fi
	\addplot[color=red,mark=+] table[x=P, y=outer] {\datafile};
	\addlegendentry{Outer-product}
	\addplot[color=blue,mark=square*] table[x=P, y=monoa] {\datafile};
	\addlegendentry{Monochrome-$\A$}
	\ifata
		;
	\else
		\addplot[color=blue,mark=diamond*] table[x=P, y=monob] {\datafile};
		\addlegendentry{Monochrome-$\B$}
	\fi
	\addplot[color=blue,mark=triangle*] table[x=P, y=monoc] {\datafile};
	\addlegendentry{Monochrome-$\C$}
	\addplot[color=black,mark=o] table[x=P, y=fine] {\datafile};
	\addlegendentry{Fine-grained}
	\ifgeomrow
		\addplot[dashed,color=black,mark=square,every mark/.append style={solid}] table[x=P, y=geom-row] {\datafile};
		\addlegendentry{Geometric-row}
	\fi
	\ifgeomouter
		\addplot[dashed,color=black,mark=diamond,every mark/.append style={solid}] table[x=P, y=geom-outer] {\datafile};
		\addlegendentry{Geometric-outer}
	\fi

	\ifnolegend
		\legend{}
	\fi
\end{axis}
\end{tikzpicture}
	\label{fig:biogrid11}
}
\caption{Communication costs of various hypergraph models for squaring symmetric matrices in the context of the Markov Clustering Algorithm (MCL --- \cref{sec:MC}).}
\label{fig:mcplots}
\atafalse
\end{figure}
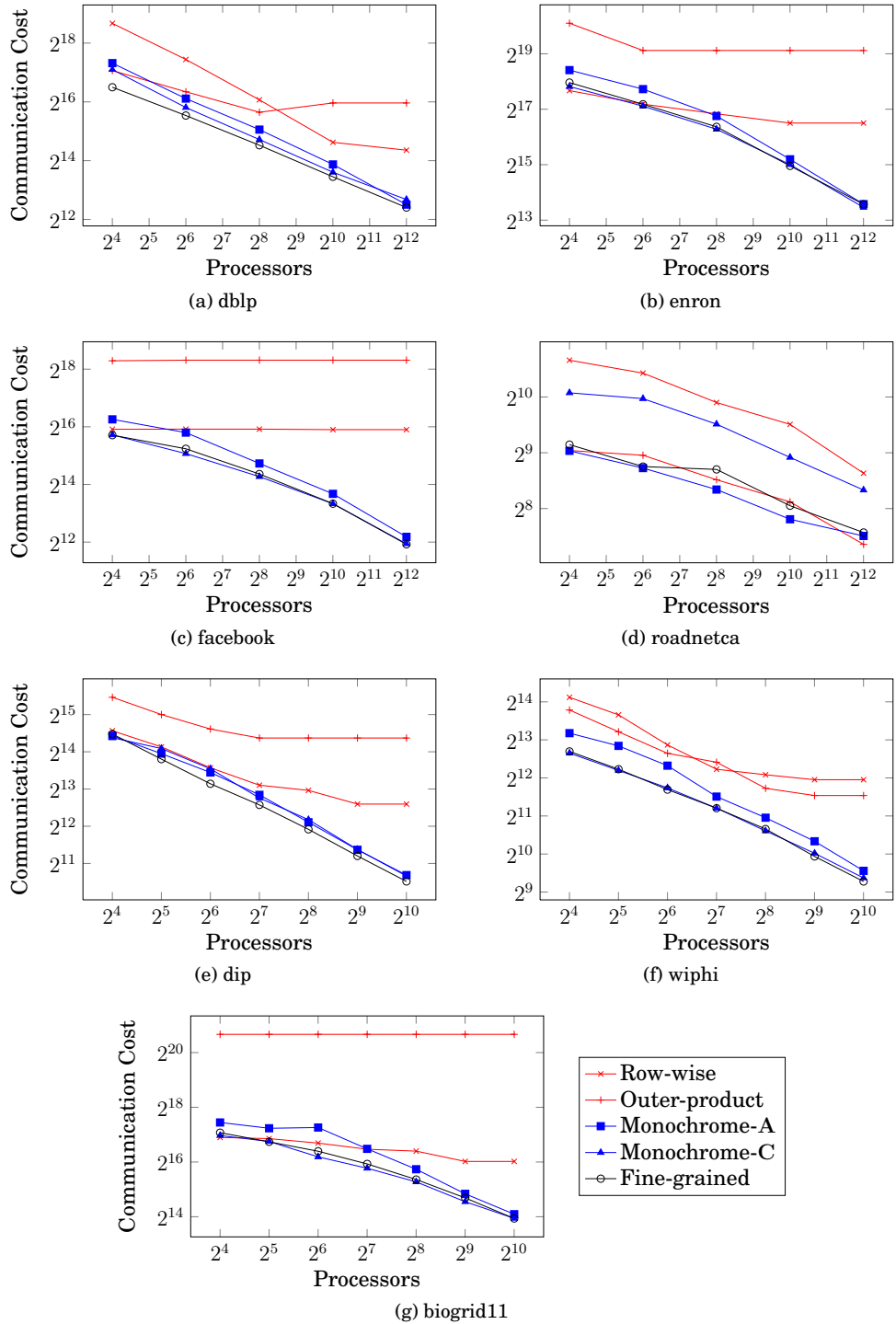

We consider the fine-grained hypergraph model as well as four of the restricted parallelizations: row-wise, outer-product, monochrome-$\A$, and monochrome-$\B$.
Because we're squaring symmetric matrices in this section, column-wise and monochrome-$\B$ parallelizations are equivalent.
Note that we do not exploit symmetry in these experiments.
As in \cref{sec:LP}, we perform a strong-scaling study.

The overall conclusion from \cref{fig:mcplots} is that 2D algorithms seem to require significantly less communication than 1D algorithms, particularly at high processor counts.
The largest difference was for the facebook matrix, where the ratio of communication costs between monochrome-$\C$ and outer-product was $83\times$ (on 4096 processors).
Furthermore, the downward trend of 2D and 3D plot lines show that per-processor communication costs decreased as the number of processors increased, indicating scalability.
The 1D plot lines are mostly horizontal, implying that 1D algorithms are not scalable for these problems. 
The partitions of the 1D models we obtained from the partitioner violated our load-balance constraint ($\epsilon = 0.01$) and were  increasingly imbalanced as we increased the number of parts. 
We attribute this to the presence of ``heavy'' vertices whose weights exceeded the maximum per-part weight prescribed by our constraint.
The exception to these trends is the roadnetca matrix, which is qualitatively different from the social network and protein-protein interaction matrices (we include it for completeness). 

We mention that Faisal, Parthasarathy, and Sadayappan~\citeyear{FPS14} parallelized regularized MCL within a graph framework.
Their MPI-based algorithm used a 1D (row-wise) parallelization, but the data from these experiments suggests that a 2D parallelization would likely perform better for these types of matrices.
This conclusion agrees with Boman, Devine, and Rajamanickam's premise~\citeyear{BDR13} in the context of SpMV: when $S_\A$ is a scale-free graph, 1D partitions of the \SpMV{} hypergraph are often sub-optimal.

\section{Conclusion}
\label{sec:conclusion}

We have proposed a framework for studying \SpGEMM{} algorithms, called the fine-grained hypergraph model, which captures communication and computation costs, as well as memory usage (see \cref{sec:model}).
We have defined hypergraph partitioning problems whose solutions yield lower bounds on communication costs on both parallel and sequential machines, and algorithms that attain these lower bounds within constant factors (see \cref{sec:LB}).
We have also shown how to simplify the fine-grained hypergraph model, applying vertex coarsening to focus on special classes of \SpGEMM{} algorithms such as 1D and 2D classes (see \cref{sec:simplify-HG}). 
We applied these theoretical tools to real-world \SpGEMM{} problems, experimentally confirming that hypergraph partitioning helps identify algorithms with reduced communication costs (see \cref{sec:expt}).

A key feature of our approach to \SpGEMM{} analysis is that it is sparsity dependent.
The fact that our communication lower bounds, parameterized by the nonzero structures of the input matrices, are tight suggests that a simpler parameterization --- in terms of numbers of nonzeros, number of arithmetic operations, etc.\ --- is unlikely. 
However, for specific classes of instances, simpler tight lower bounds are possible: e.g., in the case where the input matrices are both dense, tight lower bounds are parameterized by the matrix dimensions (see, e.g.,~\cite{BCDH+14}).

The class of \SpGEMM{} algorithms studied in this paper is quite general, and efficiently implementing a generic member of this class remains a practical challenge.
On the other hand, implementations of 1D and 2D algorithms are much simpler, and our experiments demonstrated that in some cases these simpler approaches suffice to minimize communication. 
For example, in the AMG and LP applications (\cref{sec:AMG,sec:LP}), we observed that some (but not all) of the 1D algorithms could minimize communication.
On the other hand, in the MCL application (\cref{sec:MC}), 2D algorithms minimized communication and 1D algorithms did not, assuming that the partition of the fine-grained model is optimal, or nearly so.

We have proposed hypergraph analysis as a practical tool for \SpGEMM{} algorithm design.
We do not advocate solving hypergraph partitioning problems on the fly: this cost could dominate that of the actual \SpGEMM{} computation.
Our approach would be most practical when the preprocessing overhead can be amortized over many \SpGEMM{}s, like when a sequence of matrices with the same (or similar) nonzero structure are to be multiplied (see, e.g.,~\cite{BPP05,KKV13}).

One direction for future work is refining our parallel communication cost model to capture latency cost (number of messages) in addition to bandwidth cost (number of words).
It is straightforward to derive a lower bound on latency cost by modifying \cref{lem:per-proc-comm} to count the number of adjacent parts instead of the number of adjacent nets.
On the other hand, the algorithm in \cref{lem:parUB} may not attain the latency lower bound, due to sending messages of containing single words. 

Another interesting extension to our model is considering parallel machines with shared memory.
A promising approach is to generalize our sequential model to have multiple processors, each with their own fast memory.
Of course, Hong and Kung's communication lower bound result, invoked in \cref{thm:HK}, must be replaced by a shared-memory extension, of which several have appeared in the literature. 

Another direction for future work is considering additional coarsenings of the fine-grained model.
In the case of \SpMV{}, ``regular'' partitions have been proposed to limit latency costs~\cite{VB05,BDR13}.
In the case of \SpGEMM{},  SpSUMMA~\cite{BG12} corresponds to a regular partition, a special case of monochrome-$\C$. 
%
%
Coarsening the fine-grained model and restricting to regular partitions yields simpler algorithms that may also exhibit reduced latency cost.

\begin{acks}
We thank Erik Boman and Siva Rajamanickam for helpful discussions, particularly their suggestion to augment the hypergraph model with nonzero vertices.

This research was supported in part by an appointment to the Sandia National Laboratories Truman Fellowship in National Security Science and Engineering, sponsored by Sandia Corporation (a wholly owned subsidiary of Lockheed Martin Corporation) as Operator of Sandia National Laboratories under its U.S.\ Department of Energy Contract No.\ DE-AC04-94AL85000.

Research is supported by grants 1878/14 and 1901/14 from the Israel Science Foundation (founded by the Israel Academy of Sciences and Humanities) and grant 3-10891 from the Ministry of Science and Technology, Israel. Research is also supported by the Einstein Foundation and the Minerva Foundation. 
This paper is supported by the Intel Collaborative Research Institute for Computational Intelligence (ICRI-CI). This research was supported by a grant from the United States-Israel Binational Science Foundation (BSF), Jerusalem, Israel.


This material is based upon work supported by the U.S.\ Department of Energy, Office of Science, Office of Advanced Scientific Computing Research.

This research used resources of the National Energy Research Scientific Computing Center, which is a DOE Office of Science User Facility.
\end{acks}

\bibliographystyle{ACM-Reference-Format-Journals}

\begin{thebibliography}{00}


\ifx \showCODEN    \undefined \def \showCODEN     #1{\unskip}     \fi
\ifx \showDOI      \undefined \def \showDOI       #1{{\tt DOI:}\penalty0{#1}\ }
  \fi
\ifx \showISBNx    \undefined \def \showISBNx     #1{\unskip}     \fi
\ifx \showISBNxiii \undefined \def \showISBNxiii  #1{\unskip}     \fi
\ifx \showISSN     \undefined \def \showISSN      #1{\unskip}     \fi
\ifx \showLCCN     \undefined \def \showLCCN      #1{\unskip}     \fi
\ifx \shownote     \undefined \def \shownote      #1{#1}          \fi
\ifx \showarticletitle \undefined \def \showarticletitle #1{#1}   \fi
\ifx \showURL      \undefined \def \showURL       #1{#1}          \fi

\bibitem[\protect\citeauthoryear{Akbudak and Aykanat}{Akbudak and
  Aykanat}{2014}]%
        {AA14}
{K. Akbudak} {and} {C. Aykanat}. 2014.
\newblock \showarticletitle{Simultaneous input and output matrix partitioning
  for outer-product--parallel sparse matrix-matrix multiplication}.
\newblock {\em SIAM Journal on Scientific Computing\/} {36}, 5 (2014),
  C568--C590.
\newblock
\showDOI{%
\url{http://dx.doi.org/10.1137/13092589X}}


\bibitem[\protect\citeauthoryear{Akbudak, Kayaaslan, and Aykanat}{Akbudak
  et~al\mbox{.}}{2013}]%
        {AKA13}
{K. Akbudak}, {E. Kayaaslan}, {and} {C. Aykanat}. 2013.
\newblock \showarticletitle{Hypergraph partitioning based models and methods
  for exploiting cache locality in sparse matrix-vector multiplication}.
\newblock {\em SIAM Journal on Scientific Computing\/} {35}, 3 (2013),
  C237--C262.
\newblock
\showDOI{%
\url{http://dx.doi.org/10.1137/100813956}}


\bibitem[\protect\citeauthoryear{Azad, Ballard, Buluc, Demmel, Grigori,
  Schwartz, Toledo, and Williams}{Azad et~al\mbox{.}}{2015a}]%
        {AB+15-TR}
{A. Azad}, {G. Ballard}, {A. Buluc}, {J. Demmel}, {L. Grigori}, {O. Schwartz},
  {S. Toledo}, {and} {S. Williams}. 2015a.
\newblock {\em Exploiting multiple levels of parallelism in sparse
  matrix-matrix multiplication}.
\newblock {T}echnical {R}eport 1510.00844. arXiv.
\newblock
\showURL{%
\url{http://arxiv.org/abs/1510.00844}}


\bibitem[\protect\citeauthoryear{Azad, Bulu\c{c}, and Gilbert}{Azad
  et~al\mbox{.}}{2015b}]%
        {ABG15}
{A. Azad}, {A. Bulu\c{c}}, {and} {J. Gilbert}. 2015b.
\newblock \showarticletitle{Parallel triangle counting and enumeration using
  matrix algebra}. In {\em Proceedings of the IPDPSW, Workshop on Graph
  Algorithm Building Blocks} {\em (GABB '15)}. 804 -- 811.
\newblock
\showDOI{%
\url{http://dx.doi.org/10.1109/IPDPSW.2015.75}}


\bibitem[\protect\citeauthoryear{Ballard, Bulu\c{c}, Demmel, Grigori, Lipshitz,
  Schwartz, and Toledo}{Ballard et~al\mbox{.}}{2013}]%
        {BBDG+13}
{G. Ballard}, {A. Bulu\c{c}}, {J. Demmel}, {L. Grigori}, {B. Lipshitz}, {O.
  Schwartz}, {and} {S. Toledo}. 2013.
\newblock \showarticletitle{Communication optimal parallel multiplication of
  sparse random matrices}. In {\em Proceedings of the 25th ACM Symposium on
  Parallelism in Algorithms and Architectures} {\em (SPAA '13)}. ACM, 222--231.
\newblock
\showISBNx{978-1-4503-1572-2}
\showDOI{%
\url{http://dx.doi.org/10.1145/2486159.2486196}}


\bibitem[\protect\citeauthoryear{Ballard, Carson, Demmel, Hoemmen, Knight, and
  Schwartz}{Ballard et~al\mbox{.}}{2014}]%
        {BCDH+14}
{G. Ballard}, {E. Carson}, {J. Demmel}, {M. Hoemmen}, {N. Knight}, {and} {O.
  Schwartz}. 2014.
\newblock \showarticletitle{Communication lower bounds and optimal algorithms
  for numerical linear algebra}.
\newblock {\em Acta Numerica\/}  {23} (May 2014), 1--155.
\newblock
\showISSN{1474-0508}
\showDOI{%
\url{http://dx.doi.org/10.1017/S0962492914000038}}


\bibitem[\protect\citeauthoryear{Ballard, Demmel, Holtz, Lipshitz, and
  Schwartz}{Ballard et~al\mbox{.}}{2012}]%
        {BDHLS12}
{G. Ballard}, {J. Demmel}, {O. Holtz}, {B. Lipshitz}, {and} {O. Schwartz}.
  2012.
\newblock \showarticletitle{Brief announcement: strong scaling of matrix
  multiplication algorithms and memory-independent communication lower bounds}.
  In {\em Proceedings of the 24th ACM Symposium on Parallelism in Algorithms
  and Architectures} {\em (SPAA '12)}. ACM, New York, NY, USA, 77--79.
\newblock
\showISBNx{978-1-4503-1213-4}
\showDOI{%
\url{http://dx.doi.org/10.1145/2312005.2312021}}


\bibitem[\protect\citeauthoryear{Ballard, Demmel, Holtz, and Schwartz}{Ballard
  et~al\mbox{.}}{2011}]%
        {BDHS11}
{G. Ballard}, {J. Demmel}, {O. Holtz}, {and} {O. Schwartz}. 2011.
\newblock \showarticletitle{Minimizing communication in numerical linear
  algebra}.
\newblock {\em {SIAM Journal on Matrix Analysis and Applications}\/} {32}, 3
  (September 2011), 866--901.
\newblock
\showDOI{%
\url{http://dx.doi.org/10.1137/090769156}}


\bibitem[\protect\citeauthoryear{Ballard, Druinsky, Knight, and
  Schwartz}{Ballard et~al\mbox{.}}{2015a}]%
        {BDKS15}
{G. Ballard}, {A. Druinsky}, {N. Knight}, {and} {O. Schwartz}. 2015a.
\newblock \showarticletitle{Brief announcement: hypergraph partitioning for
  parallel sparse matrix-matrix multiplication}. In {\em Proceedings of the
  27th ACM Symposium on Parallelism in Algorithms and Architectures} {\em (SPAA
  '15)}. ACM, New York, NY, USA, 86--88.
\newblock
\showISBNx{978-1-4503-3588-1}
\showDOI{%
\url{http://dx.doi.org/10.1145/2755573.2755613}}


\bibitem[\protect\citeauthoryear{Ballard, Siefert, and Hu}{Ballard
  et~al\mbox{.}}{2015b}]%
        {BSH15-TR}
{G. Ballard}, {C. Siefert}, {and} {J. Hu}. 2015b.
\newblock {\em Reducing communication costs for sparse matrix multiplication
  within algebraic multigrid}.
\newblock {T}echnical {R}eport SAND2015-3275. Sandia National Laboratories.
\newblock
\showURL{%
\url{http://prod.sandia.gov/techlib/access-control.cgi/2015/153275.pdf}}


\bibitem[\protect\citeauthoryear{Barker, Kalchev, Mishev, Vassilevski, and
  Yang}{Barker et~al\mbox{.}}{2015}]%
        {BKMVY15}
{A. Barker}, {D. Kalchev}, {I. Mishev}, {P. Vassilevski}, {and} {Y. Yang}.
  2015.
\newblock \showarticletitle{Accurate coarse-scale {AMG}-based finite volume
  reservoir simulations in highly heterogeneous media}. In {\em SPE Reservoir
  Simulation Symposium}. SPE--173277--MS.
\newblock
\showDOI{%
\url{http://dx.doi.org/10.2118/173277-MS}}


\bibitem[\protect\citeauthoryear{Boman, Devine, Fisk, Heaphy, Hendrickson,
  Vaughan, {\c C}ataly{\"u}rek, Bozdag, Mitchell, and Teresco}{Boman
  et~al\mbox{.}}{2007}]%
        {BDFHH07}
{E. Boman}, {K. Devine}, {L. Fisk}, {R. Heaphy}, {B. Hendrickson}, {C Vaughan},
  {{\"U}. {\c C}ataly{\"u}rek}, {D. Bozdag}, {W. Mitchell}, {and} {J. Teresco}.
  2007.
\newblock {\em Zoltan 3.0: Parallel Partitioning, Load Balancing, and
  Data-Management Services; User's Guide}.
\newblock {T}echnical {R}eport SAND2007-4748W. Sandia National Laboratories.
\newblock
\showURL{%
\url{http://www.cs.sandia.gov/Zoltan/ug_html/ug.html}}


\bibitem[\protect\citeauthoryear{Boman, Devine, and Rajamanickam}{Boman
  et~al\mbox{.}}{2013}]%
        {BDR13}
{E. Boman}, {K. Devine}, {and} {S. Rajamanickam}. 2013.
\newblock \showarticletitle{Scalable matrix computations on large scale-free
  graphs using {2D} graph partitioning}. In {\em Proceedings of the
  International Conference on High Performance Computing, Networking, Storage
  and Analysis} {\em (SC '13)}. ACM, New York, NY, USA, Article 50, 12 pages.
\newblock
\showISBNx{978-1-4503-2378-9}
\showDOI{%
\url{http://dx.doi.org/10.1145/2503210.2503293}}


\bibitem[\protect\citeauthoryear{Boman, Parekh, and Phillips}{Boman
  et~al\mbox{.}}{2005}]%
        {BPP05}
{E. Boman}, {O. Parekh}, {and} {C. Phillips}. 2005.
\newblock {\em {LDRD} final report on massively-parallel linear programming:
  the {parPCx} system}.
\newblock {T}echnical {R}eport SAND2004-6440. Sandia National Laboratories.
\newblock
\showURL{%
\url{http://prod.sandia.gov/techlib/access-control.cgi/2004/046440.pdf}}


\bibitem[\protect\citeauthoryear{Bor\v{s}tnik, VandeVondele, Weber, and
  Hutter}{Bor\v{s}tnik et~al\mbox{.}}{2014}]%
        {BVWH14}
{U. Bor\v{s}tnik}, {J. VandeVondele}, {V. Weber}, {and} {J. Hutter}. 2014.
\newblock \showarticletitle{Sparse matrix multiplication: the distributed
  block-compressed sparse row library}.
\newblock {\em {Parallel Computing}\/} {40}, 5--6 (2014), 47--58.
\newblock
\showISSN{0167-8191}
\showDOI{%
\url{http://dx.doi.org/10.1016/j.parco.2014.03.012}}


\bibitem[\protect\citeauthoryear{Brezina and Vassilevski}{Brezina and
  Vassilevski}{2011}]%
        {BV11}
{M. Brezina} {and} {P. Vassilevski}. 2011.
\newblock \showarticletitle{Smoothed aggregation spectral element agglomeration
  {AMG}: {SA-$\rho$AMGe}}. In {\em Proceedings of the 8th International
  Conference on Large-Scale Scientific Computing (LSSC)} {\em (Lecture Notes in
  Computer Science)}, {I.~Lirkov}, {S.~Margenov}, {and} {J.~Wa{\'s}niewski}
  (Eds.), Vol. 7116. Springer, 3--15.
\newblock
\showDOI{%
\url{http://dx.doi.org/10.1007/978-3-642-29843-1_1}}


\bibitem[\protect\citeauthoryear{Buluc and Gilbert}{Buluc and Gilbert}{2008}]%
        {BG08a}
{A. Buluc} {and} {J. Gilbert}. 2008.
\newblock \showarticletitle{On the representation and multiplication of
  hypersparse matrices}. In {\em IEEE International Symposium on Parallel and
  Distributed Processing} {\em (IPDPS '08)}. 1--11.
\newblock
\showISSN{1530-2075}
\showDOI{%
\url{http://dx.doi.org/10.1109/IPDPS.2008.4536313}}


\bibitem[\protect\citeauthoryear{Bulu{\c c} and Gilbert}{Bulu{\c c} and
  Gilbert}{2012}]%
        {BG12}
{A. Bulu{\c c}} {and} {J. Gilbert}. 2012.
\newblock \showarticletitle{Parallel sparse matrix-{\allowbreak}matrix
  multiplication and indexing: implementation and experiments}.
\newblock {\em SIAM Journal on Scientific Computing\/} {34}, 4 (2012),
  C170--C191.
\newblock
\showDOI{%
\url{http://dx.doi.org/10.1137/110848244}}


\bibitem[\protect\citeauthoryear{{\c C}ataly{\"u}rek and Aykanat}{{\c
  C}ataly{\"u}rek and Aykanat}{1999}]%
        {PaToH}
{{\"U}. {\c C}ataly{\"u}rek} {and} {C. Aykanat}. 1999.
\newblock {\em {PaToH}: Partitioning Tool for Hypergraphs}.
\newblock {T}echnical {R}eport. (Revised March 2011).
\newblock
\showURL{%
\url{http://bmi.osu.edu/umit/PaToH/manual.pdf}}


\bibitem[\protect\citeauthoryear{{\c C}ataly{\"u}rek and Aykanat}{{\c
  C}ataly{\"u}rek and Aykanat}{2001}]%
        {CA01a}
{{\" U}. {\c C}ataly{\"u}rek} {and} {C. Aykanat}. 2001.
\newblock \showarticletitle{A fine-grain hypergraph model for {2D}
  decomposition of sparse matrices}. In {\em Proceedings of the 15th
  International Parallel and Distributed Processing Symposium} {\em (IPDPS
  '01)}. 118--123.
\newblock
\showISBNx{0-7695-0990-8}
\showDOI{%
\url{http://dx.doi.org/10.1109/IPDPS.2001.925093}}


\bibitem[\protect\citeauthoryear{{\c C}ataly{\"u}rek, Aykanat, and
  U{\c{c}}ar}{{\c C}ataly{\"u}rek et~al\mbox{.}}{2010}]%
        {CAU10}
{{\"U}. {\c C}ataly{\"u}rek}, {C. Aykanat}, {and} {B. U{\c{c}}ar}. 2010.
\newblock \showarticletitle{On two-dimensional sparse matrix partitioning:
  models, methods, and a recipe}.
\newblock {\em SIAM Journal on Scientific Computing\/} {32}, 2 (2010),
  656--683.
\newblock
\showDOI{%
\url{http://dx.doi.org/10.1137/080737770}}


\bibitem[\protect\citeauthoryear{\c{C}ataly\"{u}rek and
  Aykanat}{\c{C}ataly\"{u}rek and Aykanat}{1999}]%
        {CA99}
{U. \c{C}ataly\"{u}rek} {and} {C. Aykanat}. 1999.
\newblock \showarticletitle{Hypergraph-partitioning-based decomposition for
  parallel sparse-matrix vector multiplication}.
\newblock {\em IEEE Transactions on Parallel and Distributed Systems\/} {10}, 7
  (Jul 1999), 673--693.
\newblock
\showISSN{1045-9219}
\showDOI{%
\url{http://dx.doi.org/10.1109/71.780863}}


\bibitem[\protect\citeauthoryear{Christensen, Villa, and
  Vassilevski}{Christensen et~al\mbox{.}}{2015}]%
        {CVV15}
{M. Christensen}, {U. Villa}, {and} {P. Vassilevski}. 2015.
\newblock \showarticletitle{Multilevel techniques lead to accurate numerical
  upscaling and scalable robust solvers for reservoir simulation}. In {\em SPE
  Reservoir Simulation Symposium}. SPE--173257--MS.
\newblock
\showDOI{%
\url{http://dx.doi.org/10.2118/173257-MS}}


\bibitem[\protect\citeauthoryear{Christie and Blunt}{Christie and
  Blunt}{2001}]%
        {CB01}
{M. Christie} {and} {M. Blunt}. 2001.
\newblock \showarticletitle{Tenth {SPE} comparative solution project: a
  comparison of upscaling techniques}.
\newblock {\em SPE Reservoir Evaluation \& Engineering\/} {4}, 04 (August
  2001), 308--317.
\newblock
\showDOI{%
\url{http://dx.doi.org/10.2118/72469-PA}}


\bibitem[\protect\citeauthoryear{Davis}{Davis}{2006}]%
        {Davis06}
{T. Davis}. 2006.
\newblock {\em Direct Methods for Sparse Linear Systems}.
\newblock Society for Industrial and Applied Mathematics.
\newblock
\showDOI{%
\url{http://dx.doi.org/10.1137/1.9780898718881}}


\bibitem[\protect\citeauthoryear{Davis and Hu}{Davis and Hu}{2011}]%
        {DH11}
{T. Davis} {and} {Y. Hu}. 2011.
\newblock \showarticletitle{The {University of Florida} sparse matrix
  collection}.
\newblock {\em {ACM Transactions on Mathematical Software}\/}  {38} (2011),
  1:1--1:25.
\newblock


\bibitem[\protect\citeauthoryear{Faisal, Parthasarathy, and Sadayappan}{Faisal
  et~al\mbox{.}}{2014}]%
        {FPS14}
{S. Faisal}, {S. Parthasarathy}, {and} {P. Sadayappan}. 2014.
\newblock \showarticletitle{Global graphs: a middleware for large scale graph
  processing}. In {\em 2014 IEEE International Conference on Big Data (Big
  Data)}. 33--40.
\newblock
\showDOI{%
\url{http://dx.doi.org/10.1109/BigData.2014.7004369}}


\bibitem[\protect\citeauthoryear{Greiner}{Greiner}{2012}]%
        {Greiner12}
{G. Greiner}. 2012.
\newblock {\em Sparse Matrix Computations and Their I/O Complexity}.
\newblock Dissertation. Technische Universit\"{a}t M\"{u}nchen, M\"{u}nchen.
\newblock
\showURL{%
\url{http://mediatum.ub.tum.de?id=1113167}}


\bibitem[\protect\citeauthoryear{Gustavson}{Gustavson}{1978}]%
        {Gustavson78}
{F. Gustavson}. 1978.
\newblock \showarticletitle{Two fast algorithms for sparse matrices:
  multiplication and permuted transposition}.
\newblock {\em {ACM Transactions on Mathematical Software}\/} {4}, 3 (Sept.
  1978), 250--269.
\newblock
\showISSN{0098-3500}
\showDOI{%
\url{http://dx.doi.org/10.1145/355791.355796}}


\bibitem[\protect\citeauthoryear{Hong and Kung}{Hong and Kung}{1981}]%
        {HK81}
{J. Hong} {and} {H. Kung}. 1981.
\newblock \showarticletitle{{I/O} complexity: the red-blue pebble game}. In
  {\em Proceedings of the Thirteenth Annual ACM Symposium on Theory of
  Computing} {\em (STOC '81)}. ACM, 326--333.
\newblock
\showDOI{%
\url{http://dx.doi.org/10.1145/800076.802486}}


\bibitem[\protect\citeauthoryear{Irony, Toledo, and Tiskin}{Irony
  et~al\mbox{.}}{2004}]%
        {ITT04}
{D. Irony}, {S. Toledo}, {and} {A. Tiskin}. 2004.
\newblock \showarticletitle{Communication lower bounds for distributed-memory
  matrix multiplication}.
\newblock {\em {Journal of Parallel and Distributed Computing}\/} {64}, 9
  (2004), 1017--1026.
\newblock
\showDOI{%
\url{http://dx.doi.org/10.1016/j.jpdc.2004.03.021}}


\bibitem[\protect\citeauthoryear{Kalchev}{Kalchev}{2012}]%
        {K12}
{D. Kalchev}. 2012.
\newblock {\em Adaptive Algebraic Multigrid for Finite Element Elliptic
  Equations with Random Coefficients}.
\newblock Master's\ thesis. Sofia University.
\newblock
\showURL{%
\url{https://e-reports-ext.llnl.gov/pdf/594392.pdf}}


\bibitem[\protect\citeauthoryear{Kalchev, Ketelsen, and Vassilevski}{Kalchev
  et~al\mbox{.}}{2013}]%
        {KKV13}
{D. Kalchev}, {C. Ketelsen}, {and} {P. Vassilevski}. 2013.
\newblock \showarticletitle{Two-level adaptive algebraic multigrid for a
  sequence of problems with slowly varying random coefficients}.
\newblock {\em SIAM Journal on Scientific Computing\/} {35}, 6 (2013),
  B1215--B1234.
\newblock
\showDOI{%
\url{http://dx.doi.org/10.1137/120895366}}


\bibitem[\protect\citeauthoryear{Krishnamoorthy, {\c C}ataly{\" u}rek,
  Nieplocha, Rountev, and Sadayappan}{Krishnamoorthy et~al\mbox{.}}{2006}]%
        {KCNRS06}
{S. Krishnamoorthy}, {{\" U}. {\c C}ataly{\" u}rek}, {J. Nieplocha}, {A.
  Rountev}, {and} {P. Sadayappan}. 2006.
\newblock \showarticletitle{Hypergraph partitioning for automatic memory
  hierarchy management}. In {\em Proceedings of the ACM/IEEE SC 2006
  Conference} {\em (SC '06)}. 34--46.
\newblock
\showDOI{%
\url{http://dx.doi.org/10.1109/SC.2006.36}}


\bibitem[\protect\citeauthoryear{Niu, Lai, Faisal, Parthasarathy, and
  Sadayappan}{Niu et~al\mbox{.}}{2014}]%
        {NLFPS14}
{Q. Niu}, {P.-W. Lai}, {S. Faisal}, {S. Parthasarathy}, {and} {P. Sadayappan}.
  2014.
\newblock \showarticletitle{A fast implementation of {MLR-MCL} algorithm on
  multi-core processors}. In {\em 21st International Conference on High
  Performance Computing} {\em (HiPC '14)}. 1--10.
\newblock
\showDOI{%
\url{http://dx.doi.org/10.1109/HiPC.2014.7116888}}


\bibitem[\protect\citeauthoryear{Pagh and St\"{o}ckel}{Pagh and
  St\"{o}ckel}{2014}]%
        {PS14}
{R. Pagh} {and} {M. St\"{o}ckel}. 2014.
\newblock \showarticletitle{The input/output somplexity of sparse matrix
  multiplication}.
\newblock In {\em Algorithms - ESA 2014}, {A.~Schulz} {and} {D.~Wagner} (Eds.).
  Lecture Notes in Computer Science, Vol. 8737. Springer Berlin Heidelberg,
  750--761.
\newblock
\showISBNx{978-3-662-44776-5}
\showDOI{%
\url{http://dx.doi.org/10.1007/978-3-662-44777-2_62}}


\bibitem[\protect\citeauthoryear{Park, Smelyanskiy, Meier~Yang, Mudigere, and
  Dubey}{Park et~al\mbox{.}}{2015}]%
        {PSYMD15}
{J. Park}, {M. Smelyanskiy}, {U. Meier~Yang}, {D. Mudigere}, {and} {P. Dubey}.
  2015.
\newblock \showarticletitle{High-performance algebraic multigrid solver
  optimized for multi-core based distributed parallel systems}. In {\em
  Proceedings of the International Conference on High Performance Computing,
  Networking, Storage and Analysis} {\em (SC '15)}. Article 54, 12 pages.
\newblock
\showDOI{%
\url{http://dx.doi.org/10.1145/2807591.2807603}}


\bibitem[\protect\citeauthoryear{Rabin and Vazirani}{Rabin and
  Vazirani}{1989}]%
        {RV89}
{M. Rabin} {and} {V. Vazirani}. 1989.
\newblock \showarticletitle{Maximum matchings in general graphs through
  randomization}.
\newblock {\em Journal of Algorithms\/} {10}, 4 (1989), 557 -- 567.
\newblock
\showISSN{0196-6774}
\showDOI{%
\url{http://dx.doi.org/10.1016/0196-6774(89)90005-9}}


\bibitem[\protect\citeauthoryear{Satuluri and Parthasarathy}{Satuluri and
  Parthasarathy}{2009}]%
        {SP09}
{V. Satuluri} {and} {S. Parthasarathy}. 2009.
\newblock \showarticletitle{Scalable graph clustering using stochastic flows:
  applications to community discovery}. In {\em Proceedings of the 15th ACM
  SIGKDD International Conference on Knowledge Discovery and Data Mining} {\em
  (KDD '09)}. ACM, New York, NY, USA, 737--746.
\newblock
\showISBNx{978-1-60558-495-9}
\showDOI{%
\url{http://dx.doi.org/10.1145/1557019.1557101}}


\bibitem[\protect\citeauthoryear{van Dongen}{van Dongen}{2000}]%
        {vanDongen00}
{S. van Dongen}. 2000.
\newblock {\em Graph Clustering by Flow Simulation}.
\newblock Ph.D. Dissertation. University of Utrecht.
\newblock
\showURL{%
\url{http://www.library.uu.nl/digiarchief/dip/diss/1895620/full.pdf}}


\bibitem[\protect\citeauthoryear{Vastenhouw and Bisseling}{Vastenhouw and
  Bisseling}{2005}]%
        {VB05}
{B. Vastenhouw} {and} {R. Bisseling}. 2005.
\newblock \showarticletitle{A two-dimensional data distribution method for
  parallel sparse matrix-vector multiplication}.
\newblock {\em {SIAM Review}\/} {47}, 1 (2005), 67--95.
\newblock
\showDOI{%
\url{http://dx.doi.org/10.1137/S0036144502409019}}


\bibitem[\protect\citeauthoryear{Yamazaki and Li}{Yamazaki and Li}{2011}]%
        {YL11}
{I. Yamazaki} {and} {X. Li}. 2011.
\newblock \showarticletitle{On techniques to improve robustness and scalability
  of a parallel hybrid linear solver}. In {\em 9th International Conference on
  High Performance Computing for Computational Science} {\em (VECPAR '10)},
  {Jos{\'e} M. Laginha~M. Palma}, {Michel Dayd{\'e}}, {Osni Marques}, {and}
  {Jo{\~a}o~Correia Lopes} (Eds.). Springer Berlin Heidelberg, 421--434.
\newblock
\showISBNx{978-3-642-19328-6}
\showDOI{%
\url{http://dx.doi.org/10.1007/978-3-642-19328-6_38}}


\end{thebibliography}

\end{document}